\documentclass[11pt]{llncs}
\pagestyle{plain}

\usepackage{latexsym,amssymb,amsmath,amstext,amsfonts,owna4}
\usepackage{enumitem,algorithm,algcompatible}

\spnewtheorem{postulate}{Postulate}{\bfseries}{\rmfamily}
\spnewtheorem{fact}{Fact}{\bfseries}{\itshape}

\ifthenelse{\pdfoutput = 0}
{ \message{We are running LaTeX.} 
  \usepackage[dvips]{graphicx}
  \DeclareGraphicsExtensions{.eps}
  \usepackage[hypertex,bookmarks=false]{hyperref}}
{ \message{We are running PDFLaTeX.}
  \usepackage[pdftex]{color}
  \usepackage[pdftex]{graphicx}
  \DeclareGraphicsExtensions{.pdf}}

\title{Behavioural Theory of Reflective Algorithms I: Reflective Sequential Algorithms\thanks{The research reported in this paper has been partially supported by the {\bf Austrian Science Fund (FWF: [I2420-N31])} for the project \emph{Higher-Order Logics and Structure}.}\thanks{Under review for publication in Theoretical Computer Science.}}

\author{Klaus-Dieter Schewe\inst{1}, Flavio Ferrarotti\inst{2}}

\institute{Zhejiang University, UIUC Institute, Haining, China, \email{kd.schewe@intl.zju.edu.cn, kdschewe@acm.org}
\and
Software Competence Center Hagenberg, agenberg, Austria, \email{flavio.ferrarotti@gmail.com}}

\begin{document}

\maketitle

\begin{abstract}
We develop a behavioural theory of reflective sequential algorithms (RSAs), i.e. sequential algorithms that can modify their own behaviour. The theory comprises a set of language-independent postulates defining the class of RSAs, an abstract machine model, and the proof that all RSAs are captured by this machine model. As in Gurevich's behavioural theory for sequential algorithms RSAs are sequential-time, bounded parallel algorithms, where the bound depends on the algorithm only and not on the input. Different from the class of sequential algorithms every state of an RSA includes a representation of the algorithm in that state, thus enabling linguistic reflection. Bounded exploration is preserved using terms as values. The model of reflective sequential abstract state machines (rsASMs) extends sequential ASMs using extended states that include an updatable representation of the main ASM rule to be executed by the machine in that state. Updates to the representation of ASM signatures and rules are realised by means of a sophisticated tree algebra.

\textbf{Keywords.} adaptivity; abstract state machine; linguistic reflection; behavioural theory; tree algebra

\end{abstract}

\setcounter{footnote}{0}

\section{Introduction}

Adaptivity (or self-adaptation) refers to the ability of a system to change its own behaviour. In the context of programming this concept, known under the term {\em linguistic reflection}, appears already in LISP~\cite{smith:popl1984}, where programs and data are both represented uniformly as lists, and thus programs represented as data can be executed dynamically by means of an evaluation operator. Run-time and compile-time linguistic reflection in programming and database research have been investigated in general by Stemple, Van den Bussche and others in \cite{stemple:2000,bussche:jcss1996}. Recently, adaptivity has attracted again a lot of interest in research, in particular in connection with systems of (cyber-physical) systems \cite{riccobene:abz2014}.

However, this raises the question of the theoretical foundations of adaptive systems. In this article we make a first step in this direction by means of a behavioural theory of reflective sequential algorithms. A {\em behavioural theory} in general comprises an axiomatic definition of a class of algorithms or systems by means of a set of characterising postulates, and an abstract machine model together with the proof that the abstract machine model captures the given class of algorithms or systems. The proof comprises two parts, one showing that every instance of the abstract machine model satisfies the postulates ({\em plausibility}), and another one showing that all algorithms stipulated by the postulates can be step-by-step simulated by an abstract machine model instance ({\em characterisation}).

\subsection{Our Contribution}

In this article we investigate a behavioural theory for reflective sequential algorithms (RSAs) extending and cleansing our previous sketch in \cite{ferrarotti:psi2017}\footnote{Actually, there is even a gap in the main proof in \cite{ferrarotti:psi2017}, as the final statements concerning the ``self representation'' of a sequential algorithms are not at all obvious, but require some sophisticated arguments.}. Our contributions are threefold: 

\begin{enumerate}

\item We provide an axiomatic, language-independent definition of RSAs.

\item We define an extension of sequential ASMs to reflective sequential ASMs (rsASMs), by means of which RSAs can be specified.

\item We prove that RSAs are captured by rsASMs, i.e. rsASMs satisfy the postulates of our axiomatisation, and any RSA as stipulated by the axiomatisation can be defined by a behaviourally equivalent rsASM\footnote{The notion of {\em behavioural equivalence} will be slightly weaker than the corresponding notion for sequential algorithms, as there is no need to require that changes to the represented algorithm are exactly the same. Instead it suffices to postulate that the represented algorithm will produce essentially the same updates disregarding those that are irrelevant.}.

\end{enumerate}

Concerning the axiomatisation an RSA should proceed in sequential time, i.e. we should have states, initial states and a state transition function same as for sequential algorithms. Concerning the notion of state we argue on grounds of the behavioural theory of sequential algorithms that we can always assume a finite representation of a sequential algorithm, no matter how abstract this representation is. Consequently, states of an RSA can be defined by universal algebras (aka Tarski structures). However, the difference to the abstract state postulate for sequential algorithms is that the signature over which states are formed comprises a standard subsignature used to represent the state that is manipulated by the algorithm and a subsignature used for the representation of the algorithm itself. Furthermore, the update set in a state is defined by the algorithm represented in the state. 

While the background operations that are needed to access the represented algorithm can be left abstract it is nonetheless essential to postulate that terms over the standard subsignature are used in this representation. This implies that these terms have a dual nature: they can be interpreted in every state to give values of a base set as required in a Tarski structure, but they are also used as values themselves thus defining extended base sets. If a term is treated as a value, then its interpretation is the term itself, so we need a function \textit{raise\/} that turns a term regarded as value in an extended base set into an interpretable term. As common in linguistic reflection this function has an inverse called \textit{drop\/}. The existence of \textit{raise\/} and \textit{drop\/} and the possibility to create new function symbols in the active signature taking them from a \textit{reserve\/} constitute reasonable requirements for the background structure of an RSA. 

The most tricky aspect of the axiomatisation is the generalisation of bounded exploration. Naturally, as the represented algorithm may change in each step and this may include increasing the standard subsignature, a fixed set of ground terms that determines update sets in every state as in the bounded exploration postulate for sequential algorithms cannot exist. However, as all means of an algorithm to change itself must appear somehow in the algorithm's description, we argue that there is still a bounded exploration witness, i.e. a set of ground terms that determines the update sets yielded in a state. However, a double interpretation will be required: the first interpretation may result in terms over the standard subsignature, which then can again be interpreted to define the values needed in the updates. Phrased differently, the first interpretation can be seen as resulting in a bounded exploration witness for the represented algorithm.

Concerning our second contribution, the definition of rsASMs, this is much more straightforward, as only a concrete representation of a sequential ASM is required\footnote{In this article we choose a self-representation by means of trees similar to syntax trees exploited in parsing theory and compiler construction. Nonetheless, there are many other suitable ways for such a self representation, which would also lead to an abstract machine model satisfying the postulates. For the behavioural theory only the existence of an abstract machine model capturing the class of reflective sequential algorithms is important, while for applications that exploit the machine model for concrete specifications it is left open, which self representation will be more handy.}. In each step the rule is taken from a dedicated location \textit{self\/}, which uses a tree structure to represent the signature and rule, and a sophisticated tree algebra to manipulate tree values \cite{schewe:jucs2010}. We also exploit partial updates in the form of \cite{schewe:ejc2011} to minimise clashes that may otherwise result from simultaneously updating \textit{self\/} by several parallel branches.

Concerning the plausibility and characterisation proofs the former one requires a rather straightforward construction of a bounded exploration witness from an rsASM, for which the representation using  \textit{self\/} is essential, while the latter one will be accomplished by a sequence of lemmata, the key problem being that there is a theoretically unbounded number of different algorithms that nonetheless have to be handled uniformly.

\subsection{Organisation of the Article}

The remainder of this article is organised as follows. In Section \ref{sec:asm} we summarise preliminaries, first giving a brief summary of sequential ASMs \cite{boerger:2003}. Then we introduce a tree algebra and partial updates, by means of which bulk tree data structures can be handled in ASMs. In Section \ref{sec:rasm} we introduce rsASMs, which are based on sequential ASMs with the differences discussed above and a background structure capturing tree structures and tree algebra operations. We use rsASMs to provide examples for reflective sequential algorithms. Section \ref{sec:rsa} is then dedicated to the first part of the behavioural theory, i.e. the axiomatic definition of RSAs. As discussed above the key problems concern the self-representation and the extension of bounded exploration. We also show that rsASMs satisfy our postulates, thus they define RSAs. In Section \ref{sec:theory} we approach the more difficult part of the theory, i.e. the proof that every RSA as stipulated by the postulates can be modelled by a behaviourally equivalent rsASM. In Section \ref{sec:others} we embed our work into other behavioural theories, and we conclude with a summary and outlook in Section \ref{sec:schluss}, in particular with respect to extending the theory to cover unbounded parallelism and concurrency as sketched in \cite{schewe:acsw2017}.

\section{Preliminaries}\label{sec:asm}\label{sec:trees}

In this section we first recall some basic definitions about sequential ASMs. Then we present tree structures and the tree algebra adopted from \cite{schewe:jucs2010} to manipulate trees. These will be exploited in the next section for the self representation of sequential algorithms. We further extend the rules of ASMs slightly using partial updates as in \cite{schewe:ejc2011}, which are particularly useful when dealing with bulk structures such as trees.

\subsection{Sequential ASMs}

Fix a signature $\Sigma$, i.e. a finite set of function symbols, such that each $f \in \Sigma$ has an arity $ar(f)$. A {\em state} $S$ over $\Sigma$ is given by a base set $B$ and an interpretation of the function symbols in $\Sigma$ by functions $f_S : B^n \rightarrow B$ for $n=ar(f)$. Using $\Sigma$ we can define {\em (ground) terms} in the usual way. Then we obtain an evaluation function, which defines for each term $t \in \mathbb{T}$ its value $\text{val}_S(t)$ in a state $S$. A sequential ASM {\em rule} over $\Sigma$ is defined as follows:

\begin{description}

\item[assignment.] Whenever $t_i$ ($i=0,\dots,n$) are terms over $\Sigma$ and $f \in \Sigma$ has arity $n$, then $f(t_1,\dots,t_n) := t_0$ is a rule.

\item[branching.] If $r_+$ and $r_-$ are rules and $\varphi$ is a Boolean term, then also \texttt{IF} $\varphi$ \texttt{THEN} $r_+$ \texttt{ELSE} $r_-$ \texttt{ENDIF} is a rule.

\item[bounded parallel composition.] If $r_1, \dots, r_k$ are rules, then also \texttt{PAR} $r_1 \dots r_k$ \texttt{ENDPAR} is a rule.

\item[let.] If $r$ is a rule, $x$ a variable and $t$ a term, then \texttt{LET} $x = t$ \texttt{IN} $r$ is a rule.

\end{description}

The let rules are just a syntactic construct that eases specifications, but technically they can be omitted. In particular, the proof in \cite{gurevich:tocl2000} shows that let rules can be dispensed with.

Each rule can be interpreted in a state, and doing so yields an update set. In general, a {\em location} is a pair $\ell = (f,(a_1,\dots,a_n))$ with a function symbol $f \in \Sigma$ of arity $n$ and an $n$-tuple of values from the base set $B$. An {\em update} is a pair $(\ell,a_0)$ with a value $a_0 \in B$. In a state $S$ the update set $\Delta_r(S)$ defined by a rule $r$ is yielded as follows:

\begin{itemize}

\item If $r$ is an assignment rule $f(t_1,\dots,t_n) := t_0$, then
\[ \Delta_r(S) = \{ ((f,(\text{val}_S(t_1) ,\dots, \text{val}_S(t_n))), \text{val}_S(t_0)) \} . \]

\item If $r$ is a branching rule \texttt{IF} $\varphi$ \texttt{THEN} $r_+$ \texttt{ELSE} $r_-$ \texttt{ENDIF}, then
\[ \Delta_r(S) = \begin{cases} 
\Delta_{r_+}(S) &\text{if}\; \text{val}_S(\varphi) = \textbf{true} \\
\Delta_{r_-}(S) &\text{if}\; \text{val}_S(\varphi) = \textbf{false} \end{cases}
\]

\item If $r$ is a parallel rule \texttt{PAR} $r_1 \dots r_k$ \texttt{ENDPAR}, then $\Delta_r(S) = \bigcup_{i=1}^k \Delta_{r_i}(S)$.

\item If $r$ is a let rule \texttt{LET} $x = t$ \texttt{IN} $r^\prime$, then substituting $t$ for $x$ in $r^\prime$ defines $\Delta_r(S) = \Delta_{\{ x \mapsto t \}.r^\prime}(S)$.

\end{itemize}

An update set $\Delta$ is {\em consistent} iff it does not contain clashes, i.e. whenever $(\ell,v_1), (\ell,v_2) \in \Delta$ hold, then we must have $v_1 = v_2$. If $\Delta_r(S)$ is consistent, it defines a {\em successor state} $S^\prime = S + \Delta_r(S)$ by
\[ val_{S^\prime}(\ell) = \begin{cases} v_0 &\text{for}\; (\ell,v_0) \in \Delta_r(S) \\
val_S(\ell) &\text{else} \end{cases} \]

In accordance with \cite{gurevich:tocl2000} we extend this definition by $S + \Delta_r(S) = S$ in case $\Delta_r(S)$ is inconsistent.

\begin{definition}\rm

A {\em sequential ASM} comprises a signature $\Sigma$ defining the set $\mathcal{S}$ of states, a set $\mathcal{I} \subseteq \mathcal{S}$ of initial states over $\Sigma$, a sequential ASM rule $r$ over $\Sigma$ and a transition function $\tau$ on states over $\Sigma$ with $\tau(S) = S + \Delta_r(S)$ for all states $S$. Both $\mathcal{S}$ and $\mathcal{I}$ are closed under isomorphisms.

\end{definition}

This definition of successor states gives rise to the notion of a run. A {\em run} of a sequential ASM is a sequence $S_0, S_1, \dots$ of states with $S_0 \in \mathcal{I}$ and each $S_{i+1}$ being a successor of $S_i$.

\subsection{The Background}

For sequential ASMs we usually make some implicit assumptions without further mentioning them. However, as we will see later, for our envisioned extensions these assumptions will gain more importance. First, not all values of $B$ are used in a state $S$. We therefore assume a set {\em reserve} containing the unused values of the base set. In order to support partial functions we further assume a value {\em undef} in every base set. To avoid conflicts with non-strict functions we further assume that $f_S (v_1, \dots, v_n) = \textit{undef\/}$ holds, whenever one of the arguments $v_i$ is {\em undef}. Furthermore, truth values \textbf{true} and \textbf{false} are also assumed to occur in $B$.

Concerning the terms we assume that logical operators $\wedge$, $\neg$, $\vee$, etc. are defined. These as well as as equality can be used in the same way as the function symbols in $\Sigma$ in the definition of terms. We further assume that values in the base set define ground terms that are interpreted by themselves. In this way we extend the set $\mathbb{T}$ of (ground) terms, and the evaluation function $\text{val}_S$ in a state $S$.

All these assumptions about {\em reserve}, {\em undef}, truth values, operators on them and equality define the minimum requirements for the {\em background}. There may be further constants and operators, which are in general defined by a background class. According to \cite{blass:beatcs2007} a background class is determined by a background signature consisting of constructor and function symbols, the latter ones associated with a fixed arity, while for constructor symbols it is also permitted that the arity is unfixed or bounded.

\begin{definition}\label{def-bg-class}\rm

Let $\mathcal{D}$ be a set of base domains and $V_K$ a background signature, then a {\em background class} $\mathcal{K}$ with signature $V_K$ over $\mathcal{D}$ comprises a universe $U$ and an interpretation of function symbols in $V_K$ over $U$. The universe is defined as $U =
\bigcup \mathcal{D}^\prime$, where $\mathcal{D}^\prime$ is the smallest set with $\mathcal{D} \subseteq \mathcal{D}^\prime$ satisfying the following properties for each constructor symbol
$\llcorner\lrcorner\in V_K$:

\begin{itemize}

\item If $\llcorner\lrcorner\in V_K$ has unfixed arity, then $\llcorner D \lrcorner \in \mathcal{D}^\prime$ holds for all $D \in \mathcal{D}^\prime$, and $\llcorner a_1,\dots,a_m \lrcorner \in {\llcorner D \lrcorner}$ for every $m \in \mathbb{N}$ and $a_1,\dots,a_m \in D$, and $A_{\llcorner \lrcorner} \in \mathcal{D}^\prime$ with $A_{\llcorner \lrcorner}=\bigcup_{\llcorner  D \lrcorner\in \mathcal{D}^\prime
} \llcorner D \lrcorner$.

\item If $\llcorner\lrcorner\in V_K$ has bounded arity $n$, then $\llcorner D_1 ,\dots, D_m \lrcorner \in \mathcal{D}^\prime$ for all $m \le n$ and $D_i \in \mathcal{D}^\prime$ ($1 \le i \le m$), and $\llcorner a_1,\dots,a_m \lrcorner \in {\llcorner D_1 ,\dots, D_m \lrcorner}$ for every $m \in \mathbb{N}$ and $a_1,\dots,a_m \in D$.

\item If $\llcorner\lrcorner\in V_K$ has fixed arity $n$, then $\llcorner D_1 ,\dots, D_n \lrcorner \in \mathcal{D}^\prime$ for all $D_i \in \mathcal{D}^\prime$ and $\llcorner a_1,\dots,a_n \lrcorner \in {\llcorner D_1 ,\dots, D_n \lrcorner}$ for all $a_i \in D_i$ ($1 \le i \le n$),

\end{itemize}

\end{definition}

Thus, a sequential ASM must contain an infinite set \textit{reserve\/} of reserve values, truth values and their connectives, the equality predicate, the undefinedness value {\em undef}, and a background class $\mathcal{K}$ defined by a background signature $V_K$. While the requirements for the background of a sequential ASM are minimal and therefore usually left implicit, the background will become important for reflective sequential ASMs.

\subsection{Tree Structures}

We now provide the details of the tree structures and the tree algebra. These structures will be defined as part of the background of an rsASM in the next section.

\begin{definition} \label{def-unranked-tree}

An {\em unranked tree structure} is a structure $(\mathcal{O},\prec_{c},\prec_{s})$ consisting of a finite, non-empty set $\mathcal{O}$ of node identifiers, called tree domain, and irreflexive relations $\prec_{c}$ (child relation) and $\prec_{s}$ (sibling relation) over $\mathcal{O}$ satisfying the following conditions:

\begin{itemize}

\item there exists a unique, distinguished node $o_r \in \mathcal{O}$ (root) such that for all $o \in \mathcal{O} - \{ o_r \}$ there is exactly one $o^\prime \in \mathcal{O}$ with $o^\prime \prec_c o$, and

\item whenever $o_1 \prec_s o_2$ holds, then there is some $o \in \mathcal{O}$ with $o \prec_c o_i$ for $i=1,2$.

\end{itemize}

\end{definition}

For $x_1 \prec_{c} x_2$ we say that $x_2$ is a {\em child} of $x_1$. For $x_1 \prec_{s} x_2$ we say that $x_2$ is the {\em next sibling} of $x_1$, and $x_1$ is the {\em previous sibling} of $x_2$. In order to obtain trees from this, we add labels and assign values to the leaves. For this we fix a finite, non-empty set $L$ of {\em labels}, and a finite family $\{ \tau_i \}_{i \in i}$ of data types. Each data type $\tau_i$ is associated with a {\em value domain} $dom(\tau_i)$. The corresponding {\em universe} $U$ contains all possible values of these data types, i.e. $U=\bigcup_{i \in I}dom(\tau_i)$.

\begin{definition}\label{def-tree}

A {\em tree} $t$ over the set of labels $L$ with values in the universe $U$ comprises an unranked tree structure $\gamma_t=(\mathcal{O}_t,\prec_{c},\prec_{s})$, a total label function $\omega_t: \mathcal{O}_t \rightarrow L$, and a partial value function $\upsilon_t: \mathcal{O}_t \rightarrow U$ that is defined on the leaves in $\gamma_t$.

\end{definition}

Let $T_L$ denote the set of all trees with labels in $L$, and let $root(t)$ denote the root node of a tree $t$. A sequence $t_1,...,t_k$ of trees is called a {\em hedge}, and a multiset $\langle t_1,...,t_k \rangle$ of trees is called a {\em forest}. Let $\epsilon$ denote the empty hedge, and let $H_L$ denote the set of all hedges with labels in $L$. A tree $t_1$ is a {\em subtree} of $t_2$ (notation $t_1 \sqsubseteq t_2$) iff the following properties are satisfied:

\begin{enumerate}

\item $\mathcal{O}_{t_1} \subseteq \mathcal{O}_{t_2}$, 

\item $o_1 \prec_c o_2$ holds in $t_1$ iff it holds in $t_2$, 

\item $o_1 \prec_s o_2$ holds in $t_1$ iff it holds in $t_2$, 

\item $\omega_{t_1}(o^\prime) = \omega_{t_2}(o^\prime)$ holds for all $o^\prime \in \mathcal{O}_{t_1}$, and 

\item for all leaves $o^\prime \in \mathcal{O}_{t_1}$ we have $\upsilon_{t_1}(o^\prime) = \upsilon_{t_2}(o^\prime)$. 

\end{enumerate}

$t_1$ is the {\em largest subtree} of $t_2$ (denoted as $\widehat{o}$) at node $o$ iff $t_1 \sqsubseteq t_2$ with $root(t_1) = o$ and there is no tree $t_3$ with $t_1 \neq t_3 \neq t_2$ such that $t_1 \sqsubseteq t_3 \sqsubseteq t_2$.  

\begin{definition}\label{def-context}

The {\em set of contexts} $C_L$ over $L$ is the set $T_{L \cup \{\xi\}}$ of trees with labels in $L \cup \{\xi\}$ ($\xi \notin L$) such that for each tree $t \in C_L$ exactly one leaf node is labelled with $\xi$ and the value assigned to this leaf is \textit{undef\/}.

\end{definition}

The context with a single node labelled $\xi$ is called trivial and is simply denoted as $\xi$. Contexts allow us to define substitution operations that replace a subtree of a tree or context by a new tree or context. This leads to the following four kinds of substitutions:

\begin{description}

\item[Tree-to-tree substitution.] For a tree $t_1 \in T_{L_1}$, a node $o \in \mathcal{O}_{t_1}$ and a tree $t_2 \in T_{L_2}$ the result $\textit{subst}_{tt}(t_1, o, t_2) = t_1[\widehat{o} \mapsto t_2]$ of substituting $t_2$ for the subtree rooted at $o$ is a tree in $T_{L_1 \cup L_2}$.

\item[Tree-to-context substitution.] For a tree $t_1 \in T_{L_1}$, a node $o \in \mathcal{O}_{t_1}$ the result $\textit{subst}_{tc}(t_1, o, \xi) = t_1[\widehat{o} \mapsto \xi]$ of substituting the trivial context for the subtree rooted at $o$ is a context in $C_{L_1}$.

\item[Context-to-context substitution.] For contexts $c_1 \in C_{L_1}$ and $c_2 \in C_{L_2}$ the result $\textit{subst}_{cc}(c_1, c_2) = c_1[\xi \mapsto c_2]$ of substituting $c_2$ for the leaf labelled by $\xi$ in $c_1$ is a context in $C_{L_1 \cup L_2}$.

\item[Context-to-tree substitution.] For a context $c_1 \in C_{L_1}$ and a tree $t_2 \in T_{L_2}$ the result $\textit{subst}_{ct}(c_1, t_2) = c_1[\xi \mapsto t_2]$ of substituting $t_2$ for the leaf labelled by $\xi$ in $c_1$ is a tree in $T_{L_1 \cup L_2}$.

\end{description}

As a shortcut we also write $\textit{subst}_{tc}(t_1, o, c_2)$ for $\textit{subst}_{cc}(\textit{subst}_{tc}(t_1, o, \xi), c_2)$, which is a context in $C_{L_1 \cup L_2}$.

\subsection{Tree Algebra}

To provide manipulation operations over trees at a level higher than individual nodes and edges, we need constructs to select arbitrary tree portions. For this we provide two selector constructs, which result in subtrees and contexts, respectively. For a tree $t = (\gamma_t, \omega_t, \upsilon_t) \in T_L$ these constructs are defined as follows:

\begin{itemize}

\item $context: \mathcal{O}_t \times \mathcal{O}_t \rightarrow C_L$ is a partial function on pairs $(o_1,o_2)$ of nodes with $o_1 \prec_c^+ o_2$ and $context(o_1,o_2) = \textit{subst\/}_{tc}(\widehat{o}_1, o_2, \xi) = \widehat{o}_1[{\widehat{o}_2\mapsto\xi}]$, where $\prec_c^+$ denotes the transitive closure of $\prec_c$.

\item $subtree: \mathcal{O}_t \rightarrow T_L$ is a function defined by $subtree(o) = \widehat{o}$.

\end{itemize}

The {\em set $\mathbb{T}$ of tree algebra terms} over $L \cup \{ \epsilon, \xi \}$ comprises label terms, hedge terms, and context terms, i.e. $\mathbb{T} = L \cup \mathbb{T}_h \cup \mathbb{T}_c$, which are defined as follows:

\begin{itemize}

\item The set $\mathbb{T}_h$ is the smallest set with $T_L \subseteq \mathbb{T}_h$ such that (1) $\epsilon \in \mathbb{T}_h$, (2) $a \langle h \rangle \in \mathbb{T}_h$ for $a \in L$ and $h \in
\mathbb{T}_h$, and (3) $t_1 \dots t_n \in \mathbb{T}_h$ for $t_i \in \mathbb{T}_h$ ($i=1 ,\dots, n$). 

\item The set of context terms $\mathbb{T}_c$ is the smallest set with (1) $\xi \in \mathbb{T}_c$
and (2) $a \langle t_1 ,\dots, t_n \rangle \in \mathbb{T}_c$ for $a \in L$ and terms $t_1 ,\dots, t_n \in \mathbb{T}_h \cup \mathbb{T}_c$, such that exactly one $t_i$ is a context term in $\mathbb{T}_c$.

\end{itemize}

With these we now define the operators of our tree algebra as follows:

\begin{description}

\item[label\_hedge.] The operator \textit{label\_hedge\/} turns a hedge into a tree with a new added root, i.e.
\[ \textit{label\_hedge\/}(a, t_1 \dots t_n) = a \langle t_1, \dots, t_n \rangle \]

\item[label\_context.] Similarly, the operator \textit{label\_context\/} turns a context into a context with a new added root, i.e.
\[ \textit{label\_context\/}(a, c) = a \langle c \rangle \]

\item[left\_extend.] The operator \textit{left\_extend\/} integrates the trees in a hedge into a context extending it on the left, i.e. 
\[ \textit{left\_extend\/}(t_1 \dots t_n, a \langle t_1^\prime ,\dots, t_m^\prime \rangle) = a \langle t_1, \dots, t_n, t_1^\prime, \dots, t_m^\prime \rangle \]

\item[right\_extend.] Likewise, the operator \textit{right\_extend\/} integrates the trees in a hedge into a context extending it on the right, i.e. 
\[ \textit{right\_extend\/}(t_1 \dots t_n, a \langle t_1^\prime ,\dots, t_m^\prime \rangle) = a \langle t_1^\prime, \dots, t_m^\prime, t_1, \dots, t_n \rangle \]

\item[concat.] The operator \textit{concat\/} simply concatenates two hedges, i.e.  
\[ \textit{concat\/}(t_1 \dots t_n, t_1^\prime \dots t_m^\prime) = t_1 \dots t_n t_1^\prime \dots t_m^\prime \]

\item[inject\_hedge.] The operator \textit{inject\_hedge\/} turns a context into a tree by substituting a hedge for $\xi$, i.e. 
\[ \textit{inject\_hedge\/}(c, t_1 \dots t_n) = c[\xi \mapsto t_1 \dots t_n] \]

\item[inject\_context.] The operator \textit{inject\_context\/} substitutes a context for $\xi$, i.e. 
\[ \textit{inject\_context\/}(c_1,c_2) = c_1[\xi \mapsto c_2] \]

\end{description}

\subsection{Partial Updates}

The presence of a background class permits to have complex values such as sequences of arbitrary length, trees, graphs, etc. An update may only affect a tiny part of a complex value, but the presence of parallelism, though bounded, in sequential ASMs may cause avoidable conflicts on locations bound to such a complex value. For instance, if two updates in parallel affect only separate subtrees, they could be combined into a single tree update. We therefore add the following partial assignment rule to the definition of sequential ASM rules:

\begin{description}

\item[partial assignment.] Whenever $f \in \Sigma$ has arity $n$, $op \in V_K$ is an operator (i.e. a function symbol defined in the background) of arity $m+1$, $t_i$ ($i=1,\dots,n$) and $t_i^\prime$ ($i=1,\dots,m$) are terms over $\Sigma$, then $f(t_1,\dots,t_n) \leftleftarrows^{op} t_1^\prime ,\dots, t_m^\prime$ is a rule.

\end{description}

The effect of a single partial assignment can be captured by a single update
\[ ((f,(\text{val}_S(t_1) ,\dots, \text{val}_S(t_n))), \text{op}(\text{val}_S(f(t_1,\dots,t_n)),\text{val}_S(t_1^\prime) ,\dots, \text{val}_S(t_m^\prime))) \; . \]

However, updates concerning the same location $\ell$ produced by partial assignments are first collected in a multiset $\ddot{\Delta}_\ell$. More precisely, the operators $op$ and their arguments $v_1^\prime ,\dots,v_m^\prime$ (with $v_i^\prime = \text{val}_S(t_i^\prime)$) will be collected in $\ddot{\Delta}_\ell$, i.e. $\ddot{\Delta}_\ell$ takes the form 
\[ \langle (\ell,op_1,(v_1^1 ,\dots,v_{m_1}^1)) , \dots , (\ell,op_k,(v_1^k ,\dots,v_{m_k}^k)) \rangle \; .\]
In \cite{schewe:ejc2011} any member of an update multiset $\ddot{\Delta}_\ell$ of the form $(\ell,op,(v_1 ,\dots,v_{m}))$ is called a {\em shared update}.

If possible, i.e. if the operators and arguments are compatible with each other, this multiset together with $\text{val}_S(f(t_1,\dots,t_n))$ will be collapsed into a single update $(\ell,v_0)$. Conditions for compatibility and the collapse of an update multiset into an update set have been elaborated in detail in \cite{schewe:ejc2011}. 

Furthermore, with bulk values such as trees it is advisable to consider also fragments of values in connection with {\em sublocations}, which leads to dependencies among updates. For a tree $t$ every node $o \in \mathcal{O}$ defines such a sublocation, i.e. a nullary function symbol. In the next section we will make this explicit by providing functions that map values $o \in \mathcal{O}$ to such function symbols and vice-versa. Clearly, the value associated with the root determines the values associated with all sublocations. This is exploited in \cite{schewe:ejc2011} for the analysis of compatibility beyond the permutation of operators, but also on the level of sublocations. For this the notions of \emph{subsumption} and \emph{dependence} between locations are decisive.

\begin{definition}\label{def-subsumption}\rm

A location $\ell_1$ \emph{subsumes} a location $\ell_2$ (notation: $\ell_2 \sqsubseteq \ell_1$) iff for all states $S$ $val_S(\ell_1)$ uniquely determines $val_S(\ell_2)$. A location $\ell_1$ \emph{depends} on a location $\ell_2$ (notation: $\ell_2 \unlhd \ell_1$) iff $val_S(\ell_2) = \bot$ implies $val_S(\ell_1) = \bot$ for all states $S$.

\end{definition}

Clearly, for locations $\ell_1, \ell_2$ with $\ell_2 \sqsubseteq \ell_1$ we also have $\ell_1 \unlhd \ell_2$.

\section{Reflective Sequential Abstract State Machines}\label{sec:rasm}

In this section we define a model of rsASMs extending sequential ASMs. The ground idea is quite simple. It uses a self representation of an ASM, i.e. its signature and rule, as a particular tree value that is assigned to a location \textit{self\/}. For this we exploit the tree algebra from the previous section. Then in every step the update set will be built using the rule in this representation, for which we exploit \textit{raise\/} and \textit{drop\/} as in \cite{stemple:2000}. We conclude the section by giving examples of specifications of reflective sequential algorithms by means of rsASMs.

\subsection{Self Representation Using Trees}\label{ssec:treerep}

For the dedicated location storing the self-representation of a sequential ASM it is sufficient to use a single function symbol \textit{self\/} of arity $0$. Then in every state $S$ the value $\text{val}_S(\textit{self\/})$ is a complex tree comprising two subtrees for the representation of the signature and the rule, respectively. The signature is just a list of function symbols, each having a name and an arity. The rule can be represented by a syntax tree. 

In detail, in the tree structure we have a root node $o$ labelled by \texttt{self} with exactly two successor nodes, say $o_0$ and $o_1$, labelled by \texttt{signature} and \texttt{rule}, respectively. So we have $o \prec_c o_0$, $o_0 \prec_s o_1$ and $o \prec_c o_1$. The subtree rooted at $o_0$ has as many children $o_{00} ,\dots, o_{0k}$ as there are function symbols in the signature, each labelled by \texttt{func}. Each of the subtrees rooted at $o_{oi}$ takes the form $\texttt{func} \langle \texttt{name} \langle f \rangle \; \texttt{arity} \langle n \rangle \rangle$ with a function name $f$ and a natural number $n$. The subtree rooted at $o_1$ represents the rule of a sequential ASM as a tree. Trees representing rules are inductively defined as follows:

\begin{itemize}

\item An assignment rule $f(t_1,\dots,t_n) = t_0$ is represented by a tree of the form 
\[ \textit{label\_hedge}(\texttt{update},\texttt{func} \langle f \rangle \texttt{term} \langle t_1 \dots t_n \rangle \texttt{term} \langle t_0 \rangle) \; . \]

\item A branching rule \texttt{IF} $\varphi$ \texttt{THEN} $r_1$ \texttt{ELSE} $r_2$ \texttt{ENDIF} is represented by a tree of the form 
\[ \textit{label\_hedge}(\texttt{if},\texttt{bool} \langle \varphi \rangle \texttt{rule} \langle t_1 \rangle \texttt{rule} \langle t_2 \rangle) \; , \]
where $t_i$ (for $i=1,2$) is the tree representing the rule $r_i$.

\item A parallel rule \texttt{PAR} $r_1 \dots r_k$ \texttt{ENDPAR} is represented by a tree of the form 
\[ \textit{label\_hedge}(\texttt{par}, \texttt{rule} \langle t_1 \rangle \; \dots \; \texttt{rule} \langle t_k \rangle) \; , \]
where $t_i$ (for $i=1,\dots,k$) is the tree representing the rule $r_i$.

\item A let rule \texttt{LET} $x = t$ \texttt{IN} $r$ is represented by a tree of the form 
\[ \textit{label\_hedge}(\texttt{let}, \texttt{term} \langle x \rangle \texttt{term} \langle t \rangle \texttt{rule} \langle t^\prime \rangle) \; , \]
where $t^\prime$ is the tree representing the rule $r$.

\item A partial assignment rule $f(t_1,\dots,t_n) \leftleftarrows^{op} t_1^\prime, \dots, t_m^\prime$ is represented by a tree of the form
\[ \textit{label\_hedge}(\texttt{partial},\texttt{func} \langle f \rangle \texttt{func} \langle op \rangle \texttt{term} \langle t_1 \dots t_n \rangle \texttt{term} \langle t_1^\prime \dots t_m^\prime \rangle) \; . \]

\end{itemize}

\subsection{The Background of an rsASM}

Let us draw some consequences from this tree representation. As function names in the signature appear in the tree representation, these are values. Furthermore, we may in every step enlarge the signature, so there must be an infinite reserve $\Sigma_{res}$ of such function names. Likewise we require natural numbers in the background for the arity assigned to function symbols, though operations on natural numbers are optional. Terms built over the signature and a base set $B$ must also become values. Concerning the subtree capturing the rule, the keywords for the different rules become labels, so we obtain the set of labels
\begin{align*}
L \qquad = \qquad &\{ \texttt{self},  \texttt{signature}, \texttt{rule}, \texttt{func}, \texttt{name}, \texttt{arity}, \texttt{update}, \texttt{term}, \texttt{if},\\
&\qquad \texttt{bool}, \texttt{par}, \texttt{let}, \texttt{partial} \} .
\end{align*}
Consequently, we must extend a base set $B$ by such terms, i.e. terms will become values.

\begin{definition}\label{def-extended-baseset}\rm

Let $B$ be a base set. An {\em extended base set} is the smallest set $B_{\textit{ext\/}}$ containing $B$ that is closed under adding function symbols in the reserve $\Sigma_{\textit{res\/}}$, natural numbers, the terms $\mathbb{T}$ with respect to $B$ and $\Sigma_{\textit{res\/}}$, and terms of the tree algebra defined over $\Sigma_{\textit{res\/}}$ with labels in $L$ as defined above.

\end{definition}

In an extended base set terms are treated as values that can appear as values of some locations in a state. This implies that terms now have a dual character. When appearing in an ASM rule, e.g. on the right-hand side of an assignment, they are interpreted in the current state to determine an update. However, if they are to be treated as a value, they have to be interpreted by themselves. Therefore, we require a function \textit{drop\/} turning a term into a value, and inversely a function \textit{raise\/} turning a value into a term. On constants in a base set $B$ both functions are just the identity. Note that functions \textit{drop\/} and \textit{raise\/} capture the essence of linguistic reflection\footnote{In a remark in \cite[p.87]{gurevich:tocl2000} Gurevich wrote that for ``algorithms which change their programs during the computation \dots the so-called program is just a part of the data. The real program changes that part of the data, and the real program does not change''. This is only partially correct, as the ``real program'' still has to provide the means to interpret data as executable programs and vice versa, which is exactly what \textit{raise\/} and \textit{drop\/} enable, but these are not covered by sequential algorithms.} \cite{stemple:2000}.

For instance, when evaluating \textit{self\/} in a state $S$ the result should be a tree value, to which we may apply some tree operators to extract a rule associated with some subtree. As this is a value in $B_{ext}$ we may apply \textit{raise\/} to it to obtain the ASM rule, which could be executed to determine an update set and to update the state. Analogously, when assigning a new term (e.g. a Boolean term in a branching rule) to a subtree of \textit{self\/} the value on the right-hand side must be the result of the function \textit{drop\/}, otherwise the term would be evaluated and a Boolean value would be assigned instead.

The functions \textit{drop\/} and \textit{raise\/} can be applied to function names as well, so they can be used as values stored within \textit{self\/} and used as function symbols in rules. In particular, if $\mathcal{O}$ denotes the set of nodes of a tree, then each $o \in \mathcal{O}$ is a value in the base set, but $\textit{raise\/}(o)$ denotes a nullary function symbol that is bound in a state to the subtree $\hat{o}$. However, as it is always clear from the context, when a function name $f$ is used as a value, i.e. as $\textit{drop\/}(f)$, this subtle distinction can be blurred.

Finally, the self-representation as defined above involves several non-logical constants such as the keywords for the rules, i.e. labels in $L$. For theoretical analysis it is important to extract from the representation the decisive terms defined over $\Sigma$ and $B$. That is, we further require an {\em extraction function}\footnote{This extraction function $\beta$ will be used in the formulation of the (reflective) bounded exploration postulate.} $\beta: \mathbb{T}_{ext} \rightarrow \bigcup_{n \in \mathbb{N}} \mathbb{T}^n$, which assigns to each term defined over the extended signature $\Sigma_{ext}$ and the extended base set $B_{ext}$ a tuple of terms in $\mathbb{T}$ defined over $\Sigma$ and $B$. We will see that such an extraction function can actually be derived using the tree algebra. 

The following definition summarises all requirements for the background of an rsASM.

\begin{definition}\label{def-rsasm-background}\rm

The {\em background of an rsASM} is defined by a background class $\mathcal{K}$ over a background signature $V_K$. It must contain an infinite set \textit{reserve\/} of reserve values and an infinite set $\Sigma_{res}$ of reserve function symbols, the equality predicate, the undefinedness value {\em undef}, and a set of labels 
\begin{align*}
L \qquad = \qquad &\{ \texttt{self},  \texttt{signature}, \texttt{rule}, \texttt{func}, \texttt{name}, \texttt{arity}, \texttt{update}, \texttt{term}, \texttt{if},\\
&\qquad \texttt{bool}, \texttt{par}, \texttt{let}, \texttt{partial} \} \; .
\end{align*}
The background class must further define truth values and their connectives, tuples and projection operations on them, natural numbers and operations on them, trees in $T_L$ and tree operations, and the function $\mathbf{I}$, where $\mathbf{I} x . \varphi$ denotes the unique $x$ satisfying condition $\varphi$.

The background must further provide functions: $\textit{drop\/}: \hat{\mathbb{T}}_{ext} \rightarrow B_{ext}$ and $\textit{raise\/}: B_{ext} \rightarrow \hat{\mathbb{T}}_{ext}$ for each base set $B$ and extended base set $B_{ext}$, and a derived {\em extraction function} $\beta: \mathbb{T}_{ext} \rightarrow \bigcup_{n \in \mathbb{N}} \mathbb{T}^n$, which assigns to each term defined over the extended signature $\Sigma_{ext}$ and the extended base set $B_{ext}$ a tuple of terms in $\mathbb{T}$ defined over $\Sigma$ and $B$.

\end{definition}

In the definition we use $\hat{\mathbb{T}}_{ext}$ to denote the union of the set $\mathbb{T}_{ext}$ of terms with $\Sigma_{ext}$ and the set of rules. The extraction function $\beta$ on rule terms is easily defined as follows:
\begin{align*}
& \beta( \textit{label\_hedge}(\texttt{update},\texttt{func} \langle f \rangle \texttt{term} \langle t_1 \dots t_n \rangle \texttt{term} \langle t_0 \rangle) ) = (t_0, t_1, \dots, t_n) \\[1ex]
& \beta( \textit{label\_hedge}(\texttt{if},\texttt{bool} \langle \varphi \rangle \texttt{rule} \langle t_1 \rangle \texttt{rule} \langle t_2 \rangle) ) = (\varphi, t_1^1 ,\dots, t_1^{n_1}, t_2^1 ,\dots, t_2^{n_2}) \\
& \qquad\qquad \text{for} \; \beta(t_1) =(t_1^1 ,\dots, t_1^{n_1}) \;\text{and}\; \beta(t_2) =(t_2^1 ,\dots, t_2^{n_2}) \\[1ex]
& \beta( \textit{label\_hedge}(\texttt{par}, \texttt{rule} \langle t_1 \rangle \;\dots\; \texttt{rule} \langle t_k \rangle) ) = (t_1^1 ,\dots, t_1^{n_1}, \dots, t_k^1 ,\dots, t_k^{n_k}) \\
& \qquad\qquad \text{for}\; \beta(t_i) =(t_i^1 ,\dots, t_i^{n_i}), \; i=1,\dots,k \\[1ex]
& \beta( \textit{label\_hedge}(\texttt{let}, \texttt{term} \langle x \rangle \texttt{term} \langle t \rangle \texttt{rule} \langle t^\prime \rangle) ) = (t, t_1 ,\dots, t_n) \\
& \qquad\qquad \text{for} \; \beta(t^\prime[x \mapsto t]) = (t_1 ,\dots, t_n) \\[1ex]
& \beta( \textit{label\_hedge}(\texttt{partial},\texttt{func} \langle f \rangle \texttt{func} \langle op \rangle \texttt{term} \langle t_1 \dots t_n \rangle \texttt{term} \langle t_1^\prime \dots t_m^\prime \rangle) ) = \\
& \qquad\qquad\qquad\qquad (t_1 ,\dots, t_n, op(f(t_1 ,\dots, t_n), t_1^\prime, \dots, t_m^\prime))
\end{align*}
Definition \ref{def-rsasm-background} allows us to define rsASMs realising the ideas for reflective computing.

\begin{definition}\label{def-rsasm}\rm

A {\em reflective sequential ASM} (rsASM) $\mathcal{M}$ comprises an (initial) signature $\Sigma$ containing a $0$-ary function symbol \textit{self\/}, a background as defined in Definition \ref{def-rsasm-background}, and a set $\mathcal{I}$ of initial states over $\Sigma$ closed under isomorphisms such that any two states $I_1, I_2 \in \mathcal{I}$ coincide on \textit{self\/}. Furthermore, $\mathcal{M}$ comprises a state transition function $\tau$ on states over extended signature $\Sigma_S$ with $\tau(S) = S + \Delta_{r_S}(S)$, where the rule $r_S$ is defined as $\textit{raise\/}(\textit{rule\/}(\text{val}_S(\textit{self\/})))$ over the signature $\Sigma_S = \textit{raise\/}(\textit{signature\/}(\text{val}_S(\textit{self\/})))$ .

\end{definition}

In this definition we use extraction functions \textit{rule\/} and \textit{signature\/} defined on the tree representation of a sequential ASM in \textit{self\/}. These are simply defined as
\begin{gather*}
\textit{signature\/}(t) = \textit{subtree\/}(\textbf{I} o . \textit{root}(t) \prec_c o \wedge \textit{label\/}(o) = \texttt{signature} \qquad \text{and} \\
\textit{rule\/}(t) = \textit{subtree\/}(\textbf{I} o . \textit{root}(t) \prec_c o \wedge \textit{label\/}(o) = \texttt{rule}
\end{gather*}

\subsection{Difference of Trees}

We know that for every state $S$ there is a well-defined, consistent update set $\Delta(S)$ such that $S^\prime = S+\Delta(S)$ is the successor state of $S$. Actually, $\Delta(S)$ arises as the difference between $S$ and $S^\prime$. If the states contain locations with bulk values assigned to them, then it becomes also important to provide means for the expression of the difference of such values. 

Here we concentrate only on the tree values assigned to \textit{self\/}. Let $T_L$ be the set of trees with labels in $L$ and values in a universe $\mathcal{U}$ as used in Definition \ref{def-rsasm-background}.

\begin{proposition}\label{prop-tree-difference}

For trees $t, t^\prime \in T_L$ we can write $t^\prime = \theta(t)$ with a tree algebra term $\theta$.

\end{proposition}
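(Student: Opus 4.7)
The plan is to construct, for arbitrary $t, t' \in T_L$, a tree algebra term $\theta$ in $\mathbb{T}$ whose evaluation is $t'$; because the statement places no constraint linking $\theta$ to $t$, I would exploit the inclusion $T_L \subseteq \mathbb{T}_h$ and allow $\theta$ to ignore its argument entirely. The result would be established by structural induction on $t'$, using only the operators $\textit{label\_hedge\/}$, $\textit{concat\/}$ and the hedge constant $\epsilon$, thereby giving an explicit construction that is reusable later for expressing differences between consecutive states.

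For the base case, suppose $t'$ is a single-node tree with root label $a \in L$ and (possibly undefined) leaf value $v \in U$. Then $t' \in T_L \subseteq \mathbb{T}_h$ is already a primitive of the tree algebra, so $\theta := t'$ works directly. For the inductive step, let $\textit{root}(t') = o_r$ with its successors carrying the largest subtrees $t'_1, \dots, t'_n$, so that $t' = a \langle t'_1, \dots, t'_n \rangle$ where $a = \omega_{t'}(o_r)$. By the inductive hypothesis there exist terms $\theta_1, \dots, \theta_n \in \mathbb{T}_h$ with $\theta_i = t'_i$. Folding them together with $\textit{concat\/}$, set
\[ h \;=\; \textit{concat\/}(\theta_1, \textit{concat\/}(\theta_2, \dots, \textit{concat\/}(\theta_{n-1}, \theta_n) \dots)) \]
and $\theta := \textit{label\_hedge\/}(a, h)$. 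Then $\theta$ evaluates to $t'$ by the defining equations of the tree algebra operators, and the equation $t' = \theta(t)$ holds vacuously since $\theta$ contains no occurrence of $t$.

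The main obstacle here is arguably interpretive rather than technical: the statement admits an essentially trivial proof because every tree in $T_L$ is already a term, and the substantive content lies in a refinement in which $\theta$ shares maximal structure with $t$. Such a sharper construction would proceed by identifying a largest common context of $t$ and $t'$ using the $\textit{context\/}$ selector on $t$, extracting the differing subtrees of $t'$ by structural recursion as above, and re-assembling them with $\textit{inject\_hedge\/}$ and $\textit{inject\_context\/}$. This would yield a $\theta$ that genuinely encodes the difference between $t$ and $t'$---the form needed for partial updates on \textit{self\/} in the sequel---but that refinement only strengthens, and does not alter, the basic existence argument sketched above.
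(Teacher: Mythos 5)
Your first construction is formally valid for the statement as literally printed: since $T_L \subseteq \mathbb{T}_h$, the constant term $\theta = t^\prime$ (or your $\textit{label\_hedge\/}$/$\textit{concat\/}$ unfolding of it) evaluates to $t^\prime$, and nothing in the wording forces $\theta$ to mention $t$. But this is not the proposition the paper proves or uses, and the difference is not cosmetic. The paper's proof restricts attention to self-representation trees and builds $\theta$ by structural induction on the rule subtree with four cases: new assignment and partial-assignment leaves are written out as $\textit{label\_hedge\/}$ constants (cases (1)--(2)); subtrees of $t^\prime$ that already occur in $t$ are \emph{not} rebuilt but referenced as $\textit{subtree\/}(o_{kl})$ for a node $o_{kl}$ of $t$ located by its $\prec_c$/$\prec_s$ coordinates (case (3)); and inner nodes are reassembled with $\textit{label\_hedge\/}$ (case (4)). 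The fact that $\theta$ reuses $t$ wherever possible and introduces only material drawn from the existing representation is exactly what Corollary \ref{prop-tree-update} and Lemma \ref{lem-tree-update} later extract from this proposition: the rule $r_S^{\textit{self\/}}$ must use only tree algebra operators and terms of the form $\textit{raise\/}(t)$ with $t$ appearing in $\text{val}_{\hat{\Phi}(S)}(\textit{self\/})$, so that the same rule can be reapplied in every state $W$-similar to $S$ (Lemmata \ref{lem-tree-coincidence}--\ref{lem-treeBehavEquiv}). A constant $\theta$ hard-coding $\hat{t}_{\tau(S)}$ would yield the wrong tree in every other state of that similarity class, and the downstream argument would collapse.

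You do flag this yourself, but your proposed refinement --- a single largest common context of $t$ and $t^\prime$, refilled via $\textit{inject\_hedge\/}$ and $\textit{inject\_context\/}$ --- remains a one-sentence sketch and is too coarse as stated: $t$ and $t^\prime$ will in general differ at several independent positions, so one context does not isolate the changed parts, and a context by definition has exactly one hole. You would also still need to specify how the unchanged subtrees are addressed \emph{as functions of $t$}; the paper does this with $\mathbf{I}$-definable node coordinates in its case (3). Closing the gap therefore amounts to carrying out the node-by-node induction, i.e.\ essentially reproducing the paper's case analysis rather than the constant-term shortcut.
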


\begin{proof}
First consider the subtrees $t_{sig}$ and $t_{sig}^\prime$ representing the signatures in $S$ and $\tau(S)$, respectively. As we assume that only new function symbols are added, we obtain immediately
\[ t_{sig}^\prime = \textit{right\_extend\/}(t_{sig}, \textit{label\_hedge\/}(\texttt{func}, \langle f_1 \rangle \, \langle a_1 \rangle) \dots \textit{label\_hedge\/}(\texttt{func}, \langle f_k \rangle \, \langle a_k \rangle)) . \]

Concerning the subtrees $t_{rule}$ and $t_{rule}^\prime$ representing the rules in $\tau(S)$ and $\tau(S)$, respectively, we proceed by structural induction from the leaves to the root, for which it is sufficient to consider the following four cases:

\begin{enumerate}

\item If $o$ is a node in $t_{rule}^\prime$ representing an assignment rule that does not appear in $t$, then $\textit{subtree\/}(o)$ takes the form
\[ \textit{subtree\/}(o) = \textit{label\_hedge}(\texttt{update},\texttt{func} \langle f \rangle \texttt{term} \langle t_1 \dots t_n \rangle \texttt{term} \langle t_0 \rangle) . \]

\item If $o$ is a node in $t_{rule}^\prime$ representing a partial assignment rule that does not appear in $t$, then $\textit{subtree\/}(o)$ takes the form
\[ \textit{subtree\/}(o) = \textit{label\_hedge}(\texttt{partial},\texttt{func} \langle f \rangle \texttt{func} \langle op \rangle \texttt{term} \langle t_1 \dots t_n \rangle \texttt{term} \langle t_1^\prime \dots t_m^\prime \rangle) . \]

\item Let $o_{kl}$ denote the node in $t$ with $\textit{root\/}(t) \prec^k o_{kl}$ such that there are nodes $o_1 ,\dots, o_l$ with $o_1 \prec_s o_2 \prec_s \dots \prec_s o_l \prec_s o_{kl}$, but no node $o_0$ with $o_0 \prec_s o_1$. If $\textit{subtree\/}(o_{kl}$ appears as a subtree in $t^\prime$, say at node $o^\prime$, we simple have $\textit{subtree\/}(o^\prime) = \textit{subtree\/}(o_{kl})$.

\item If $o$ is a node in $t_{rule}^\prime$ with children $o_1 ,\dots, o_l$ and we can write $\textit{subtree\/}(o_i) = \theta_i(t)$, then we obtain 
\[ \textit{subtree\/}(o) = \textit{label\_hedge}(a, \theta_1(t) \dots \theta_l(t)) \]

with a label $a \in \{ \texttt{if}, \texttt{par}, \texttt{let} \}$.

\end{enumerate}

Using cases (1), (2) and (3) with a maximum tree associated with $o^\prime$ that already appears as a subtree of $t$ gives our induction base. Using case (4) defines the induction step.
\end{proof}

\begin{corollary}\label{prop-tree-update}

For trees $t, t^\prime \in T_L$ there exists a compatible update multiset $\ddot{\Delta}$ defined by updates and shared updates on the nodes of $t^\prime$ such that $\ddot{\Delta}$ collapses to $\Delta = \{ (\textit{root\/}(t^\prime), t^\prime) \}$.

\end{corollary}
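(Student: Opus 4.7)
The plan is to lift the constructive argument of Proposition~\ref{prop-tree-difference} from tree algebra terms to an update multiset. Each tree algebra operator appearing in the expression $t^\prime = \theta(t)$ corresponds either to a plain update at a sublocation (when a fresh subtree is introduced verbatim, as in cases (1), (2) of the proof of Proposition~\ref{prop-tree-difference}) or to a shared update at a sublocation whose $op$-symbol is the relevant tree algebra operator (for the structural composition in case (4) and for the signature extension via \textit{right\_extend\/}). Recall from the discussion preceding Definition~\ref{def-subsumption} that every node $o$ of $t^\prime$ gives rise to a sublocation, so the terms ``update'' and ``shared update on the nodes of $t^\prime$'' in the statement are literally the entries of $\ddot\Delta$ to be produced.

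First I would walk through $\theta$ in a top-down fashion and produce one entry of $\ddot\Delta$ per node $o^\prime$ of $t^\prime$ that is touched. For nodes whose subtree is copied verbatim from $t$ (case (3) of Proposition~\ref{prop-tree-difference}) nothing is emitted. For newly inserted leaves or subtrees arising in cases (1) and (2) I emit a plain update $(o^\prime, \textit{subtree\/}(o^\prime))$ at that sublocation. For the inductive step (case (4)) I emit a shared update at the sublocation of the parent node with $op = \textit{label\_hedge\/}$ whose arguments are the labels and the sublocations of its children; analogously, for the signature root I emit a shared update with $op = \textit{right\_extend\/}$ carrying the freshly appended \texttt{func}-subtrees.

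Second, I would verify compatibility in the sense of~\cite{schewe:ejc2011}. Any two distinct entries of $\ddot\Delta$ concern sublocations that are either incomparable in the subsumption order $\sqsubseteq$ (disjoint subtrees of $t^\prime$) or are related along a single branch. In the latter case, the shared update produced by case (4) at an inner node is, by the very definition of the corresponding tree algebra operator, guaranteed to yield the same subtree value at that node as the combination of the updates at its descendant sublocations. Hence, by the subsumption and dependence analysis of Definition~\ref{def-subsumption}, the multiset is permutation-independent and structurally consistent, which is exactly what compatibility requires.

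Finally, the collapse itself is a single bottom-up pass: at every leaf sublocation the update already fixes the correct subtree of $t^\prime$; applying the shared update recorded at a parent then yields, by definition of \textit{label\_hedge\/} (respectively \textit{right\_extend\/}), the subtree of $t^\prime$ rooted at that parent. Iterating up to $\textit{root\/}(t^\prime)$ collapses $\ddot\Delta$ to the single update $(\textit{root\/}(t^\prime), t^\prime)$. The main obstacle I expect is the careful bookkeeping around the subsumption and dependence relations: to invoke the compatibility machinery of~\cite{schewe:ejc2011} one must argue that the shared update emitted in case (4) and the plain/shared updates emitted for its children agree at every common sublocation, and this hinges on the fact that each tree algebra operator used is total and its output is uniquely determined by its operands, which is precisely what the inductive construction of $\theta$ in Proposition~\ref{prop-tree-difference} guarantees.
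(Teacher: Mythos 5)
Your proposal follows essentially the same route as the paper: lift the four-case construction from the proof of Proposition~\ref{prop-tree-difference} to a per-node collection of updates and shared updates on the sublocations of $t^\prime$, and invoke the compatibility and subsumption analysis of \cite{schewe:ejc2011} to collapse the multiset bottom-up into the single root update $(\textit{root\/}(t^\prime), t^\prime)$. The cosmetic differences (you record inner nodes as shared updates with $op = \textit{label\_hedge\/}$ taking the children's sublocations as arguments, whereas the paper inlines the children's tree algebra terms into a plain assignment at each inner node) do not change the substance.

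The one place where your construction, as written, has a hole is case (3): for a node $o$ of $t^\prime$ whose subtree is copied verbatim from $t$ you emit nothing, and you rely on the parent's shared update reading the child sublocation to supply that subtree's value. But $o$ is a node of $t^\prime$, not of $t$; it need not carry any value in the current state, and even when it does, nothing guarantees it equals the intended subtree. Without an entry for such nodes the collapse cannot reconstruct the full tree $t^\prime$ at the root. The paper closes this by emitting an explicit copy rule for exactly these nodes: a \texttt{LET} that locates the corresponding node $o_{kl}$ in $t$ (via its depth $k$ and sibling position $l$) and assigns $o := \textit{subtree\/}(o_{kl})$. Adding such an update for every verbatim-copied node repairs your argument and makes it coincide with the paper's proof.
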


\begin{proof}
For the subtree representing the signature we obtain a rule of the form let $o^\prime$ be the unique successor node of the root of $t^\prime$ labelled by \texttt{signature}. This gives rise to the rule
\[ o^\prime := \textit{right\_extend\/}(t_{sig}, \textit{label\_hedge\/}(\texttt{func}, \langle f_1 \rangle \, \langle a_1 \rangle) \dots \textit{label\_hedge\/}(\texttt{func}, \langle f_k \rangle \, \langle a_k \rangle)) , \]
where the subtree $t_{sig}$ represents the signatures in $S$.

For the subtree representing the rule we simply define rules for the four cases used in the proof of Proposition \ref{prop-tree-difference}:

\begin{enumerate}

\item If $o$ is a node in $t_{rule}^\prime$ representing an assignment rule that does not appear in $t$, the rule takes the form
\[ o := \textit{label\_hedge}(\texttt{update},\texttt{func} \langle f \rangle \texttt{term} \langle t_1 \dots t_n \rangle \texttt{term} \langle t_0 \rangle) . \]

\item If $o$ is a node in $t_{rule}^\prime$ representing a partial assignment rule that does not appear in $t$, then the rule takes the form
\[ o := \textit{label\_hedge}(\texttt{partial},\texttt{func} \langle f \rangle \texttt{func} \langle op \rangle \texttt{term} \langle t_1 \dots t_n \rangle \texttt{term} \langle t_1^\prime \dots t_m^\prime \rangle) . \]

\item In case $o$ is a node in $t_{rule}^\prime$ such that $\textit{subtree\/}(o)$ occurs as a subtree of $t$, the rule takes the form 
\begin{gather*}
\texttt{LET}\; o_{kl} = \textbf{I} o^\prime . \textit{root\/}(t) \prec_c^k o^\prime \wedge \exists o_1,\dots,o_l . o_1 \prec_s o_2 \prec_s \dots \prec_s o_l \prec_s o^\prime \wedge \\
\neg \exists o_0 . o_0 \prec_s o_1 \;\texttt{IN}\; o := \textit{subtree\/}(o_{kl})
\end{gather*}

\item For all other nodes $o$ the rule takes the form $o := \textit{label\_hedge}(a, t_1 \dots t_l)$ with a label $a \in \{ \texttt{if}, \texttt{par}, \texttt{let} \}$ and terms $t_1 ,\dots, t_l$ that are used in the assignment rules for the children of $o$.

\end{enumerate}

The parallel composition of these rules for all nodes of $t^\prime$ defines a rule $r$ that yields the required update multiset and update set.
\end{proof}

\subsection{Examples}

We first show an example for the use of the tree algebra, then several examples of reflective sequential algorithms, specified by rsASMs. Example \ref{bsp-join} addresses the well known example of a natural join, where reflection is required to compute the type of the result, and Example \ref{bsp-src} shows how parity of sets can be addressed by means of reflection\footnote{As the algorithms investigated in this paper are sequential, we only deal with the parity of subsets of a fixed finite set.}.

\begin{example}\label{bsp-tree}

To illustrate the tree algebra operations consider a tree $t$ with a root labelled by \texttt{self}. Let there be two direct children of the root labelled with \texttt{signature} and \texttt{rule}, respectively, and let the subtree labelled by signature have the form $\textit{label\_hedge\/}(\texttt{signature},h)$, where the hedge $h$ is a sequence of trees of the form $\textit{label\_hedge\/}(\texttt{func},\langle f \rangle \; \langle a \rangle)$, where $f$ is the name of a function symbol in $\Sigma$ and $a$ denotes the arity of $f$. Then the ASM rule
\begin{align*}
\texttt{LET}\; &\text{sign} = \mathbf{I} o . \textit{root\/}(t) \prec_c o \wedge \textit{label\/}(o) = \texttt{signature} \;\texttt{IN} \\
\texttt{LET}\; &h = \mathbf{I} h_s . \textit{subtree\/}(o) = \textit{label\_hedge\/}(\texttt{signature}, h_s) \;\texttt{IN} \\
& t := \textit{inject\_hedge\/} ( \textit{context\/}(\textit{root\/}(t), \text{sign}), \\
& \hspace*{1.5cm} \textit{label\_hedge\/}(\texttt{signature}, \textit{concat\/}(h,\textit{label\_hedge\/}(\texttt{func},\langle f \rangle \langle a \rangle))))
\end{align*}

inserts a new function symbol $f$ with arity $a$ into the tree representation. We used the notation $\mathbf{I} x. \varphi$ to denote the unique $x$ satisfying $\varphi$. We might use this rule as a definition for \texttt{NewFunction}($f$,$a$).

\end{example}

\begin{example}\label{bsp-join}
For the background let us assume the presence of a finite set $\mathcal{D}$ of domain values and a finite set $\mathcal{A}$ of attributes. In an initial state $S_0$ we assume $k$ relational function symbols $R_1 ,\dots, R_k$ with arities $n_1 ,\dots, n_k$ in the signature, i.e. we have $\text{val}_{S_0}(R_i,(v_1 ,\dots, n_i)) \in \{ \textbf{true},\; \textbf{false} \}$, if $v_i \in \mathcal{D}$ holds for all $i=1,\dots, n_i$, and $\text{val}_{S_0}(R_i,(v_1 ,\dots, n_i)) = \textit{undef\/}$ otherwise. 

The attributes $A \in \mathcal{A}$ are linked to the function symbols $R_i$ using a binary function symbol \textit{index\/} that is initially defined on $\{ \textit{drop\/}(R_i) \mid 1 \le i \le k \} \times \mathcal{A}$, i.e. on pairs comprising the name of a function symbol and attribute, such that $\textit{index\/}(\textit{drop\/}(R_i), A) \in \{ 1 ,\dots, n_i \}$. Further assume that for all $i$ \textit{index\/} is injective on $\{ \textit{drop\/}(R_i) \} \times \mathcal{A}$, and that for each $j \in \{ 1 ,\dots, n_i \}$ there exists some $A \in \mathcal{A}$ with $\textit{index\/}(\textit{drop\/}(R_i), A) = j$.

With this we also obtain (derived) ``projection'' functions $\hat{R}_i$ of arity $n_i + 1$ (for $1 \le i \le k$). We have $\text{val}_{S_0}(\hat{R}_i,(A,v_1 ,\dots, v_{n_i})) = v_j$, if $\textit{index\/}(\textit{drop\/}(R_i), A) = j$ and $\text{val}_{S_0}(R_i,(v_1 ,\dots, v_{n_i})) = \textbf{true}$ hold, otherwise we obtain $\textit{undef\/}$.

This gives rise to $\text{val}_{S_0}(\textit{self\/}) = t_0$ with a tree value $t_0 = \textit{label\_hedge\/}(\texttt{self}, t_0^{sig} \; t_0^{rule})$ with $t_0^{sig} = \textit{label\_hedge\/}(\texttt{signature}, t_0^0 \, t_0^1 \, \hat{t}_0^1 ,\dots, t_0^k \, \hat{t}_0^k)$ with
\begin{gather*}
t_0^0 = \textit{label\_hedge\/}(\texttt{func}, \texttt{name}\langle \textit{index\/} \rangle \; \texttt{arity}\langle 2 \rangle) , \\
t_0^i = \textit{label\_hedge\/}(\texttt{func}, \texttt{name}\langle R_i \rangle \; \texttt{arity}\langle n_i \rangle) , \;\text{and}\\
\hat{t}_0^i = \textit{label\_hedge\/}(\texttt{func}, \texttt{name}\langle \hat{R}_i \rangle \; \texttt{arity}\langle n_i + 1 \rangle).
\end{gather*}

We further have $t_0^{rule} = \textit{label\_hedge\/}(\texttt{rule}, t_r)$ with a tree $t_r$ representing a rule $r$ that we define as \texttt{PAR} $r_{init}$ $r_{join}$ \texttt{ENDPAR}. For the definition of the rules $r_{init}$ and $r_{join}$ we use an additional control-state variable \textit{mode\/}, i.e. a function symbol of arity $0$  in the signature. The rule $r_{init}$ can be defined as follows\footnote{Note that the set comprehension assigned to $t_i$ and $t_j$ can be avoided using a \texttt{PAR}-rule combining rules that parameterised by the fixed set $\mathcal{A}$.}:

\begin{tabbing}
xxx\=xxxxx\=xxxxx\=xxxxx\=xxxxx\=xxxxx\=xxxxx\= \kill
\> \texttt{IF} \> \textit{mode\/} = init \\
\> \texttt{THEN} \> \texttt{LET} \> $J_{ij}$ = NewFunc, \\
\>\>\> $\hat{J}_{ij}$ = NewFunc \texttt{IN} \\
\>\> \texttt{LET} \> $t_i = \{ A \in \mathcal{A} \mid \textit{index\/}(\textit{drop\/}(R_i), A) \neq \textit{undef\/} \}$ , \\
\>\>\> $t_j = \{ A \in \mathcal{A} \mid \textit{index\/}(\textit{drop\/}(R_j), A) \neq \textit{undef\/} \}$ \texttt{IN} \\
\>\> \texttt{PAR} \> \texttt{LET} $n = \textit{card\/}(t_i \cup t_j)$ \texttt{IN} \\
\>\>\> $\textbf{I} o . (\textit{root\/}(\textit{self\/}) \prec_c o \wedge \textit{label\/}(o) = \texttt{signature}) \leftleftarrows^{\textit{right\_extend\/}}$ \\
\>\>\>\>\> $\textit{label\_hedge\/}(\texttt{func}, \texttt{name} \langle J_{ij} \rangle \; \texttt{arity} \langle n \rangle)$ ,\\
\>\>\>\>\> $\textit{label\_hedge\/}(\texttt{func}, \texttt{name} \langle \hat{J}_{ij} \rangle \; \texttt{arity} \langle n+1 \rangle)$ \\
\>\>\> \dots $\textit{new\_index\/}(J_{ij}, A)$ \dots \% for all $A \in \mathcal{A}$ \\
\>\>\> \textit{mode\/} := join \\
\>\> \texttt{ENDPAR} \\
\> \texttt{ENDIF}
\end{tabbing}

where the rule $\textit{new\_index\/}(J_{ij}, A)$ for $A \in \mathcal{A}$ is defined as\footnote{Note that we use here $n_i$, i.e. the arity of the function symbol $R_i$. This can be defined using a \texttt{LET}-construct and $\mathbf{I}$, by means of which the value is extracted from the subtree representing the signature. We omit the details how to write this term.}

\begin{tabbing}
xxx\=xxxxx\=xxxxx\=xxxxx\=xxxxx\=xxxxx\=xxxxx\= \kill
\> \texttt{IF} \> $A \in t_i$ \\
\> \texttt{THEN} \> $\textit{index\/}(J_{ij}, A) := \textit{index\/}(\textit{drop\/}(R_i), A)$ \\
\> \texttt{ELSE} \> \texttt{IF} \> $A \in t_j - t_i$ \\
\>\> \texttt{THEN} \> $\textit{index\/}(J_{ij}, A) := n_i + \textit{index\/}(\textit{drop\/}(R_j), A) -$\\
\>\>\>\quad $\textit{card\/}(\{ B \in t_i \cap t_j \mid \textit{index\/}(\textit{drop\/}(R_j), B) < \textit{index\/}(\textit{drop\/}(R_j), A) \})$ \\
\>\> \texttt{ENDIF} \\
\> \texttt{ENDIF}
\end{tabbing}

The rule $r_{join}$ can be defined as \texttt{PAR} \dots $\textit{join\/}(x_1,\dots,x_n)$ \dots (for all $(x_1,\dots,x_n) \in \mathcal{D}^n$) \textit{mode\/} := halt \texttt{ENDPAR}, where the rules $\textit{join\/}(x_1,\dots,x_n)$ are defined as follows:

\begin{tabbing}
xxx\=xxxxx\=xxxxx\=xxxxx\=xxxxx\=xxxxx\=xxxxx\= \kill
\> \texttt{IF} \> \textit{mode\/} = join \\
\> \texttt{THEN} \> \texttt{LET} \> $t_i = \{ A \in \mathcal{A} \mid \textit{index\/}(\textit{drop\/}(R_i), A) \neq \textit{undef\/} \}$ , \\
\>\>\> $t_j = \{ A \in \mathcal{A} \mid \textit{index\/}(\textit{drop\/}(R_j), A) \neq \textit{undef\/} \}$ \texttt{IN} \\
\>\> \texttt{LET} \> $n = \textit{card\/}(t_i \cup t_j)$ \texttt{IN} \\
\>\> \texttt{LET} \> \dots \% for all $k$ with $1 \le k \le n_i + n_j -n$ \% \\
\>\>\>\> $\hat{x}_k = \textbf{I} x . \bigwedge_{1 \le l \le n_i} \exists A. \textit{index\/}(\textit{drop\/}(R_j), A) = k \wedge$ \\
\>\>\>\>\> $\textit{index\/}(\textit{drop\/}(R_i), A) = l \Rightarrow x = x_l$ \dots \texttt{IN} \\
\>\> \texttt{IF} \> $R_i(x_1,\dots,x_{n_i}) = \textbf{true} \wedge$ \\
\>\>\>\>\> $R_j(\hat{x}_1 ,\dots, \hat{x}_{n_i+n_j-n}, x_{n_i+1}, \dots, x_n) = \textbf{true}$ \\
\>\> \texttt{THEN} \> \texttt{PAR} \> $J_{ij}(x_1,\dots,x_n) := \textbf{true}$ \\
\>\>\>\> \dots $\textit{join\/}^\prime(A,x_1,\dots,x_n)$ \dots \% for all $A \in \mathcal{A}$ \% \\
\>\>\> \texttt{ENDPAR} \\
\>\> \texttt{ENDIF} \\
\> \texttt{ENDIF}
\end{tabbing}

where the rule $\textit{join\/}^\prime(A,x_1,\dots,x_n)$ (for $A \in \mathcal{A}$ and $x_1,\dots,x_n \in \mathcal{D}$) is defined as \texttt{IF} $\textit{index\/}(J_{ij}, A) = j$ \texttt{THEN} $\hat{J}_{ij}(A,x_1,\dots,x_n) := x_j$ \texttt{ENDIF}.

\end{example}

Note that in this example we merely used reflection to introduce new function symbols. Reflection is required to determine the ``type'', i.e. the set of attributes, of the join relation. Of course, the example makes more sense, if it is embedded into a full specification of ``type-safe'' relational algebra, where it is known that only the natural join cannot be expressed by parametric polymorphism.

\begin{example}\label{bsp-src}
Let us assume a fixed finite set $\mathcal{D}$, defined in the background. Besides $\textit{self\/}$ the signature comprises 0-ary functions symbols \textit{card\/}, \textit{parity\/} and \textit{mode\/}---the latter one set to init in an initial state---and a unary function symbol \textit{set\/}. In an initial state we require $\textit{set\/}(x) \in \{ \textbf{true}, \textbf{false} \}$ for all $x \in \mathcal{D}$ and $\textit{set\/}(x) = \textit{undef\/}$ for all $x \notin \mathcal{D}$. Locations with function symbol \textit{set\/} will never be altered. The rule represented in \textit{self\/} in an initial state takes the following form:

\begin{tabbing}
xxx\=xxxxx\=xxxxx\=xxxxx\=xxxxx\=xxxxx\=xxxxx\= \kill
\> \texttt{IF} \> \textit{mode\/} = init \\
\> \texttt{THEN} \> \texttt{PAR} \> \textit{card\/} $:= 0$ \\
\>\>\> $\dots \textit{count\/}(x) \dots$ \% for all $x \in \mathcal{D}$ \\
\>\>\> \textit{mode\/} $:=$ count \\
\>\> \texttt{ENDPAR} \\
\> \texttt{ELSE} \> \texttt{IF} \> \textit{mode\/} = count \\
\>\> \texttt{THEN} \> \texttt{PAR} \> \textit{mode\/} $:=$ eval \texttt{ENDPAR} \\
\>\> \texttt{ELSE} \> \texttt{IF} \> \textit{mode\/} = eval \\
\>\>\> \texttt{THEN} \> \texttt{PAR} \> \textit{parity\/} $:=$ \textit{card\/} mod 2 \\
\>\>\>\>\> \textit{mode\/} $:=$ halt \\
\>\>\>\> \texttt{ENDPAR} \\
\>\>\> \texttt{ENDIF} \\
\>\> \texttt{ENDIF} \\
\> \texttt{ENDIF}
\end{tabbing}

Here the rules $\textit{count\/}(x)$ for $x \in \mathcal{D}$ are defined as follows\footnote{Note that in this rule we use a partial assignment to $o$, though $o$ is only a logical variable defined by the \texttt{LET}-construct. So, stricly speaking this is not correct. However, in the previous section we remarked that the nodes of a tree are values that define nullary function symbols and consequently sublocations---i.e. $o$ is actually $\textit{raise\/}(o)$---but we wanted to drop this subtle distinction.}

\begin{tabbing}
xxx\=xxxxx\=xxxxx\=xxxxx\=xxxxx\=xxxxx\=xxxxx\= \kill
\> \texttt{IF} \> $\textit{set\/}(x) = \textbf{true}$ \\
\> \texttt{THEN} \> \texttt{LET} \> $o = \mathbf{I} o_{32} . \exists o_1, o_2, o_{31} . \textit{root\/}(\textit{self\/}) \prec_c o_1 \prec_c o_2 \prec_c o_{31} \prec_s o_{32} \wedge$ \\
\>\>\>\>\> $\textit{label\/}(o_1) = \texttt{rule} \wedge \textit{label\/}(o_{31}) = \texttt{bool}$ \\
\>\> \texttt{IN} \> $o \leftleftarrows^{\textit{right\_extend\/}} \textit{label\_hedge\/}(\texttt{partial},$ \\ 
\>\>\>\>\> $\texttt{func}\langle \textit{card\/} \rangle, \texttt{func}\langle + \rangle, \texttt{term}\langle \epsilon \rangle, \texttt{term}\langle 1 \rangle)$ \\
\> \texttt{ENDIF}
\end{tabbing}

Thus, in an initial state the reflective algorithm initialises \textit{card\/} and creates the rule---addition of $1$ to \textit{card\/} for every element in the input set---that will be applied next, when the control-state variable \textit{mode\/} takes the value count. Hence, the cardinality of the input set is computed, and in the final step the parity is derived from it.
\end{example}

Note that in this example the same result could also be achieved by using directly partial updates on \textit{card\/}. In general, however, creating rules for all elements of a set may involve more complex operations. We merely used Example \ref{bsp-src} to illustrate reflection without claiming that the problem at hand can only be solved in this way.

Note also that Example \ref{bsp-src} could be easily extended to compute the parity of an arbitrary finite set with elements in a non-necesaarily finite domain $\mathcal{D}$, if we drop the restriction to sequential algorithms\footnote{This would require to use unbounded parallelism, which in ASMs is supported through \texttt{FORALL}-rules \cite{boerger:2003}.}. Alternatively, the problem could also have been approched using non-deterministic ASMs with \texttt{CHOOSE}-rules.

\section{Axiomatisation of Reflective Sequential Algorithms}\label{sec:rsa}

The celebrated sequential ASM thesis needs only three simple, intuitive postulates to define sequential algorithms (for details see the deep discussion in \cite{gurevich:tocl2000}):

\begin{description}

\item[Sequential time.]  Each sequential algorithm proceeds in sequential time using states, initial states and transitions from states to successor states, i.e. there is a set $\mathcal{S}$ of states, a subset $\mathcal{I} \subseteq \mathcal{S}$ of initial states, and a {\em transition function} $\tau :\mathcal{S} \rightarrow \mathcal{S}$, which maps a state $S \in \mathcal{S}$ to its successor state $\tau(S)$.

\item[Abstract state.] States $S \in \mathcal{S}$ are universal algebras (aka Tarski structures), i.e. functions resulting from the interpretation of a signature $\Sigma$, i.e. a set of function symbols, over a base set. The sets $\mathcal{S}$ and $\mathcal{I}$ of states and initial states, respectively, are closed under isomorphisms. States $S$ and successor states $\tau(S)$ have the same base set, and if $\sigma$ is an isomorphism defined on $S$, then also $\tau(\sigma(S)) = \sigma(\tau(S))$ holds.

\item[Bounded exploration.] There exists a finite set $W$ of ground terms (called {\em bounded exploration witness}) such that the difference between a state and its successor state (called {\em update set}) is uniquely determined by the values of these terms in the state.

\end{description}

In this section we discuss how to modify these postulates in order to define reflective sequentials algorithms (RSAs).

\subsection{Sequential Time}

In principle, also reflective sequential algorithms proceed in sequential time. However, the crucial feature of reflection is that in every step the algorithm may change. So the question is, whether this impacts on the postulate. We argue that it is always possible to have a finite representation of a sequential algorithm, which can therefore be subsumed in the notion of state, while the sequential time postulate can remain unchanged.

Our argument is grounded in the axiomatisation of sequential algorithms and the proof of the sequential ASM thesis in \cite{gurevich:tocl2000}. According to the postulates it suffices to represent a sequential algorithm $P$ by a set of pairs $(S,\Delta_P(S))$ comprising states $S$ and the update set of $P$ in that state. A consequence of the proof of the sequential ASM thesis is that update sets $\Delta_P(S_i)$ ($i=1,2$) are equal, if the states $S_1$ and $S_2$ are $W$-equivalent for a fixed bounded exploration witness $W$. We have $S_1 \sim_W S_2$ iff $E_{S_1} = E_{S_2}$, where $E_S$ is the equivalence relation on $W$ defined by $E_S(t_1,t_2) \equiv val_S(t_1) = val_S(t_2)$, where $val_S(t)$ denotes the interpretation of a ground term $t$ as a value in the base set of a state $S$. It is therefore sufficient to replace the state $S$ by a condition $\varphi_{S}$, which evaluates to true on states that are $W$-equivalent to $S$. As there can only be finitely many $W$-equivalence classes, we obtain an abstract finite representation by a finite set of pairs $(\varphi_i,\Delta_i)$ ($i=1,\dots,k$).

\begin{postulate}[Sequential Time Postulate]\label{p-time}
A {\em reflective sequential algorithm} $\mathcal{A}$ comprises a set $\mathcal{S}$ of {\em states}, a subset $\mathcal{I} \subseteq \mathcal{S}$ of {\em initial states}, and a {\em one-step transition relation} $\tau \subseteq \mathcal{S} \times \mathcal{S}$. Whenever $\tau(S) = S^\prime$ holds, the state $S^\prime$ is called the {\em successor state} of the state $S$.

\end{postulate}

A {\em run} of an RSA $\mathcal{A}$ is then given by a sequence $S_0, S_1, \dots$ of states $S_i \in \mathcal{S}$ with an initial state $S_0 \in \mathcal{I}$ and $S_{i+1} = \tau(S_i)$.

\subsection{Abstract States}

Our argumentation above further justifies to postulate states $S$ of RSAs to be defined by structures over a signature $\Sigma_S$. First recall some basic definitions. 

A {\em signature} $\Sigma$ is a finite set of function symbols, and each $f \in \Sigma$ is associated with an {\em arity} $\text{ar}(f) \in \mathbb{N}$. A {\em structure} over $\Sigma$ comprises a {\em base set} $B$ and an {\em interpretation} of the function symbols $f \in \Sigma$ by functions $f_S : B^{\text{ar}(f)} \rightarrow B$. An {\em isomorphism} $\sigma$ between two structures is given by a bijective mapping $\sigma : B \rightarrow B^\prime$ between the base sets that is extended to the functions by $\sigma(f_B)(\sigma(a_1),\dots,\sigma(a_n)) = \sigma(f_B(a_1,\dots,a_n))$ for all $a_i \in B$ and $n = \text{ar}(f)$. For convenience to capture partial functions we may assume that base sets contain a constant \textit{undef\/} and that each isomorphism $\sigma$ maps \textit{undef\/} to itself.

In order to capture reflection the following changes to the abstract state postulate for sequential algorithms must be taken into account:

\begin{enumerate}

\item As the algorithm that is to be applied in state $S$ is represented in $S$, there must exist a subsignature $\Sigma_{alg} \subseteq \Sigma_S$ for this, and we can assume a function that maps the restriction of $S$ to $\Sigma_{alg}$ to a sequential algorithm $\mathcal{A}(S)$ that operates on states defined over $\Sigma_S$.

\item As only a single RSA is to be considered, it does not make sense to permit arbitrary sequential algorithms $\mathcal{A}(S_0)$ in initial state $S_0 \in \mathcal{I}$. We should therefore require $\mathcal{A}(S_0) = \mathcal{A}(S_0^\prime)$ for all initial states $S_0, S_0^\prime \in \mathcal{I}$. This unique $\mathcal{A}(S_0)$ is then the {\em initial sequential algorithm}.

\item As $\mathcal{A}(S)$ manipulates also locations over $\Sigma_{alg}$ we have in general $\mathcal{A}(\tau(S)) \neq \mathcal{A}(S)$. In particular, the signature may have changed, i.e. $\Sigma_{\tau(S)} \neq \Sigma_S$. Without loss of generality we can assume that only new function symbols will be added, i.e. $\Sigma_S \subseteq \Sigma_{\tau(S)}$.

\item As states comprise representations of sequential algorithms, we cannot simply assume arbitrary base sets $B$, though the restriction of $S$ to $\Sigma_S - \Sigma_{alg}$ should still be a structure with a base set $B$, while for $S$ as a whole we need an extension $B_{ext}$. This extension must at least contain terms over $\Sigma_S$ and $B$. Let us reserve the notation {\em (standard) base set} for the arbitrary $B$, and call $B_{ext}$ an {\em extended base set}.

\item As for the sequential ASM thesis it follows that there is a unique minimal consistent update set $\Delta(S)$ capturing the difference between the state $S$ and its successor $\tau(S)$, and $\tau(S)$ results from applying $\Delta(S)$ to $S$ (denoted as $\tau(S) = S + \Delta(S)$). However, the algorithm yielding $\Delta(S)$ is $\mathcal{A}(S)$, i.e. we have $\tau(S) = S + \Delta_{\mathcal{A}(S)}(S)$.

\end{enumerate}

In the last condition we make use of the unique consistent update set $\Delta(S)$ defined by the algorithm in state $S$. This requires some explanation. As for sequential ASMs a {\em location} $\ell$ in state $S$ is a pair $(f,(v_1,\dots,v_n))$ with a function symbol $f \in \Sigma_S$ of arity $n$ and values $v_1 ,\dots,v_n$ in the (extended) base set. If the interpretation defines $f_S(v_1,\dots,v_n) = v_0$, then the value $v_0 \in B_{ext}$ is called the {\em value} of location $\ell$ in $S$, which we denote as $\text{val}_S(\ell)$. An {\em update} is a pair $(\ell,v)$ comprising a location and a value. An {\em update set} is a set $\Delta$ of updates. An update set is {\em consistent} iff $(\ell,v_1) \in \Delta$ and $(\ell,v_2) \in \Delta$ imply $v_1 = v_2$. As for sequential ASMs we define the state $S^\prime = S + \Delta$ resulting from the {\em application of $\Delta$ to $S$} by
\[ val_{S^\prime}(\ell) = \begin{cases} v &\text{if}\; (\ell,v) \in \Delta \\
\text{val}_S(\ell) &\text{else} \end{cases} \]

provided $\Delta$ is consistent, otherwise we set $S + \Delta = S$. Concerning a state $S$ and its successor $\tau(S)$ we obtain a set $\textit{Diff\/}\, = \{ \ell \mid \text{val}_{\tau(S)}(\ell) \neq \text{val}_S(\ell) \}$ of those locations, where the states differ\footnote{Note that \textit{Diff\/} contains also all locations with new function symbols taken from the reserve.}. Then $\Delta(S) = \{ (\ell,v) \mid \ell \in \textit{Diff\/} \wedge v = \text{val}_{\tau(S)}(\ell) \}$ is a consistent update set with $S + \Delta(S) = \tau(S)$ and furthermore, $\Delta(S)$ is minimal with this property.

Our considerations above lead us to the following modified abstract state postulate for reflective sequential algorithms.

\begin{postulate}[Abstract State Postulate]\label{p-state}
States of a reflective sequential algorithm must satisfy the following conditions:

\begin{enumerate}

\item Each state $S \in \mathcal{S}$ of an RSA $\mathcal{A}$ is a structure over some signature $\Sigma_S$, and an extended base set $B_{ext}$. The extended base set $B_{ext}$ contains at least a {\em standard base set} $B$ and all terms defined over $\Sigma_S$ and $B$.

\item The sets $\mathcal{S}$ and $\mathcal{I}$ of states and initial states, respectively, are closed under isomorphisms. 

\item Whenever $\tau(S) = S^\prime$ holds, then $\Sigma_S \subseteq \Sigma_{\tau(S)}$, the states $S$ and $S^\prime$ have the same standard base set, and if $\sigma$ is an isomorphism defined on $S$, then also $\tau(\sigma(S)) =\sigma(\tau(S))$ holds.

\item There exists a fixed subsignature $\Sigma_{alg} \subseteq \Sigma_S$ for all $S$ and a function that maps the restriction of $S$ to $\Sigma_{alg}$ to a sequential algorithm $\mathcal{A}(S)$ with signature $\Sigma_S$, such that $\tau(S) = S + \Delta_{\mathcal{A}(S)}(S)$ holds for the successor state $\tau(S)$.

\item For all initial states $S_0, S_0^\prime \in \mathcal{I}$ we have $\mathcal{A}(S_0) = \mathcal{A}(S_0^\prime)$.

\end{enumerate}

\end{postulate}

\subsection{Background}

Same as in for sequential algorithms we need some to formulate minimum requirements for the {\em background}, but this time the requirements are too elaborate to leave them implicit. Similar to Definition \ref{def-rsasm-background} these requirements concern the reserve, truth values, tuples, \textit{raise\/} and \textit{drop\/}, but it leaves open how sequential algorithms are represented by structures over $\Sigma_{alg}$. In view of our discussion in the previous subsection the following postulate is thus straightforward.

\begin{postulate}[Background Postulate]\label{p-background}
The {\em background of an RSA} is defined by a background class $\mathcal{K}$ over a background signature $V_K$. It must contain an infinite set \textit{reserve\/} of reserve values and an infinite set $\Sigma_{res}$ of reserve function symbols, the equality predicate, the undefinedness value {\em undef}, truth values and their connectives, tuples and projection operations on them, natural numbers and operations on them, and constructors and operators that permit the representation and update of sequential algorithms. 

The background must further provide functions: $\textit{drop\/}: \mathbb{T}_S \rightarrow B_{ext}$ and $\textit{raise\/}: B_{ext} \rightarrow \mathbb{T}_S$ for each base set $B$ and extended base set $B_{ext}$, and an {\em extraction function} $\beta: \mathbb{T}_S \rightarrow \bigcup_{n \in \mathbb{N}} \mathbb{T}^n$, which assigns to each term defined over a signature $\Sigma_S$ and the extended base set $B_{ext}$ a tuple of terms in $\mathbb{T}$ defined over $\Sigma_S - \Sigma_{alg}$ and $B$.

\end{postulate}

\subsection{Bounded Exploration}

Let us finally discuss a generalisation of the bounded exploration postulate. In principle, an RSA may increase its signature in every step, so a priori it is impossible to find a fixed finite bounded exploration witness that determines update sets in every state. However, in every state $S$ we have a representation of the actual sequential algorithm $\mathcal{A}(S)$ as requested by the abstract state postulate \ref{p-state}. As a sequential algorithm $\mathcal{A}(S)$ possesses a bounded exploration witness $W_S$, i.e. a finite set of terms such that $\Delta_{\mathcal{A}(S)}(S_1) = \Delta_{\mathcal{A}(S)}(S_2)$ holds, whenever states $S_1$ and $S_2$ coincide on $W_S$. We can always assume that $W_S$ just contains terms that must be evaluated in a state to determine the update set in that state. Thus, though $W_S$ is not unique we may assume that $W_S$ is somehow contained in the finite representation of $\mathcal{A}(S)$. This implies that the terms in $W_S$ result by interpretation from terms that appear in this representation, i.e. $W_S$ can be obtained using the extraction function $\beta$ that exists by the background postulate \ref{p-background}. Consequently, there must exist a finite set of terms $W$ such that its interpretation in a state yields both values and terms, and the latter ones represent $W_S$. We will continue to call $W$ a {\em bounded exploration witness}. Then the interpretation of $W$ and the interpretation of the extracted terms in any state suffice to determine the update set in that state. This leads to our {\em bounded exploration postulate} for RSAs. 

\begin{definition}\label{def-coincidence}\rm

Let $S$ and $S^\prime$ be states of an RSA, and let $W$ be a set of ground terms. We say that $S$ and $S^\prime$ {\em strongly coincide} over $W$ iff the following holds:

\begin{itemize}

\item For every $t \in W$ we have $\text{val}_S(t) = \text{val}_{S^\prime}(t)$.   

\item For every $t \in W$ with $\text{val}_S(t) \in \mathbb{T}_S$ and $\text{val}_{S^\prime}(t) \in \mathbb{T}_{S^\prime}$ we have \\
$\text{val}_S(\beta(t)) = \text{val}_{S^\prime}(\beta(t))$.
 
\end{itemize}

\end{definition}

Furthermore, we have to take a subtlety into account, which was not needed to be handled in the sequential ASM thesis. According to the background postulate the background structure must provide constructors and operators that permit the representation of sequential algorithms. While this leaves open, how such algorithms shall be represented by values, we can assume that bulk structures will be required. For instance, for rsASMs as defined in Section \ref{sec:rasm} we exploited trees, but according to our motivation at the beginning of Section \ref{sec:rsa} we could have used sets and tuples as well. We may further assume that the bulk values representing an algorithm are updated by several operations in one step, i.e. shared updates as defined in \cite{schewe:ejc2011} defined by an operator and arguments may be used to define updates. If operators are compatible\footnote{Conditions for the compatibility of shared updates have been discussed intensively in \cite{schewe:ejc2011}.}, such shared updates are merged into a single update of a bulk value. In order to capture this merging we extend the bounded exploration witness $W$ as follows.

A term indicating a shared update takes the form $op(f(t_1,\dots,t_n),t_1^\prime,\dots,t_m^\prime)$. Here $op$ is the operator that is to be applied, $f(t_1,\dots,t_n)$ evaluates in every state $S$ to a value $\text{val}_S(\ell)$ of some location $\ell = (f, (\text{val}_S(t_1) ,\dots, \text{val}_S(t_n)))$, and $t_1^\prime,\dots,t_m^\prime$ evaluate to the other arguments of the shared update. 

\begin{definition}\label{def-aggregation}\rm

If $op_1(f(t_1,\dots,t_n),t_{11}^\prime,\dots,t_{1m_1}^\prime) ,\dots$, $op_k(f(t_1,\dots,t_n),t_{k1}^\prime,\dots,t_{km_k}^\prime)$ are several terms of this form occurring in $W$ or in $\beta(W)$, then the term
\[ \hat{t} = op_1(op_2(\dots (op_k(f(t_1,\dots,t_n),t_{k1}^\prime,\dots,t_{km_k}^\prime)),\dots, 
t_{21}^\prime,\dots,t_{2m_2}^\prime), t_{11}^\prime,\dots,t_{1m_1}^\prime) \]
will be called an {\em aggregation term over $f(t_1,\dots,t_n)$}, and the tuple $(\text{val}_S(\hat{t})$, $ \text{val}_S(t_1) ,\dots, \text{val}_S(t_n))$ will be called an {\em aggregation tuple}. 

\end{definition}

Then we can always assume that the update set $\Delta_{\mathcal{A}}(S)$ is the result of collapsing an update multiset $\ddot{\Delta}_{\mathcal{A}}(S)$.

\begin{postulate}[Bounded Exploration Postulate]\label{p-bounded}
For every RSA $\mathcal{A}$ there is a finite set $W$ of ground terms such that $\ddot{\Delta}_{\mathcal{A}}(S) = \ddot{\Delta}_{\mathcal{A}}(S^\prime)$ holds (and consequently also $\Delta_{\mathcal{A}}(S) = \Delta_{\mathcal{A}}(S^\prime)$) whenever the states $S$ and $S^\prime$ strongly coincide over $W$. Furthermore, $\ddot{\Delta}_{\mathcal{A}}(\text{res}(S,\Sigma_{alg})) = \ddot{\Delta}_{\mathcal{A}}(\text{res}(S^\prime,\Sigma_{alg}))$ holds (and consequently also $\Delta_{\mathcal{A}}(\text{res}(S,\Sigma_{alg})) = \Delta_{\mathcal{A}}(\text{res}(S^\prime,\Sigma_{alg}))$) whenever the states $S$ and $S^\prime$ coincide over $W$. Here, $\text{res}(S,\Sigma_{alg})$ is the structure resulting from $S$ by restriction of the signature to $\Sigma_{alg}$.

\end{postulate}

Any set $W$ of ground terms as in the bounded exploration postulate \ref{p-bounded} will be called a {\em (reflective) bounded exploration witness} (R-witness) for $\mathcal{A}$. The four postulates capturing sequential time, abstract states, background and bounded exploration together provide an language-independent axiomatisation of the notion of a reflective sequential algorithm.

\begin{definition}\label{def-rsa}\rm

A {\em reflective sequential algorithm} (RSA) is defined by the sequential time postulate \ref{p-time}, the abstract state postulate \ref{p-state}, the background postulate \ref{p-background}, and the bounded exploration postulate \ref{p-bounded}.

\end{definition}

\subsection{Behavioural Equivalence}

According to Gurevich's definition in \cite{gurevich:tocl2000} two sequential algorithms are {\em behaviourally equivalent} iff they have the same sets of states and initial states and the same transition function $\tau$. Consequently, behaviourally equivalent sequential algorithms have the same runs. This may be weakened requesting directly that two sequential algorithms are behaviourally equivalent iff they have the same runs, e.g. one of the algorithms may have additional states that never appear in a run.

While for sequential algorithms this distinction is a mere subtlety without further consequences, it gains importance for RSAs. If we adopted without change the definition of behavioural equivalence from \cite{gurevich:tocl2000}, then the ``self-representation'', i.e. the substructure over $\Sigma_{alg}$ would be required to be exactly the same in corresponding states. However, the way how to realise such a representation was deliberately left open in the axiomatisation. For instance, rsASMs use a single $0$-ary function symbol and sophisticated tree structures, whereas at the beginning of this section we outlined that it is likewise possible to use sets of pairs comprising a Boolean condition and an update set. Therefore, instead of claiming identical states it suffices to only require identity for the restriction to $\Sigma_S - \Sigma_{alg}$, while structures over $\Sigma_{alg}$ only need to define behaviourally equivalent algorithms. However, this is still too restrictive, as behavioural equivalence of $\mathcal{A}(S)$ and $\mathcal{A}(S^\prime)$ (even, if these are considered merely as sequential, non-reflective algorithms) would still imply identical changes to the self-representation.

Therefore, we can also restrict our attention to the behaviour of the algorithms $\mathcal{A}(S)$ on states over $\Sigma_S - \Sigma_{alg}$. If $\mathcal{A}(S)$ and $\mathcal{A}(S^\prime)$ produce significantly different changes to the represented algorithm, the next state in a run of the reflective algorithm will reveal this. This leads to the following definition of behavioural equivalence for RSAs.

\begin{definition}\label{def-equivalence}\rm

Two RSAs $\mathcal{A}$ and $\mathcal{A}^\prime$ are {\em behaviourally equivalent} iff there exists a bijection $\Phi$ between runs of $\mathcal{A}$ and those of $\mathcal{A}^\prime$ such that for every run $S_0, S_1, \dots$ of $\mathcal{A}$ we have that for all states $S_i$ and $\Phi(S_i)$ 

\begin{enumerate}

\item their restrictions to $\Sigma_{S_i} - \Sigma_{alg}$ coincide, and

\item the restrictions of $\mathcal{A}(S_i)$ and $\mathcal{A}^\prime(\Phi(S_i))$ to $\Sigma_{S_i} - \Sigma_{alg}$ are behaviourally equivalent sequential algorithms.

\end{enumerate}

\end{definition}

Note that in our preliminary work in \cite{ferrarotti:psi2017} we did not request that states are closed under isomorphisms. Instead we permitted that the representations of sequential algorithms restricted to $\Sigma_S - \Sigma_{alg}$ are just behaviourally equivalent in states that are otherwise isomorphic. This would permit significantly different updates to the self representation to be subsumed in a single reflective algorithm. However, for the behavioural theory this is not plausible. We need Definition \ref{def-equivalence} in the construction of a behaviourally equivalent rsASM for an arbitrary RSA, for which no assumption on a specific self representation can be made.

\subsection{RSAs Defined by rsASMs}

We now show that rsASM satisfy the defining postulates, i.e. each rsASM defines an RSA. This constitutes the plausibility part of our reflective sequential ASM thesis.

\begin{theorem}\label{thm:plausibility}

Every reflective ASM $\mathcal{M}$ is a RSA.

\end{theorem}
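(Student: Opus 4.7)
My plan is to verify each of the four RSA postulates for an arbitrary rsASM $\mathcal{M}$, treating the first three as near-immediate consequences of the definitions in Section \ref{sec:rasm} and concentrating real effort on the bounded-exploration postulate.

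Sequential time (Postulate \ref{p-time}) is immediate from Definition \ref{def-rsasm}, which supplies $\mathcal{S}$, $\mathcal{I}$ and $\tau$. For the abstract-state postulate \ref{p-state} I set $\Sigma_{alg} = \{\textit{self\/}\}$ and define $\mathcal{A}(S)$ to be the sequential ASM whose rule is $r_S = \textit{raise\/}(\textit{rule\/}(\text{val}_S(\textit{self\/})))$ over the signature $\Sigma_S = \textit{raise\/}(\textit{signature\/}(\text{val}_S(\textit{self\/})))$; closure of $\mathcal{S}$ and $\mathcal{I}$ under isomorphisms is built into Definition \ref{def-rsasm}, the signature inclusion $\Sigma_S \subseteq \Sigma_{\tau(S)}$ holds because updates to \textit{self\/} only append \texttt{func}-subtrees drawn from the reserve $\Sigma_{res}$, and the coincidence of all initial states on \textit{self\/} forces $\mathcal{A}(S_0) = \mathcal{A}(S_0')$ for all $S_0, S_0' \in \mathcal{I}$. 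The background postulate \ref{p-background} is essentially by design: Definition \ref{def-rsasm-background} already provides every item it asks for, with the tree algebra from Section \ref{sec:asm} playing the role of the constructors and operators for representing and updating sequential algorithms, and with \textit{raise\/}, \textit{drop\/} and $\beta$ supplied explicitly.

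The real content lies in the bounded-exploration postulate \ref{p-bounded}. I would take $W = \{\textit{self\/}\}$ as the R-witness. Assume $S$ and $S'$ strongly coincide over $W$ in the sense of Definition \ref{def-coincidence}. Then $\text{val}_S(\textit{self\/}) = \text{val}_{S'}(\textit{self\/})$, so the self-representations, and hence the rules $r_S = r_{S'}$ together with the signatures $\Sigma_S = \Sigma_{S'}$, agree. The second clause of Definition \ref{def-coincidence} applied to \textit{self\/} then yields $\text{val}_S(\beta(\textit{self\/})) = \text{val}_{S'}(\beta(\textit{self\/}))$, and by the inductive definition of $\beta$ given after Definition \ref{def-rsasm-background} the tuple $\beta(\textit{self\/})$ enumerates precisely the ground terms whose values the sequential-ASM interpretation reads when computing $\ddot{\Delta}_{r_S}(S)$ (assignment right-hand sides and arguments, branching conditions, let-substituted bodies, and partial-assignment arguments aggregated through the operator $op$ as in Definition \ref{def-aggregation}). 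Consequently $\ddot{\Delta}_{\mathcal{M}}(S) = \ddot{\Delta}_{r_S}(S) = \ddot{\Delta}_{r_{S'}}(S') = \ddot{\Delta}_{\mathcal{M}}(S')$, whence also the collapsed update sets coincide. For the weaker clause on $\text{res}(S,\Sigma_{alg})$, agreement on \textit{self\/} already fixes the entire restricted state, so the conclusion is immediate.

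The main obstacle I anticipate is making the appeal to $\beta(\textit{self\/})$ fully rigorous: strictly speaking $\beta$ is defined on rule terms, not on the whole \textit{self\/}-tree, so I must either extend $\beta$ by one level (stripping off the \texttt{self} root and recursing into the \texttt{rule}-subtree) or replace the witness by a small constant set such as $\{\textit{rule\/}(\textit{self\/})\}$. A short structural induction over the rule shapes listed in Subsection \ref{ssec:treerep} is then needed to confirm that for each shape every ground term read during the step is enumerated in the extracted tuple, and that the shared updates generated by \textbf{partial} assignments aggregate into the multiset exactly as described by Definition \ref{def-aggregation}, so that strong coincidence over $W$ really does pin down $\ddot{\Delta}_{\mathcal{M}}(S)$ up to equality.
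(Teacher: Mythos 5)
Your proposal is correct and follows essentially the same route as the paper: the first three postulates are discharged by inspection of Definitions \ref{def-rsasm-background} and \ref{def-rsasm}, and bounded exploration is established with the same witness $W = \{\textit{self\/}\}$ and the same chain $\ddot{\Delta}_{\mathcal{M}}(S) = \ddot{\Delta}_{r_S}(S) = \ddot{\Delta}_{r_{S'}}(S') = \ddot{\Delta}_{\mathcal{M}}(S')$. Your closing remark about $\beta$ strictly being defined on rule terms rather than on the whole \textit{self\/}-tree is a fair point of rigour that the paper itself glosses over by writing $\beta(\text{val}_S(\textit{self\/}))$ directly.
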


\begin{proof}
First consider sequential time. According to Definition \ref{def-rsasm} the rsASM $\mathcal{M}$ comprises a set $\mathcal{I}$ of initial states defined over an (initial) signature $\Sigma$. Other states are defined through reachability by finitely many applications of the state transition function $\tau$. This gives rise to the set of states $\mathcal{S}$, where each state is defined over a signature $\Sigma_S$ with $\Sigma \subseteq \Sigma_S$. This state transition function $\tau$ is explicitly defined in Definition \ref{def-rsasm}. Furthermore, in every initial state $S_0 \in \mathcal{I}$ we have a unique rule $r_{S_0} = \textit{raise\/}(\textit{rule\/}(\text{val}_{S_0}(\textit{self\/})))$ using the rule extraction function $\textit{rule\/}$ defined in Subsection \ref{ssec:treerep}.

Concerning the abstract state postulate the required properties (1) and (5) are explicitly built into Definition \ref{def-rsasm} together with Definition \ref{def-extended-baseset}. Definition \ref{def-rsasm} further contains that the set $\mathcal{I}$ is closed under isomorphism. This extends to the set $\mathcal{S}$ of all states due to the definition of the state transition function $\tau$. This gives the required properties (2) and (3). Concerning property (4) we have $\Sigma_{alg} = \{ \textit{self\/} \}$, so the restriction of a state $S$ is simply given by $\text{val}_S(\textit{self\/})$. Applying the functions \textit{rule\/} and \textit{signature\/} from Subsection \ref{ssec:treerep} to this tree value give the rule $r_S$ and the signature $\Sigma_S$, which define the algorithm $\mathcal{A}(S)$ with the desired property.

The requirements for the background postulate are built into Definitions \ref{def-rsasm} and \ref{def-rsasm-background}. The extraction function $\beta$ has been defined explicitly in Subsection \ref{ssec:treerep}.

Finally, concerning the bounded exploration postulate take $W = \{ \textit{self\/} \}$. Let $S$ and $S^\prime$ be two states that strongly coincide on $W$. Then according to Definition \ref{def-coincidence} we have $\text{val}_S(\textit{self\/}) = \text{val}_{S^\prime}(\textit{self\/})$. As we have $r_S = \textit{raise\/}(\textit{rule\/}(\text{val}_S(\textit{self\/})))$, we obtain $r_S = r_{S^\prime}$ with signature $\Sigma_S = \textit{raise\/}(\textit{signature\/}(\text{val}_S(\textit{self\/}))) = \Sigma_{S^\prime}$. Furthermore, applying the extraction function $\beta$ gives $\beta(\text{val}_S(\textit{self\/})) = \beta(\text{val}_{S^\prime}(\textit{self\/}))$. Let this tuple of terms be $(t_1 ,\dots, t_n)$. Then the $t_i$ are all the terms over $\Sigma_S$ and the base set $B$ that appear in $r_S$ (and $r_{S^\prime}$). In particular, $\{ t_1 ,\dots, t_n \}$ is a bounded exploration witness for the sequential ASM defined by $\Sigma_S$ and $r_S$.

Definition \ref{def-coincidence} further implies $\text{val}_S(\textit{raise\/}(\beta(\textit{self\/}))) = \text{val}_{S^\prime}(\textit{raise\/}(\beta(\textit{self\/})))$, hence $\text{val}_S(\textit{raise\/}(t_i)) = \text{val}_{S^\prime}(\textit{raise\/}(t_i))$ holds for all $i = 1 ,\dots, n$, i.e. the states $S$ and $S^\prime$ coincide on a bounded exploration witness. Thus, we get $\ddot{\Delta}_{\mathcal{A}}(S) = \ddot{\Delta}_{r_S}(S) = \ddot{\Delta}_{r_{S^\prime}}(S^\prime) = \ddot{\Delta}_{\mathcal{A}}(S^\prime)$, which shows the satisfaction of the bounded exploration postulate and completes the proof.
\end{proof}

\section{The Reflective Sequential ASM Thesis}\label{sec:theory}

This section provides the mathematical proofs that rsASMs capture all RSAs, regardless how algorithms are represented by terms. We show the converse of Theorem \ref{thm:plausibility}, i.e. that every RSA $\mathcal{A}$ can be step-by-step simulated by a behaviourally equivalent rsASM $\mathcal{M}$.

\subsection{Critical Values}

Let $\mathcal{A}$ be given, and fix a bounded exploration witness $W$. Let $W_{st}$ be the subset of $W$ containing ``standard'' terms, i.e. those terms that do not contain function symbols in $\Sigma_{alg}$, and let $W_{pt} = W - W_{st}$ be the complement containing ``program terms''. 

According to the bounded exploration postulate the update set $\Delta_{\mathcal{A}}(S)$ in a state $S$ is determined by values $\text{val}_S(t)$ resulting from the interpretation of terms $t \in W$ and $t = \pi_i(\beta(\text{val}_S(t^\prime)))$ with $t^\prime \in W_{pt}$ and some projection function $\pi_i$. However, according to the abstract state postulate the successor state $\tau(S)$ is already determined by the sequential algorithm $\mathcal{A}(S)$ represented in the state $S$, i.e. $\Delta_{\mathcal{A}}(S) = \Delta_{\mathcal{A}(S)}(S)$ and therefore, instead of $\text{val}_S(t)$ with $t \in W$ it suffices to consider $t \in W_{st}$---values $\text{val}_S(t)$ with $t \in W_{pt}$ must already be covered by the values $\text{val}_S(\pi_i(\beta(\text{val}_S(t^\prime))))$ with $t^\prime \in W_{pt}$. We use the notation $W_{\beta} = \{ \beta(\text{val}_S(t)) \mid t \in W_{pt} \}$ and $W_S = W_{st} \cup W_{\beta}$. Without loss of generality we can assume that $W_S$ is closed under subterms.

\begin{definition}\label{def-critical}\rm 

For a state $S$ the terms in $W_S$ are called {\em critical terms} in $S$. The value $\text{val}_S(t)$ of a critical term $t \in W_S$ is called a {\em critical value} in $S$.

\end{definition}

In the following we proceed analogously to the proof of the main theorem in \cite{gurevich:tocl2000} concerning the capture of sequential algorithms by sequential ASMs, i.e. we start from a single state $S$ and the update set $\Delta_{\mathcal{A}}(S)$ in that state. We can first show that all values appearing in updates in an update set $\Delta_{\mathcal{A}}(S)$ are critical in $S$. The proof of this lemma is analogous to the corresponding proof for sequential ASMs (see \cite[Lemma 6.2]{gurevich:tocl2000}), but it has to be extended to capture also updates that result from aggregation of shared updates.

\begin{lemma}\label{lemma:critical_terms}

If $((f, (v_1, \dots v_n), v_0))$ is an update in $\Delta_{\mathcal{A}}(S)$, then either $v_0, v_1, \dots, v_n$ are critical values in $S$ or $(v_0, v_1, \dots, v_n)$ is an aggregation tuple (as defined in Definition \ref{def-aggregation}) built from critical values.

\end{lemma}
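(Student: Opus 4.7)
The plan is to adapt Gurevich's classical exchange argument (\cite{gurevich:tocl2000}, Lemma 6.2) to the reflective setting with shared updates and aggregation. I first fix an update multiset $\ddot{\Delta}_{\mathcal{A}}(S)$ witnessed by bounded exploration whose collapse is $\Delta_{\mathcal{A}}(S)$, and then split the given update $((f,(v_1,\dots,v_n)),v_0)$ into two cases depending on whether it descends from a single non-shared update in $\ddot{\Delta}_{\mathcal{A}}(S)$, or from the collapse of several shared updates at the same location.

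In the non-aggregated case I argue by contradiction. Suppose some $v_i$ (for $0 \le i \le n$) is not a critical value in $S$, and pick a fresh element $a^\prime \in \textit{reserve\/}$ that occurs neither as a critical value, nor in the self-representation of $\mathcal{A}(S)$, nor anywhere else in $\ddot{\Delta}_{\mathcal{A}}(S)$. Let $\sigma$ be the transposition of $B_{ext}$ swapping $a = v_i$ with $a^\prime$ and fixing everything else. Because $a$ is not critical and $a^\prime$ is fresh, $\sigma$ preserves the values of all terms in $W$ and of all extracted terms $\pi_j(\beta(\text{val}_S(t)))$ for $t \in W_{pt}$; hence $S$ and $\sigma(S)$ strongly coincide over $W$. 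The bounded exploration postulate then gives $\ddot{\Delta}_{\mathcal{A}}(S) = \ddot{\Delta}_{\mathcal{A}}(\sigma(S))$, while the isomorphism clause of the abstract state postulate \ref{p-state} gives $\ddot{\Delta}_{\mathcal{A}}(\sigma(S)) = \sigma(\ddot{\Delta}_{\mathcal{A}}(S))$. Combining the two equalities forces $\sigma$ to fix the update in which $a$ occurs at position $i$, which is impossible since $\sigma(a) = a^\prime \neq a$. Hence every $v_i$ is critical in $S$.

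For the aggregated case, the update collapses shared updates of the form $(\ell,op_j,(v^\prime_{j1},\dots,v^\prime_{jm_j}))$, each of which is produced at $S$ by a term $op_j(f(t_1,\dots,t_n),t^\prime_{j1},\dots,t^\prime_{jm_j})$ that belongs to $W$ or to $\beta(W)$. Applying the same exchange argument individually to each argument $v_i$ ($1 \le i \le n$) and to each component $v^\prime_{jk}$ shows that these are all critical values, where the closure of $W_S$ under subterms is used to push the argument down through the subterms of the shared-update terms. The collapsed right-hand side $v_0$ is then, by construction of the collapse, precisely the value $\text{val}_S(\hat{t})$ of the aggregation term $\hat{t}$ of Definition \ref{def-aggregation}, so the tuple $(v_0,v_1,\dots,v_n)$ is an aggregation tuple built from critical values, as required.

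The main obstacle is ensuring that the transposition $\sigma$ really is a legitimate isomorphism of reflective states in the sense of Postulate \ref{p-state} — in particular that it commutes with \textit{raise\/}, \textit{drop\/}, and with the constructors and operators used to represent $\mathcal{A}(S)$ over $\Sigma_{alg}$. Drawing $a^\prime$ from \textit{reserve\/} and keeping it disjoint from the self-representation guarantees that $\sigma$ acts as the identity on the entire $\Sigma_{alg}$-part of $S$, so $\mathcal{A}(\sigma(S)) = \mathcal{A}(S)$ and the pushforward $\sigma(\ddot{\Delta}_{\mathcal{A}}(S))$ genuinely coincides with $\ddot{\Delta}_{\mathcal{A}}(\sigma(S))$. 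A secondary care point is that the argument must be applied to the arguments $v_1,\dots,v_n$ and all deeper subterm values before analysing $v_0$, since in the aggregated case $v_0$ itself is not expected to be critical and only its status as the value of the aggregation term $\hat{t}$ can be recovered.
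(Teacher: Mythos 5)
Your proof is correct and takes essentially the same route as the paper: both split into the non-aggregated and aggregated cases and run Gurevich's exchange argument, using the isomorphism clause of the abstract state postulate together with (strong) coincidence on $W$ and bounded exploration to derive a contradiction from a non-critical $v_i$, then repeat the argument on the components of the shared updates in the aggregated case. The only difference is cosmetic — you swap $v_i$ with a fresh reserve element via a transposition while the paper replaces $v_i$ by a fresh value outright — and your explicit care that the fresh element avoids the self-representation (so the isomorphism fixes the $\Sigma_{alg}$-part) is a point the paper leaves implicit.
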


\begin{proof}
The update $((f, (v_1, \dots v_n)), v_0)$ may be the result of merging several shared updates or not. In the latter case assume that one value $v_i$ is not critical. Then choose a structure $S_1$ by replacing $v_i$ by a fresh value $b$ without changing anything else. Then $S_1$ is isomorphic to $S$ and thus a state by the abstract state postulate.

Let $t \in W_S$ be a critical term. Then we must have $\text{val}_S(t) = \text{val}_{S_1}(t)$, so $S$ and $S_1$ coincide on $W_S$, so they strongly coincide on $W$. The bounded exploration postulate implies $\Delta_{\mathcal{A}}(S) = \Delta_{\mathcal{A}}(S_1)$ and hence $(f(v_1,\dots,v_n),v_0) \in \Delta_{\mathcal{A}}(S_1)$. However, $v_i$ does not appear in the structure $S_1$ and thus cannot appear in $\Delta_{\mathcal{A}}(S_1)$, which gives a contradiction.

In case the update $((f, (v_1, \dots v_n)), v_0)$ is the result of merging several shared updates we have shared updates $((f, (v_1, \dots v_n)), op_i, v_{i1}^\prime,\dots,v_{im_i}^\prime)$ ($i = 1,\dots,k$), and
\[ v_0 = op_1(op_2(\dots (op_k(f_S(v_1,\dots,v_n),v_{k1}^\prime,\dots,v_{km_k}^\prime)),\dots, 
v_{21}^\prime,\dots,v_{2m_2}^\prime), v_{11}^\prime,\dots,v_{1m_1}^\prime) \]

Applying the same argument as in the first case (using $\ddot{\Delta}_{\mathcal{A}}(S) = \ddot{\Delta}_{\mathcal{A}}(S_1)$) we conclude that $(v_0, v_1, \dots, v_n)$ is an aggregation tuple.
\end{proof}

Lemma \ref{lemma:critical_terms} implies that every update in $\Delta_{\mathcal{A}}(S)$ can be produced by an ASM rule, which is either an assignment or the parallel composition of partial updates. This can be generalised to obtain a rule $r_S$ producing the whole update set $\Delta_{\mathcal{A}}(S)$. The lemma further implies that update sets (and update multisets) in a state $S$ are always finite.

\begin{lemma}\label{coro:rule}

For every state $S$ of the RSA $\mathcal{A}$ there is a rule $r_S$ using only terms in $W_S = W_{st} \cup W_{\beta}$ such that $\Delta_{r_S}(S) = \Delta_{\mathcal{A}}(S)$ holds. 

\end{lemma}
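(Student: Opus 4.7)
My plan is to build the rule $r_S$ as a bounded parallel composition of assignment and partial-assignment rules, one family of rules per update (or per shared-update family) in $\Delta_{\mathcal{A}}(S)$, and to use Lemma~\ref{lemma:critical_terms} systematically to express every value involved by a critical term.

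First I would observe that $W_S = W_{st}\cup W_{\beta}$ is finite, since $W$ is finite and $\beta$ produces tuples of fixed (finite) length for each $t\in W_{pt}$. Together with Lemma~\ref{lemma:critical_terms}, which says every value appearing in $\Delta_{\mathcal{A}}(S)$ is either critical or is assembled by an aggregation operator sequence from critical values, this gives that $\Delta_{\mathcal{A}}(S)$ is finite: only finitely many tuples of critical values exist and each update either uses such a tuple directly or arises by collapsing finitely many shared updates whose arguments are themselves critical values. Hence enumeration $\Delta_{\mathcal{A}}(S)=\{u_1,\dots,u_N\}$ is legitimate.

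Next, for each update $u=((f,(v_1,\dots,v_n)),v_0)\in\Delta_{\mathcal{A}}(S)$ I treat two cases. In the non-aggregated case, Lemma~\ref{lemma:critical_terms} yields critical terms $t_0,t_1,\dots,t_n\in W_S$ with $\text{val}_S(t_i)=v_i$, so the assignment rule $f(t_1,\dots,t_n) := t_0$ produces exactly the singleton update set $\{u\}$ in $S$ and uses only terms from $W_S$. In the aggregated case, the underlying update multiset contains shared updates $((f,(v_1,\dots,v_n)),op_i,(v_{i1}^\prime,\dots,v_{im_i}^\prime))$ for $i=1,\dots,k$, and Lemma~\ref{lemma:critical_terms} provides critical terms for $v_1,\dots,v_n$ and for each $v_{ij}^\prime$. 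I then emit $k$ partial-assignment rules $f(t_1,\dots,t_n) \leftleftarrows^{op_i} t_{i1}^\prime,\dots,t_{im_i}^\prime$, again using only terms from $W_S$. By the definition of partial assignment in the preliminaries, the collapse of the resulting shared-update multiset (together with $\text{val}_S(f(t_1,\dots,t_n))$) recovers exactly the aggregated value $v_0$, hence the single update $u$.

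Finally I would let $r_S$ be the parallel composition \texttt{PAR}\ldots\texttt{ENDPAR} of all these (finitely many) rules. Because every constituent assignment and partial assignment is built with terms from $W_S$, so is $r_S$. The interpretation of \texttt{PAR} unions the individual (shared) update contributions in $S$, and by construction this multiset, once collapsed, coincides with $\Delta_{\mathcal{A}}(S)$; consistency is inherited from $\Delta_{\mathcal{A}}(S)$ being consistent, since identical simple updates merge and the aggregation operators for one location are by assumption compatible (this is precisely what made Lemma~\ref{lemma:critical_terms}'s aggregation tuple well defined).

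The main obstacle I anticipate is bookkeeping around the aggregated case: one must argue that the choice of critical terms witnessing $v_i=\text{val}_S(t_i)$ can be made coherently across all shared updates to the same location, and that the ordering of operator applications implicit in the aggregation term of Definition~\ref{def-aggregation} is respected by the collapse procedure of \cite{schewe:ejc2011}. Both are available in principle but require care; everything else is a direct assembly of the pieces supplied by Lemma~\ref{lemma:critical_terms} and the finiteness of $W_S$.
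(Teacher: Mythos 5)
Your proposal is correct and follows essentially the same route as the paper's proof: the same case split between plain updates (handled by an assignment rule over critical terms from Lemma~\ref{lemma:critical_terms}) and merged shared updates (handled by a parallel block of partial assignments with the operators $op_i$), assembled into one \texttt{PAR} rule whose well-definedness rests on the finiteness of $W_S$ and hence of $\Delta_{\mathcal{A}}(S)$. Your additional remarks on the collapse of the shared-update multiset and on consistency only make explicit what the paper delegates to Lemma~\ref{lemma:critical_terms} and to \cite{schewe:ejc2011}.
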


\begin{proof}
If $((f, (v_1, \dots v_n)), v_0) \in \Delta_{\mathcal{A}}(S)$ is not the result of merging several shared updates, then take critical terms $t_i \in W_S$ with $\text{val}_S(t_i) = v_i$ that are guaranteed by Lemma \ref{lemma:critical_terms}. Then the update is produced by the assignment rule $f(t_1,\dots,t_n) := t_0$.

If $((f, (v_1, \dots v_n)), v_0) \in \Delta_{\mathcal{A}}(S)$ is the result of merging several shared updates, then take critical terms $t_1, \dots, t_n$ and $t_{ij} \in W_S$ with $\text{val}_S(t_{ij}) = v_{ij}$ and operators $op_i$ that are guaranteed by Lemma \ref{lemma:critical_terms}. Then the update is produced by a parallel combination of partial assignment rules:
\[ \texttt{PAR} f(t_1,\dots,t_n) \leftleftarrows^{op_1} t_{11}^\prime ,\dots, t_{1m_1}^\prime \dots f(t_1,\dots,t_n) \leftleftarrows^{op_k} t_{k1}^\prime ,\dots, t_{km_k}^\prime \texttt{ENDPAR} \]

The parallel combination of all these rules for the individual updates in $\Delta_{\mathcal{A}}(S)$ gives the rule $r_S$. As $\Delta_{\mathcal{A}}(S)$ and $W_S$ are finite, this rule is well-defined.
\end{proof}

\subsection{Relative $W$-Similarity}

As in the corresponding proof of the sequential ASM thesis the following lemmata aim first to extend Lemma \ref{coro:rule} to other states $S^\prime$, i.e. to obtain $\Delta_{r_S}(S^\prime) = \Delta_{\mathcal{A}}(S^\prime)$, and to combine different rules $r_S$ such that the behaviour of $\mathcal{A}$ can be modelled by a combination of such rules on all states.

\begin{lemma}\label{lemma:coincide}

If two states $S$ and $S^\prime$ of $\mathcal{A}$ strongly coincide on $W$, then $\Delta_{r_S}(S^\prime) = \Delta_{\mathcal{A}}(S^\prime)$ holds.

\end{lemma}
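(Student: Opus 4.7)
The plan is to chain together three equalities: $\Delta_{r_S}(S^\prime) = \Delta_{r_S}(S) = \Delta_{\mathcal{A}}(S) = \Delta_{\mathcal{A}}(S^\prime)$. The middle equality is exactly Lemma~\ref{coro:rule}, so the real work lies in establishing the outer two.

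First I would invoke the bounded exploration postulate~\ref{p-bounded}. Since $S$ and $S^\prime$ strongly coincide on the chosen witness $W$, the postulate immediately gives $\ddot{\Delta}_{\mathcal{A}}(S) = \ddot{\Delta}_{\mathcal{A}}(S^\prime)$ and therefore also $\Delta_{\mathcal{A}}(S) = \Delta_{\mathcal{A}}(S^\prime)$. That handles the rightmost equality.

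Next I would argue $\Delta_{r_S}(S^\prime) = \Delta_{r_S}(S)$ by showing that every term appearing in $r_S$ is interpreted identically in $S$ and $S^\prime$. By the construction in Lemma~\ref{coro:rule}, the rule $r_S$ is built only from terms drawn from $W_S = W_{st} \cup W_{\beta}$. For $t \in W_{st} \subseteq W$, strong coincidence (first clause of Definition~\ref{def-coincidence}) already yields $\text{val}_S(t) = \text{val}_{S^\prime}(t)$. For terms in $W_\beta$, each such term is a component of $\beta(\text{val}_S(t^\prime))$ for some $t^\prime \in W_{pt}$; here the first clause of strong coincidence gives $\text{val}_S(t^\prime) = \text{val}_{S^\prime}(t^\prime)$, so the tuple $\beta(\text{val}_S(t^\prime))$ coincides syntactically with $\beta(\text{val}_{S^\prime}(t^\prime))$, and the second clause of strong coincidence ensures that evaluating these extracted terms in the two states yields the same values. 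Since assignments, partial updates, branching, and parallel composition all produce update sets that depend only on the interpretation of the terms they mention, we conclude $\Delta_{r_S}(S) = \Delta_{r_S}(S^\prime)$.

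Combining the two intermediate equalities with $\Delta_{r_S}(S) = \Delta_{\mathcal{A}}(S)$ from Lemma~\ref{coro:rule} yields the claim. The main subtlety — and the step I would write out most carefully — is the second clause of strong coincidence: one must use the fact that $t^\prime \in W_{pt}$ evaluates to the \emph{same} term in both states (so that $\beta$ is applied to an identical argument) before appealing to the clause that equalises the values of the extracted subterms. Once this double-interpretation bookkeeping is handled, the remainder is essentially mechanical, mirroring the corresponding step in Gurevich's proof for sequential algorithms but lifted to rule terms and aggregated shared updates.
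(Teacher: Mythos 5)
Your proof is correct and follows essentially the same route as the paper: the paper likewise observes that strong coincidence gives $W_S = W_{S^\prime}$ and coincidence on $W_S$, deduces $\Delta_{r_S}(S) = \Delta_{r_S}(S^\prime)$ because $r_S$ only uses terms in $W_S$, and then chains this with Lemma~\ref{coro:rule} and the bounded exploration postulate exactly as you do. Your spelled-out treatment of the two clauses of strong coincidence for the $W_{st}$ and $W_\beta$ parts is just a more explicit rendering of the paper's one-line claim that the states coincide on $W_S$.
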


\begin{proof}
As $S$ and $S^\prime$ strongly coincide on $W$ we also have $W_S = W_{S^\prime}$, and furthermore, $S$ and $S^\prime$ coincide on $W_S$. As the rule $r_S$ only uses terms in $W_S$, it follows that $\Delta_{r_S}(S) = \Delta_{r_S}(S^\prime)$ holds. Lemma \ref{coro:rule} also states $\Delta_{r_S}(S) = \Delta_{\mathcal{A}}(S)$, and the bounded exploration postulate gives $\Delta_{\mathcal{A}}(S) = \Delta_{\mathcal{A}}(S^\prime)$, which imply $\Delta_{r_S}(S^\prime) = \Delta_{\mathcal{A}}(S^\prime)$ as claimed.
\end{proof}

For the extensions to Lemma \ref{coro:rule} mentioned above we are naturally interested in states $S^\prime$, in which $\mathcal{A}(S^\prime)$ is behaviourally equivalent to $\mathcal{A}(S)$. For this we introduce the notion of $W S$-equivalence.

\begin{definition}\label{def-ws-equivalence}\rm

A state $S^\prime$ is {\em $W S$-equivalent} to the state $S$ iff $\beta(\text{val}_{S^\prime}(t)) =  \beta(\text{val}_S(t))$ holds for all $t \in W_{pt}$.

\end{definition}

Keeping in mind that the ``self-representation'', i.e. the restriction of a state $S$ to $\Sigma_{alg}$ that is to represent a sequential algorithm $\mathcal{A}(S)$, is de facto only relevant for the behaviour on ``standard'' locations, i.e. we are interested in the restrictions $\text{res}(S,\Sigma_S - \Sigma_{alg})$ of states---this is already reflected in Definition \ref{def-equivalence} concerning behavioural equivalence. 

We therefore use the notation $\Delta^{st}_{\mathcal{A}}(S) = res(\Delta_{\mathcal{A}}(S),\Sigma_S - \Sigma_{alg})$ and similarly $\Delta^{st}_{r_S}(S) = res(\Delta_{r_S}(S),\Sigma_S - \Sigma_{alg})$. Then the following lemma is a straightforward consequence of Lemma \ref{lemma:coincide}. 

\begin{lemma}\label{lem-corollary2}

If the states $S$ and $S^\prime$ are $W S$-equivalent and coincide over $W_{st} \cup W_\beta$, then we have $\Delta^{st}_{r_S}(S^\prime) = \Delta^{st}_{\mathcal{A}}(S^\prime)$.  

\end{lemma}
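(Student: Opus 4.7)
The plan is to refine the argument of Lemma \ref{lemma:coincide} by reducing the claim to two simpler equalities connected by an auxiliary state. The essential adjustment is that the hypothesis now only pins down the $\beta$-extracted behaviour of the self-representations, so we must live with restricting conclusions to $\Sigma_S - \Sigma_{alg}$.

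First I would show that $\Delta^{st}_{r_S}(S^\prime) = \Delta^{st}_{\mathcal{A}}(S)$. From $W S$-equivalence we have $\beta(\text{val}_S(t)) = \beta(\text{val}_{S^\prime}(t))$ for every $t \in W_{pt}$, so the $\beta$-images agree term-by-term and $W_\beta$, hence $W_S$, is the same whether computed from $S$ or from $S^\prime$. By Lemma \ref{coro:rule} the rule $r_S$ mentions only terms in $W_S$, and the hypothesis that $S$ and $S^\prime$ coincide on $W_{st} \cup W_\beta$ makes each of these terms evaluate identically in the two states. Hence $\Delta_{r_S}(S) = \Delta_{r_S}(S^\prime)$, and combining with $\Delta_{r_S}(S) = \Delta_{\mathcal{A}}(S)$ from Lemma \ref{coro:rule} and restricting to $\Sigma_S - \Sigma_{alg}$ yields the first target.

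Next I would establish $\Delta^{st}_{\mathcal{A}}(S) = \Delta^{st}_{\mathcal{A}}(S^\prime)$ by interposing an auxiliary state $\tilde S$ that keeps the self-representation of $S$ but replaces its standard substructure with that of $S^\prime$. Because $\tilde S$ and $S$ share the self-representation, the values of $W_{pt}$ and their $\beta$-extracts coincide trivially on $S$ and $\tilde S$; because $\tilde S$ inherits its standard substructure from $S^\prime$ and $S, S^\prime$ coincide on $W_{st} \cup W_\beta$, the remaining terms also evaluate the same in $S$ and $\tilde S$. This gives strong coincidence of $S$ with $\tilde S$ on $W$, and the first clause of postulate \ref{p-bounded} delivers $\Delta_{\mathcal{A}}(S) = \Delta_{\mathcal{A}}(\tilde S)$. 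Between $\tilde S$ and $S^\prime$ the standard substructures agree exactly, while the self-representations differ as raw tree values but are equal under $\beta$ thanks to $W S$-equivalence; this is the regime governed by the second clause of postulate \ref{p-bounded}, which supplies $\Delta^{st}_{\mathcal{A}}(\tilde S) = \Delta^{st}_{\mathcal{A}}(S^\prime)$. Chaining the equalities closes the proof.

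The main obstacle is this last invocation. The second clause of the postulate is phrased in terms of plain coincidence on $W$, whereas $\tilde S$ and $S^\prime$ only share the $\beta$-visible part of their self-representations — the raw values of $W_{pt}$ may well differ. To bridge the formal gap I would either appeal directly to the intended reading of the clause, namely that behaviour outside $\Sigma_{alg}$ is determined by what $\beta$ exposes, or, if a more explicit argument is required, normalise $\tilde S$ to $S^\prime$ by an isomorphism of the self-representations that preserves $\beta$-extracts, reducing the step to the strongly-coincident case already handled by Lemma \ref{lemma:coincide}.
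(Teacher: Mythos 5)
Your first step --- deriving $\Delta^{st}_{r_S}(S^\prime) = \Delta^{st}_{\mathcal{A}}(S)$ from the fact that $r_S$ mentions only terms in $W_S$, that $W S$-equivalence forces $W_{S^\prime}=W_S$, and that $S,S^\prime$ coincide on these terms --- is sound and is exactly the mechanism the paper uses inside Lemma \ref{lemma:coincide}. The gap is in your second step. The hybrid structure $\tilde S$, obtained by gluing the self-representation of $S$ onto the standard substructure of $S^\prime$, is not known to be a state of $\mathcal{A}$ at all: the abstract state postulate only closes $\mathcal{S}$ under isomorphisms, not under exchanging substructures between two states, and every auxiliary structure in the paper's own arguments (Lemmata \ref{lem-isomorphism}, \ref{lem-relative-equivalence}, \ref{lem-tree-similarity}) is an isomorphic copy of an existing state precisely so that the postulates remain applicable. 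For a non-state $\tilde S$ neither $\tau(\tilde S)$ nor $\Delta_{\mathcal{A}}(\tilde S)$ is defined and Postulate \ref{p-bounded} says nothing. Even granting $\tilde S \in \mathcal{S}$, your claim that $S$ and $\tilde S$ strongly coincide on $W$ needs the values of the terms in $W_{pt}$ to survive the swap of the standard substructure; but $W_{pt}$ is merely the set of terms of $W$ containing some symbol of $\Sigma_{alg}$, so such a term may also contain standard symbols and then evaluate differently in $\tilde S$ than in $S$.

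The invocation of the second clause of Postulate \ref{p-bounded} is also off target on both ends, as you partly sense. Its hypothesis is plain coincidence over all of $W$ --- including the raw values of the terms in $W_{pt}$, which $\tilde S$ and $S^\prime$ do not share --- and its conclusion concerns $\Delta_{\mathcal{A}}(\text{res}(\cdot,\Sigma_{alg}))$, i.e.\ the algorithm run on the restriction of the state to $\Sigma_{alg}$, not $\Delta^{st}_{\mathcal{A}}(\cdot) = \text{res}(\Delta_{\mathcal{A}}(\cdot),\Sigma_S - \Sigma_{alg})$. Your proposed repairs do not close this: a map that alters only the self-representation while fixing the standard part is in general not an isomorphism of structures, and appealing to the ``intended reading'' of a postulate is not admissible in the characterisation direction, where only the literal postulates are available. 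The paper's own proof is a one-liner by comparison: $W S$-equivalence gives $W_{S^\prime}=W_S$, so coincidence on $W_{st}\cup W_\beta$ is taken to amount to strong coincidence over $W$ in the sense of Definition \ref{def-coincidence}, whence Lemma \ref{lemma:coincide} yields the full equality $\Delta_{r_S}(S^\prime) = \Delta_{\mathcal{A}}(S^\prime)$ and the restriction to $\Sigma_S - \Sigma_{alg}$ is immediate --- no auxiliary state is needed.
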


\begin{proof}
$W S$-equivalence implies that $W_{S^\prime} = W_S$. As $S$ and $S^\prime$ coincide on $W_S$, they strongly coincide on $W$, which gives $\Delta_{r_S}(S^\prime) = \Delta_{\mathcal{A}}(S^\prime)$ by Lemma \ref{lemma:coincide} and thus the claimed equality of the restricted update sets.
\end{proof}

\begin{definition}\label{def-relative-equivalence}\rm

Let $\mathcal{C}$ be a class of states. Two states $S_1, S_2 \in \mathcal{C}$ are called {\em $W$-similar relative to $\mathcal{C}$} iff $\sim_{S_1} = \sim_{S_2}$, where the equivalence relation $\sim_{S_i}$ on $W$ is defined by $t \sim_{S_i} t^\prime$ iff $\text{val}_{S_i}(t) = \text{val}_{S_i}(t^\prime)$.

\end{definition}

Naturally, we are mainly interested in classes $\mathcal{C}$ that are defined by $W S$-equivalence. We use the notation $[S]$ for the $W S$-equivalence class of the state $S$, i.e. $[S] = \{ S^\prime \mid S^\prime \;\text{is $W S$-equivalent to}\; S \}$. The following two lemmata extend Lemma \ref{coro:rule} to relative $W$-similar states.

\begin{lemma}\label{lem-isomorphism}

If states $S_1, S_2$ are isomorphic, and for state $S$ we have $\Delta^{st}_{r_S}(S_2) = \Delta^{st}_{\mathcal{A}}(S_2)$, then we also get $\Delta^{st}_{r_S}(S_1) = \Delta^{st}_{\mathcal{A}}(S_1)$.

\end{lemma}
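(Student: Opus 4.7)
Let $\sigma : B_{S_1} \to B_{S_2}$ be an isomorphism witnessing $S_1 \cong S_2$, extended to locations, updates, and update sets in the usual pointwise manner. The plan is to show that both sides of the claimed equality transform covariantly under $\sigma$, so that the equality at $S_2$ can be pulled back to $S_1$.

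First I will handle the right-hand side. By the abstract state postulate (Postulate~\ref{p-state}(2)--(3)) the set $\mathcal{S}$ is closed under isomorphisms and the transition function commutes with them, i.e.\ $\tau(\sigma(S_1)) = \sigma(\tau(S_1))$. Since the update set $\Delta_{\mathcal{A}}(S)$ is uniquely determined as the minimal consistent difference between $S$ and $\tau(S)$, this commutation lifts to update sets: $\Delta_{\mathcal{A}}(S_2) = \sigma(\Delta_{\mathcal{A}}(S_1))$. Restricting to the subsignature $\Sigma_{S_1} - \Sigma_{alg} = \Sigma_{S_2} - \Sigma_{alg}$ commutes with $\sigma$ as well, yielding $\Delta^{st}_{\mathcal{A}}(S_2) = \sigma(\Delta^{st}_{\mathcal{A}}(S_1))$.

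For the left-hand side I will invoke the standard fact that an ASM rule, viewed as a purely syntactic object built from terms over $\Sigma_S$ (via Lemma~\ref{coro:rule} using only critical terms in $W_S$), commutes with isomorphisms: $\Delta_{r_S}(S_2) = \sigma(\Delta_{r_S}(S_1))$. This is verified by a straightforward induction on the structure of $r_S$ (assignment, partial assignment, branching and parallel composition), noting at the base case that $\sigma(\text{val}_{S_1}(t)) = \text{val}_{S_2}(t)$ for every term $t$ occurring in $r_S$, since $\sigma$ is an isomorphism of the underlying signature. Restriction to $\Sigma_{S_1} - \Sigma_{alg}$ again commutes with $\sigma$, so $\Delta^{st}_{r_S}(S_2) = \sigma(\Delta^{st}_{r_S}(S_1))$.

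Combining the two displayed equalities with the hypothesis $\Delta^{st}_{r_S}(S_2) = \Delta^{st}_{\mathcal{A}}(S_2)$ yields $\sigma(\Delta^{st}_{r_S}(S_1)) = \sigma(\Delta^{st}_{\mathcal{A}}(S_1))$. Since $\sigma$ is a bijection on the extended base set and thus on locations and updates, applying $\sigma^{-1}$ gives the desired equality $\Delta^{st}_{r_S}(S_1) = \Delta^{st}_{\mathcal{A}}(S_1)$. The main obstacle I anticipate is the verification that $r_S$ itself commutes with isomorphisms: because $r_S$ was synthesised from the critical terms of a \emph{third} state $S$, and terms in the extended base set may internally carry values (e.g.\ tree values representing subterms), one must argue carefully that every such value appearing inside $r_S$ is either absolute (a label, a reserve symbol, a constant in $\Sigma$) or is obtained as $\text{val}_{S_1}(t)$ for some term $t$, so that $\sigma$ still transports it correctly. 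Once this is in place, the inductive commutation argument and the collapse of shared-update multisets proceed exactly as in the corresponding step of Gurevich's proof for sequential algorithms.
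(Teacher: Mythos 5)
Your proposal is correct and follows essentially the same route as the paper's own proof: establish that both $\Delta^{st}_{r_S}$ and $\Delta^{st}_{\mathcal{A}}$ commute with the isomorphism $\sigma$, then apply $\sigma^{-1}$ to the hypothesis. The paper states the two commutation facts without justification, whereas you supply the supporting arguments (the abstract state postulate for $\Delta_{\mathcal{A}}$ and structural induction on $r_S$), which is a welcome but not divergent elaboration.
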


\begin{proof}
Let $\sigma$ denote the isomorphism from $S_1$ to $S_2$, i.e. $S_2 = \sigma S_1$. Then $\Delta^{st}_{r_S}(S_2) = \sigma \Delta^{st}_{r_S}(S_1)$ and likewise $\Delta^{st}_{\mathcal{A}}(S_2) = \sigma \Delta^{st}_{\mathcal{A}}(S_1)$. This implies $\sigma \Delta^{st}_{r_S}(S_1) = \sigma \Delta^{st}_{\mathcal{A}}(S_1)$ and further $\Delta^{st}_{r_S}(S_1) = \Delta^{st}_{\mathcal{A}}(S_1)$ by applying $\sigma^{-1}$ to both sides.
\end{proof}

\begin{lemma}\label{lem-relative-equivalence}

If states $S_1$ and $S_2$ are $W$-similar relative to $[S]$, then $\Delta^{st}_{r_{S_1}}(S_2) = \Delta^{st}_{\mathcal{A}}(S_2)$. 

\end{lemma}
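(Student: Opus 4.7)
The plan is to reduce Lemma \ref{lem-relative-equivalence} to Lemma \ref{lemma:coincide} (or equivalently to Lemma \ref{lem-corollary2}) by producing a state $S_2'$ that is isomorphic to $S_2$ yet sits in the same $W S$-equivalence class as $S_1$ and, moreover, coincides with $S_1$ on $W_S$; the conclusion is then transported back to $S_2$ via Lemma \ref{lem-isomorphism}.

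First I would exploit the assumption $S_1, S_2 \in [S]$: by $W S$-equivalence we have $\beta(\text{val}_{S_1}(t)) = \beta(\text{val}_{S_2}(t))$ as tuples of standard terms for every $t \in W_{pt}$, so $W_{S_1} = W_{S_2} = W_S$ as sets of critical terms. Next I would use the relative $W$-similarity $\sim_{S_1} = \sim_{S_2}$ to define a partial map $\sigma$ on critical values by $\sigma(\text{val}_{S_2}(t)) := \text{val}_{S_1}(t)$ for every $t \in W$; the coincidence of the two equivalence relations on $W$ makes this prescription well-defined and injective. Because the $\beta$-extracted terms appearing in $W_\beta$ are literally the same for the two states, the assignment can be extended consistently to the values $\text{val}_{S_2}(t')$ for $t' \in W_\beta$ by mapping them to $\text{val}_{S_1}(t')$. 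Using the reserve guaranteed by the background postulate together with closure of $\mathcal{S}$ under isomorphisms (Postulate \ref{p-state}, item 2), $\sigma$ lifts to a full bijection between the extended base sets, and I set $S_2' := \sigma S_2 \in \mathcal{S}$.

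Then I would verify that $S_2'$ and $S_1$ strongly coincide over $W$ in the sense of Definition \ref{def-coincidence}: coincidence of values on $W$ is immediate from the construction of $\sigma$, while the second clause, equality of the evaluated $\beta$-extractions for every program term, follows because both $S_1$ and $S_2'$ lie in $[S]$ and $\sigma$ was built to respect $W_\beta$ pointwise. Lemma \ref{lemma:coincide} then yields $\Delta_{r_{S_1}}(S_2') = \Delta_{\mathcal{A}}(S_2')$, so in particular $\Delta^{st}_{r_{S_1}}(S_2') = \Delta^{st}_{\mathcal{A}}(S_2')$. Since $\sigma$ is an isomorphism between $S_2$ and $S_2'$, Lemma \ref{lem-isomorphism} transports this equality back to $S_2$, giving the desired conclusion $\Delta^{st}_{r_{S_1}}(S_2) = \Delta^{st}_{\mathcal{A}}(S_2)$.

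The main obstacle is the consistent extension of $\sigma$ past $W$ onto $W_\beta$: the set-theoretic existence of a bijection matching the $W$-values is cheap from relative $W$-similarity, but I also need it to agree with the evaluations of the $\beta$-extracted terms, which live in a recursively generated portion of the extended base set involving function symbols, natural numbers, and nested tree terms. This is precisely where membership in $[S]$ is indispensable: since $W S$-equivalence forces the $\beta$-extractions to be the same term tuples in both states, the values of the overlap between $W$ and $W_\beta$ are forced to match under $\sigma$ without contradiction, and the remainder of the extended base set can be matched freely using fresh reserve elements. Once this extension is in place, the rest of the argument is a straightforward chain through Lemmas \ref{lemma:coincide} and \ref{lem-isomorphism}.
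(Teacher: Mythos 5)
Your proposal is correct and follows essentially the same route as the paper: construct an isomorphic replica of $S_2$ whose critical values are replaced by the corresponding critical values of $S_1$ (well-defined and injective by relative $W$-similarity), apply the coincidence lemma to that replica, and transport the result back via Lemma \ref{lem-isomorphism}. The only cosmetic difference is that the paper first passes to a copy of $S_2$ with a base set disjoint from $S_1$ before doing the replacement, whereas you fold that step into the construction of the bijection $\sigma$ using the reserve.
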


\begin{proof}
If we replace every element in the base set of $S_2$ that also belongs to the base set of $S_1$ by a fresh element, we obtain a structure $S^\prime$ isomorphic to $S_2$ and disjoint from $S_1$. By the abstract state postulate $S^\prime$ is a state of $\mathcal{A}$. Furthermore, by construction $S^\prime$ and $S_2$ are also $W$-similar relative to $[S]$. Due to Lemma \ref{lem-isomorphism} it suffices to show $\Delta^{st}_{r_{S_1}}(S^\prime) = \Delta^{st}_{\mathcal{A}}(S^\prime)$, so without loss of generality we may assume that the base sets of $S_1$ and $S_2$ are disjoint. 

Define a new structure $\hat{S}$ by replacing in $S_2$ all values $\text{val}_{S_2}(t)$ with a critical term $t \in W_S$ by the corresponding value $\text{val}_{S_1}(t)$. As $S_2 \in [S]$ holds, we have $W_{S_2} = W_S$, and due to $W$-similarity relative to $[S]$ we have $\text{val}_{S_1}(t) = \text{val}_{S_1}(t^\prime)$ iff $\text{val}_{S_2}(t) = \text{val}_{S_2}(t^\prime)$ holds, so the structure $\hat{S}$ is well-defined. Furthermore, $\hat{S}$ is isomorphic to $S_2$ and thus a state by the abstract state postulate. We get $\hat{S} \in [S]$ by construction. Furthermore, $S_1$ and $\hat{S}$ coincide on $W_S$, so Lemma \ref{lem-corollary2} implies that $\Delta^{st}_{r_{S_1}}(\hat{S}) = \Delta^{st}_{\mathcal{A}}(\hat{S})$ holds. Using again Lemma \ref{lem-isomorphism} completes the proof.
\end{proof}

Next we define a rule $r_{[S]}$ for a whole $W S$-equivalence class $[S]$. For $S_1 \in [S]$ let $\varphi_{S_1}$ be the following Boolean term:
\[ \bigwedge_{\substack{t_i, t_j \in W_{st} \,\cup\, W_{\beta} \\ \text{val}_{S_1}(t_i) = \text{val}_{S_1}(t_j)}} t_i = t_j \quad \wedge \bigwedge_{\substack{t_i, t_j \in W_{st} \,\cup\, W_{\beta} \\ \text{val}_{S_1}(t_i) \neq \text{val}_{S_1}(t_j)}} \neg (t_i = t_j). \] 

Clearly, a state $S_2 \in [S]$ satisfies $\varphi_{S_1}$ iff $S_1$ and $S_2$ are $W$-similar relative to $[S]$. As $W$ is finite, we obtain a partition of $[S]$ into classes $[S]_1 ,\dots, [S]_n$ such that two states belong to the same class $[S]_i$ iff they are $W$-similar relative to $[S]$. We choose representatives $S_1 ,\dots, S_n$ for these classes and define the rule $r_{[S]}$ by
\[ \texttt{PAR}\; (\texttt{IF}\; \varphi_{S_1} \;\texttt{THEN}\; r_{S_1} \;\texttt{ENDIF}) \; \dots \;
(\texttt{IF}\; \varphi_{S_n} \;\texttt{THEN}\; r_{S_n} \;\texttt{ENDIF}) \; \texttt{ENDPAR} \]

Using this rule we obtain the following lemma, which is a straightforward consequence of the previous lemmata. 

\begin{lemma}\label{lem-relatBehavEquiv}  

For every state $S$ and every state $S^\prime \in [S]$ we have $\Delta^{st}_{r_{[S]}}(S^\prime) = \Delta^{st}_{\mathcal{A}}(S^\prime)$.

\end{lemma}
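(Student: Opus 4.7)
The plan is to show that in any state $S' \in [S]$, exactly one of the guarded branches in $r_{[S]}$ fires, so that the overall update set produced by $r_{[S]}$ on $S'$ reduces to the update set produced by the corresponding $r_{S_i}$, which by the previous lemma agrees with $\Delta^{st}_{\mathcal{A}}(S')$ on the standard part of the signature.

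First I would fix an arbitrary $S' \in [S]$. Because the classes $[S]_1, \dots, [S]_n$ partition $[S]$ by relative $W$-similarity, there is a unique index $i$ with $S' \in [S]_i$, i.e.\ $S'$ and the chosen representative $S_i$ are $W$-similar relative to $[S]$. The key observation is that the Boolean term $\varphi_{S_j}$ exactly encodes the equivalence relation $\sim_{S_j}$ on $W_{st} \cup W_\beta$: a state $S'' \in [S]$ satisfies $\varphi_{S_j}$ iff $\sim_{S''} = \sim_{S_j}$, i.e.\ iff $S''$ is $W$-similar relative to $[S]$ to $S_j$. Hence $\text{val}_{S'}(\varphi_{S_i}) = \textbf{true}$ and $\text{val}_{S'}(\varphi_{S_j}) = \textbf{false}$ for all $j \neq i$. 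Consequently, by the definition of update sets for branching and parallel rules,
\[
\Delta_{r_{[S]}}(S') \;=\; \bigcup_{j=1}^n \Delta_{\texttt{IF}\,\varphi_{S_j}\,\texttt{THEN}\,r_{S_j}}(S') \;=\; \Delta_{r_{S_i}}(S'),
\]
since every branch other than the $i$-th contributes the empty update set. Restricting to $\Sigma_S - \Sigma_{alg}$, this gives $\Delta^{st}_{r_{[S]}}(S') = \Delta^{st}_{r_{S_i}}(S')$.

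Now I would invoke Lemma~\ref{lem-relative-equivalence}: since $S_i$ and $S'$ are $W$-similar relative to $[S]$, we get $\Delta^{st}_{r_{S_i}}(S') = \Delta^{st}_{\mathcal{A}}(S')$. Combining the two equalities yields $\Delta^{st}_{r_{[S]}}(S') = \Delta^{st}_{\mathcal{A}}(S')$, as required.

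The main obstacle to watch for is the interaction with partial updates: in principle the parallel composition collects update multisets from all branches, and one has to make sure that the ``silent'' branches really contribute nothing (not even vacuous shared updates) that could alter the collapsed result on the standard part. This is guaranteed because a rule guarded by \texttt{IF} $\varphi$ \texttt{THEN} $r$ \texttt{ENDIF} with $\text{val}_{S'}(\varphi) = \textbf{false}$ produces the empty update multiset by the semantics given in Section~\ref{sec:asm}, so no spurious shared updates propagate into $\ddot{\Delta}_{r_{[S]}}(S')$. Hence the collapse of $\ddot{\Delta}_{r_{[S]}}(S')$ coincides with the collapse of $\ddot{\Delta}_{r_{S_i}}(S')$, and restricting to $\Sigma_S - \Sigma_{alg}$ preserves this equality.
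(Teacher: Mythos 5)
Your proof is correct and follows essentially the same route as the paper: exactly one guard $\varphi_{S_i}$ evaluates to \textbf{true} in $S^\prime$, so $r_{[S]}$ reduces to $r_{S_i}$, and Lemma~\ref{lem-relative-equivalence} finishes the argument. Your additional remark that the non-firing branches contribute no spurious (shared) updates to the multiset is a useful clarification the paper leaves implicit, but it does not change the substance of the argument.
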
 

\begin{proof}
There is exactly one class $[S]_i$ with representing state $S_i$ such that $S^\prime \in [S]_i$ holds. Then $\text{val}_{S^\prime}(\varphi_j)$ is \textbf{true} iff $j = i$. Then we get $\Delta^{st}_{r_[S]}(S^\prime) = \Delta^{st}_{r_{S_i}}(S^\prime) = \Delta^{st}_{\mathcal{A}}(S^\prime)$ using the definition of $r_{[S]}$ and Lemma \ref{lem-relative-equivalence}.
\end{proof}

\subsection{Tree Updates}

We need to combine the partial results obtained so far into the construction of a rsASM $\mathcal{M}$ that is behaviourally equivalent to the given RSA $\mathcal{A}$. Lemma \ref{coro:rule} shows that each state transition can be expressed by a rule, i.e. for each state $S$ we have a rule $r_S$ with $\Delta_{r_S}(S) = \Delta_{\mathcal{A}}(S)$, hence $\tau(S) = S + \Delta_{r_S}(S)$. Lemma \ref{lem-relatBehavEquiv} gives a single rule $r_{[S]}$ for each class $[S]$ with $\Delta^{st}_{r_{[S]}}(S^\prime) = \Delta^{st}_{\mathcal{A}}(S^\prime)$ for all $S^\prime \in [S]$, hence $res(\tau(S^\prime), \Sigma_{\tau(S^\prime)} - \Sigma_{alg}) = res(S^\prime + \Delta_{r_{[S]}}(S^\prime), \Sigma_{\tau(S^\prime)} - \Sigma_{alg})$. The latter result already captures the updates on the ``standard'' part of the state by an ASM rule, but the whole rule $r_{[S]}$ updates also locations with function symbols in $\Sigma_{alg}$ instead of \textit{self\/}.

What we need is a different representation of a rule in a state extending the restriction to $\Sigma_S - \Sigma_{alg}$ and using a location \textit{self\/}. For any state $S$ of the RSA $\mathcal{A}$ let $\Phi(S)$ denote a structure over $(\Sigma_S - \Sigma_{alg}) \cup \{ \textit{self\/} \}$, in which the location \textit{self\/} captures this intended representation. The structure $\Phi(S)$ should be built as follows:

\begin{enumerate}

\item We must have $res(\Phi(S), \Sigma_S - \Sigma_{alg}) = res(S, \Sigma_S - \Sigma_{alg})$.

\item The value $\text{val}_S(\textit{self})$ must be a tree composing a signature subtree and a rule subtree, i.e. $\text{val}_S(\textit{self}) =$
\[ \textit{label\_hedge\/}(\texttt{self}, \textit{label\_hedge\/}(\texttt{signature}, s_0 s_1 \dots s_k), \textit{label\_hedge\/}(\texttt{rule}, r) . \]

\item The signature subtree simply lists the signature $\Sigma_{\Phi(S)} = (\Sigma_S - \Sigma_{alg}) \cup \{ \textit{self\/} \}$, so with $f_0 = \textit{self\/}$, $\{ f_1 ,\dots, f_k \} = \Sigma_S - \Sigma_{alg}$, $n_0 = 0$, and $ar(f_i) = n_i$ for all $i$ each of the subtrees $s_i$ takes the form
\[ \textit{label\_hedge\/}(\texttt{func}, \texttt{name} \langle f_i \rangle \; \texttt{arity} \langle n_i \rangle) . \]

\item The rule subtree captures a representation of a rule $r_{\Phi(S)}$, which must be a parallel composition of two parts: (a) a rule capturing the updates to $res(\Phi(S), \Sigma_S - \Sigma_{alg})$, (b) a rule capturing updates to \textit{self\/}.

\end{enumerate}

Concerning the updates to $res(\Phi(S), \Sigma_S - \Sigma_{alg})$ we know from Lemma \ref{lem-relatBehavEquiv} that these are captured by the rule $r_{[S]}$ restricted to $\Sigma_S - \Sigma_{alg}$, so the representation as a tree is done in the standard way. Let $t_{r_{[S]}}$ denote this tree.

The remaining problem is to find a rule capturing the updates to the tree assigned to \textit{self\/} in any state $\Phi(S)$. For this we will proceed analogously to Lemmata \ref{coro:rule}-\ref{lem-relatBehavEquiv} focusing on finitely many changes to trees. In particular, it will turn out that a fixed tree representing a rule $r^{self}$ will be sufficient for all states $S$.

We start with a tree $\hat{t}_S$ satisfying the requirements from (2), (3) and (4)(a) above, but ignoring updates to \textit{self\/}. This tree is defined as 
\[ \textit{label\_hedge\/}(\texttt{self}, \textit{label\_hedge\/}(\texttt{signature}, s_0 s_1 \dots s_k), \textit{label\_hedge\/}(\texttt{rule}, t_{r_{[S]}}) . \]

Furthermore, we let $\hat{\Phi}(S)$ denote the structure defined by $res(\hat{\Phi}(S), \Sigma_S - \Sigma_{alg}) = res(S, \Sigma_S - \Sigma_{alg})$ and $\text{val}_{\hat{\Phi}(S)}(\textit{self\/}) = \hat{t}_S$.

\begin{lemma}\label{lem-tree-update}

There exists a rule $r_S^{self}$ with $\Delta_{r_S^{self}}(\hat{\Phi}(S)) = \{ (\textit{self\/}, \hat{t}_{\tau(S)} ) \}$. The rule uses only the operators of the tree algebra and terms of the form $\textit{raise\/}(t)$ with $t$ appearing in $\text{val}_{\hat{\Phi}(S)}(\textit{self\/})$.

\end{lemma}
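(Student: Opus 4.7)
The plan is to apply Proposition~\ref{prop-tree-difference} and Corollary~\ref{prop-tree-update} to the pair of trees $\hat{t}_S$ and $\hat{t}_{\tau(S)}$ and read off the desired rule directly. Proposition~\ref{prop-tree-difference} supplies a tree algebra term $\theta$ with $\hat{t}_{\tau(S)} = \theta(\hat{t}_S)$, obtained by structural induction from the leaves of $\hat{t}_{\tau(S)}$ towards its root: each subtree of $\hat{t}_{\tau(S)}$ that already occurs in $\hat{t}_S$ is retrieved via $\textit{subtree\/}(\textit{raise\/}(o))$ for a node $o$ of $\hat{t}_S$, while genuinely new subtrees are assembled by \textit{label\_hedge\/} from labels in $L$ and from values already present as leaves of $\hat{t}_S$.

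The rule $r_S^{self}$ is then defined as the assignment $\textit{self\/} := \theta$, or equivalently, using Corollary~\ref{prop-tree-update}, as the parallel composition of (partial) node-level assignments whose compatible update multiset on the sublocations of $\textit{self\/}$ collapses to the single update $\{(\textit{self\/},\hat{t}_{\tau(S)})\}$. Since by construction evaluation of $\theta$ in $\hat{\Phi}(S)$ yields $\theta(\hat{t}_S) = \hat{t}_{\tau(S)}$, this rule produces exactly the required update set in $\hat{\Phi}(S)$.

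The main obstacle, and the reason the statement is not an immediate corollary, is verifying the side condition that every non-operator term appearing in $r_S^{self}$ has the form $\textit{raise\/}(t)$ for some $t$ occurring in $\hat{t}_S$. I would carry out this check by induction along the four cases in the proof of Proposition~\ref{prop-tree-difference}. Reused subtrees satisfy the condition because node identifiers of $\hat{t}_S$ are themselves values in $\hat{t}_S$, so applying \textit{raise\/} gives terms of the required form. For new signature entries involving fresh function symbols, the reserve primitive of the background is used to introduce them, and the labels $\texttt{func}$, $\texttt{name}$, $\texttt{arity}$ are background constants in $L$. For new subtrees in the rule part, the tree $t_{r_{[\tau(S)]}}$ representing the successor rule is itself derived from the self-representation in $\hat{t}_S$ via the extraction function $\beta$ and the constructions of Lemmata~\ref{coro:rule}--\ref{lem-relatBehavEquiv}, so every term occurring in it either already occurs in $\hat{t}_S$, is a constant in $L$, or is a freshly drawn reserve symbol accessed through the same reserve primitive. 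Collecting these observations and parallel-composing the corresponding node assignments yields the required rule $r_S^{self}$, with the stated restriction on the shape of its terms.
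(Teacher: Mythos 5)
Your proposal is correct and follows essentially the same route as the paper: both invoke Proposition~\ref{prop-tree-difference} and Corollary~\ref{prop-tree-update} to express $\hat{t}_{\tau(S)}$ as a tree-algebra expression over $\hat{t}_S$, turn this into a (parallel composition of partial) assignment(s) to \textit{self\/} whose update multiset collapses to $\{(\textit{self\/},\hat{t}_{\tau(S)})\}$, and then verify by a case analysis that all non-operator terms are of the form $\textit{raise\/}(t)$ with $t$ in $\text{val}_{\hat{\Phi}(S)}(\textit{self\/})$ (the paper merely makes the split into a signature part, handled by \textit{NewFunc\/}-style partial assignments, and a rule part slightly more explicit).
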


\begin{proof}
Concerning the signature finitely many new function symbols are added to $\Sigma_S$ by the transition of $\mathcal{A}$ to $\tau(S)$. So we can define a rule $r_{\hat{\Phi}(S)}^{sig}$ as a finite parallel composition of rules $\textit{NewFunc\/}(f_i,n_i)$ ($i = 1,\dots,k$), where $\textit{NewFunc\/}(f,n)$ is defined as\footnote{Note that \textit{sign\/} is a logical variable bound to a node in the tree. As such it cannot appear on the left-hand side of an assignment or a partial assignment, but the sublocation corresponding to this node can. This sublocation is \textit{raise\/}(\textit{sign\/}), and we wrote in Section \ref{sec:rasm} that we will blur the syntactic distinction between values denoting function symbols and the function symbols themselves, as there is no risk of confusion. The collapse of the update multiset containing the shared updates defined in this way give rise to the explicit construction as used in Example \ref{bsp-tree}.}
\begin{gather*}
\texttt{LET}\; \textit{sign\/}\; = \textbf{I} o . ( \textit{root\/}(\textit{self\/}) \prec_c o \wedge \textit{label\/}(o) = \texttt{signature} ) \; \texttt{IN} \\
\textit{sign\/}\; \leftleftarrows^{\textit{right\_extend\/}}\; \textit{label\_hedge\/}(\texttt{func}, \langle f \rangle \; \langle n \rangle ) 
\end{gather*}

According to Lemma \ref{coro:rule} we have $\tau(S) = S + \Delta_{r_S}(S)$, so every new function symbol $f_i$ must be taken from the reserve $\Sigma_{res}$ and its arity $n_i$ must be the value of a term used in $r_S$. Such terms also occur in $r_{[S]}$, hence they must have the form $\textit{raise\/}(t)$ with $t$ appearing in $\text{val}_{\hat{\Phi}(S)}(\textit{self\/})$.

Concerning the subtrees representing the rule represented in \textit{self\/} the rules $r_{[S]}$ and $r_{[\tau(S)]}$ take the form
\[ \texttt{PAR}\; \texttt{IF}\; \varphi_1 \;\texttt{THEN}\; r_1 \;\texttt{ENDIF}\; \dots \texttt{IF}\; \varphi_l \;\texttt{THEN}\; r_l \;\texttt{ENDIF} \; \texttt{ENDPAR}\; , \]

\noindent
in which the rules $r_i$ are parallel compositions of assignment rules $f(t_1,\dots,t_n) := t_0$ with $f \in \Sigma_S - \Sigma_{alg}$ and partial update rules $f(t_1,\dots,t_n) \leftleftarrows^{op} t_1^\prime ,\dots, t_m^\prime$. These are represented by trees of the form
\[ \textit{label\_hedge}(\texttt{update},\texttt{func} \langle f \rangle \; \texttt{term} \langle t_1 \dots t_n \rangle \; \texttt{term} \langle t_0 \rangle) \]
and
\[ \textit{label\_hedge}(\texttt{partial},\texttt{func} \langle f \rangle \; \texttt{func} \langle op \rangle \; \texttt{term} \langle t_1 \dots t_n \rangle \; \texttt{term} \langle t_1^\prime \dots t_m^\prime \rangle) , \]

respectively. According to Proposition \ref{prop-tree-difference} $t_{r_{[\tau(S)]}}$ can be defined by a tree algebra expression on $t_{r_{[S]}}$ with values of the form $\textit{drop\/}(t)$ for $t$ appearing in $r_{[S]}$. Again, these terms must have the form $\textit{raise\/}(t)$ with $t$ appearing in $\text{val}_{\hat{\Phi}(S)}(\textit{self\/})$. Corollary \ref{prop-tree-update} defines a rule $r_{\hat{\Phi}(S)}^{rule}$, and $r_S^{self}$ is then the parallel composition of $r_{\hat{\Phi}(S)}^{sig}$ and $r_{\hat{\Phi}(S)}^{rule}$.
\end{proof}

Lemma \ref{lem-tree-update} shows that we can always find a rule that transforms one tree representation into another one. If we had chosen a different self-representation, we would need similarly powerful manipulation operators that ensure an analogous result. For our proof here it will be essential to show that only finitely many such rules are needed. We will show this by the following sequence of lemmata, by means of which we extend the applicability of the rule $r_S^{\textit{self\/}}$ to trees $\hat{\Phi}(S^\prime)$ for other states $S^\prime$.

\begin{lemma}\label{lem-tree-coincidence}

If states $S$ and $S^\prime$ coincide on $W$, then $\Delta_{r_S^{self}}(\hat{\Phi}(S^\prime)) = \{ (\textit{self\/}, \hat{t}_{\tau(S^\prime)} ) \}$.

\end{lemma}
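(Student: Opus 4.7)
The plan is to show that $\hat{\Phi}(S)$ and $\hat{\Phi}(S')$ are indistinguishable to the rule $r_S^{self}$: their values at $\textit{self\/}$ agree, so the rule — which consults only that location — produces the same shared-update multiset, and the single resulting update also equals $(\textit{self\/}, \hat{t}_{\tau(S')})$. Once both facts are in hand, the statement follows immediately from Lemma~\ref{lem-tree-update} applied to $S$. The argument thus splits into (i) establishing $\hat{t}_S = \hat{t}_{S'}$ and (ii) establishing $\hat{t}_{\tau(S)} = \hat{t}_{\tau(S')}$.

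For (i), the hypothesis $\text{val}_S(t) = \text{val}_{S'}(t)$ for all $t \in W$ forces $S$ and $S'$ into the same $WS$-equivalence class: for $t \in W_{pt}$ the common value lies in $\mathbb{T}_S$, so $\beta(\text{val}_S(t)) = \beta(\text{val}_{S'}(t))$ as tuples of terms, which is precisely Definition~\ref{def-ws-equivalence}. Hence $[S] = [S']$ and $r_{[S]} = r_{[S']}$. Since the values shared on $W$ already encode the self-representation (and thereby the standard signature listed in its signature subtree), we also get $\Sigma_S = \Sigma_{S'}$, so by its defining clauses $\hat{t}_S = \hat{t}_{S'}$. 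Inspection of the construction in Lemma~\ref{lem-tree-update} shows that every right-hand side of $r_S^{self}$ is built from tree-algebra operators, fixed constants (labels, fresh names drawn from $\Sigma_{res}$, and drops of subterms of $\hat{t}_S$), and references to subtrees of the current value of $\textit{self\/}$; no other location is queried. Since $\text{val}_{\hat{\Phi}(S')}(\textit{self\/}) = \text{val}_{\hat{\Phi}(S)}(\textit{self\/})$, the shared-update multisets produced in the two states are identical, collapsing to the same singleton which, by Lemma~\ref{lem-tree-update}, is $\{(\textit{self\/}, \hat{t}_{\tau(S)})\}$.

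For (ii), the second clause of the bounded exploration postulate (Postulate~\ref{p-bounded}) applies already under mere coincidence on $W$ and gives $\ddot{\Delta}_{\mathcal{A}}(\text{res}(S,\Sigma_{alg})) = \ddot{\Delta}_{\mathcal{A}}(\text{res}(S',\Sigma_{alg}))$: the updates to the self-representation are the same in $S$ and $S'$. Together with (i), the successor self-representations coincide, which forces $\Sigma_{\tau(S)} = \Sigma_{\tau(S')}$ and $r_{[\tau(S)]} = r_{[\tau(S')]}$, whence $\hat{t}_{\tau(S)} = \hat{t}_{\tau(S')}$ as required. The main obstacle I anticipate is the treatment of fresh symbols introduced from $\Sigma_{res}$: a priori the transitions $\tau(S)$ and $\tau(S')$ could pick different reserve names, violating literal tree equality. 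This is handled by observing that the updates on $\Sigma_{alg}$-locations dictated by $\ddot{\Delta}_{\mathcal{A}}$ are the same for both states (so the same names get written into the signature subtree), and, if necessary, by invoking the closure of $\mathcal{S}$ under isomorphisms (Postulate~\ref{p-state}) to identify the two choices up to a permutation of $\Sigma_{res}$, which leaves the rest of the argument untouched.
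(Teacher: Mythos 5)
Your proposal follows essentially the same route as the paper: the pivot in both arguments is the second clause of the bounded exploration postulate, which applies under mere coincidence on $W$ (not strong coincidence) and forces $\ddot{\Delta}_{\mathcal{A}}(\text{res}(S,\Sigma_{alg})) = \ddot{\Delta}_{\mathcal{A}}(\text{res}(S^\prime,\Sigma_{alg}))$, so that the transformation of the self-representation is the same in both states and the rule $r_S^{self}$, built to effect that transformation on the tree in \textit{self\/}, does the right thing in $\hat{\Phi}(S^\prime)$ as well. Your split into (i) equality of the initial trees $\hat{t}_S = \hat{t}_{S^\prime}$ and (ii) equality of the successor trees $\hat{t}_{\tau(S)} = \hat{t}_{\tau(S^\prime)}$ is a somewhat cleaner organisation than the paper's prose, and your explicit handling of fresh names from $\Sigma_{res}$ addresses a point the paper passes over in silence.

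One step in part (i) is overstated and, read literally, would fail. You claim that $r_S^{self}$ ``consults only the location \textit{self\/}'', so that $\hat{t}_S = \hat{t}_{S^\prime}$ alone yields identical shared-update multisets in $\hat{\Phi}(S)$ and $\hat{\Phi}(S^\prime)$. But Lemma~\ref{lem-tree-update} states that $r_S^{self}$ uses terms of the form $\textit{raise\/}(t)$ with $t$ appearing in $\text{val}_{\hat{\Phi}(S)}(\textit{self\/})$; these raised terms are interpreted against the standard part of the state (for instance to compute the arities of newly created function symbols), and the relevant $t$ range over $W_{st} \cup W_\beta$. Under mere coincidence on $W$ the values of terms in $W_\beta$ need not agree between $S$ and $S^\prime$ --- that agreement is exactly the second clause of \emph{strong} coincidence, which is not among your hypotheses. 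So (i) by itself does not deliver $\Delta_{r_S^{self}}(\hat{\Phi}(S^\prime)) = \{(\textit{self\/}, \hat{t}_{\tau(S)})\}$; the missing agreement of the evaluated raised terms has to be extracted from the same appeal to the second bounded-exploration clause that you make in (ii), which is precisely how the paper routes the whole argument (through $\text{res}(\Delta_{\mathcal{A}}(S),\Sigma_{alg}) = \text{res}(\Delta_{\mathcal{A}}(S^\prime),\Sigma_{alg})$ rather than through the rule's syntactic locality). With that repair your argument coincides with the paper's.
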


\begin{proof}
The updates from $\mathcal{A}(S)$ to $\mathcal{A}(\tau(S))$ are defined by $\text{res}(\Delta_{\mathcal{A}}(S), \Sigma_{alg})$: we have $\text{res}(S, \Sigma_{alg}) + \text{res}(\Delta_{\mathcal{A}}(S), \Sigma_{alg}) = \text{res}(\tau(S), \Sigma_{alg})$. Analogously, the changes from $\mathcal{A}(S^\prime)$ to $\mathcal{A}(\tau(S^\prime))$ are defined by $\text{res}(\Delta_{\mathcal{A}}(S^\prime), \Sigma_{alg})$: we have $\text{res}(S^\prime, \Sigma_{alg}) + \text{res}(\Delta_{\mathcal{A}}(S^\prime), \Sigma_{alg}) = \text{res}(\tau(S^\prime), \Sigma_{alg})$. Then this also applies to the restrictions of $\mathcal{A}(S)$ and $\mathcal{A}(S^\prime)$ to the states defined over $\Sigma_S - \Sigma_{alg}$ and $\Sigma_{S^\prime} - \Sigma_{alg}$, respectively. These restricted algorithms are behaviourally equivalent to $r_{[S]}$ and $r_{[\tau(S)]}$.

Thus, the updates from $r_{[S^\prime]}$ to $r_{[\tau(S^\prime)]}$ are defined by $\text{res}(\Delta_{\mathcal{A}}(S^\prime), \Sigma_{alg})$. As $S$ and $S^\prime$ coincide on $W$ the bounded exploration postulate implies that $\text{res}(\Delta_{\mathcal{A}}(S^\prime), \Sigma_{alg}) = \text{res}(\Delta_{\mathcal{A}}(S), \Sigma_{alg})$ holds. As the updates from $r_{[S]}$ to $r_{[\tau(S)]}$ using their tree representations are equivalently expressed by $r_S^{\textit{self\/}}$, we conclude that the updates from the tree representation of $r_{[S^\prime]}$ to the tree representation of $r_{[\tau(S^\prime)]}$ are also defined by $r_S^{\textit{self\/}}$ as claimed.
\end{proof}

\begin{lemma}\label{lem-tree-isomorphism}

If states $S_1, S_2$ are isomorphic, and for a state $S$ we have $\Delta_{r_S^{\textit{self\/}}}(\hat{\Phi}(S_2)) = \{ (\textit{self\/}, \hat{t}_{S_2} ) \}$, then we also get $\Delta_{r_S^{\textit{self\/}}}(\hat{\Phi}(S_1)) = \{ (\textit{self\/}, \hat{t}_{S_1} ) \}$.

\end{lemma}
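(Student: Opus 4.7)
The plan is to mirror the argument used for Lemma~\ref{lem-isomorphism}, but now lifted to the self-representation component: the key is that both the canonical construction $\hat\Phi$ and the rule $r_S^{\textit{self\/}}$ are defined using only tree-algebra operators, labels in $L$, reserve function symbols, and terms, all of which commute with isomorphisms. Let $\sigma$ be the isomorphism with $S_2 = \sigma S_1$. First I would extend $\sigma$ in the standard way to the extended base sets and tree values: $\sigma$ acts on the standard base set as given, and acts as the identity on the purely syntactic/structural components (labels in $L$, reserve function symbols, tree skeletons, terms in $\mathbb T_S$, and \textit{raise\/}/\textit{drop\/} of any of these). Because the tree-algebra constructors build values structurally from their arguments, this extension is well-defined and remains an isomorphism.

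Next I would verify $\sigma\,\hat\Phi(S_1) = \hat\Phi(S_2)$ by unpacking the construction of $\hat\Phi$: (i) the standard part $\text{res}(\hat\Phi(S_i),\Sigma_{S_i}-\Sigma_{alg})=\text{res}(S_i,\Sigma_{S_i}-\Sigma_{alg})$ is carried from $S_1$ to $S_2$ by $\sigma$ because this is precisely what it means for $\sigma$ to be an isomorphism of the states; (ii) the signature subtree $\textit{label\_hedge\/}(\texttt{signature},s_0 s_1\dots s_k)$ depends only on the signature $\Sigma_{S_i}$, which coincides for $S_1$ and $S_2$ since they are isomorphic, and its tree encoding is built from labels and $\textit{drop\/}$ed function symbols that $\sigma$ fixes; (iii) the rule subtree $t_{r_{[S]}}$ is the syntactic tree encoding of $r_{[S]}$, which is itself built from the critical terms in $W_S$ and the fixed labels in $L$, hence invariant under $\sigma$ up to the structural action already described.

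Now I invoke isomorphism-invariance of ASM update sets: every construct in $r_S^{\textit{self\/}}$ (parallel composition, let, \texttt{IF}, assignment, and partial assignment built from tree-algebra operators) is defined in terms of $\text{val}$ evaluations of terms, which commute with $\sigma$. Hence
\[ \Delta_{r_S^{\textit{self\/}}}(\hat\Phi(S_2)) \;=\; \Delta_{r_S^{\textit{self\/}}}(\sigma\,\hat\Phi(S_1)) \;=\; \sigma\,\Delta_{r_S^{\textit{self\/}}}(\hat\Phi(S_1)). \]
Combining this identity with the hypothesis $\Delta_{r_S^{\textit{self\/}}}(\hat\Phi(S_2)) = \{(\textit{self\/},\hat t_{S_2})\}$ and the fact that $\sigma\,\hat t_{S_1}=\hat t_{S_2}$ (by the same canonical-construction argument applied to $\hat t$), I would apply $\sigma^{-1}$ on both sides to obtain $\Delta_{r_S^{\textit{self\/}}}(\hat\Phi(S_1)) = \{(\textit{self\/},\hat t_{S_1})\}$, as required.

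The main obstacle is step (ii)--(iii) above: one has to be careful to specify how $\sigma$ is extended to act on the extended base set so that tree values are handled coherently, and in particular that reserve function symbols introduced in the signature subtree and term-valued values in the rule subtree are fixed by $\sigma$. Once this lifting is in place, the remainder is the same boilerplate isomorphism-chasing argument as in Lemma~\ref{lem-isomorphism}, so no genuinely new combinatorial content is required.
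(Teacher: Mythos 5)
Your proposal is correct and follows essentially the same route as the paper: the paper's proof is exactly the two-line isomorphism-chasing argument $\Delta_{r_S^{\textit{self\/}}}(\hat{\Phi}(S_2)) = \sigma \Delta_{r_S^{\textit{self\/}}}(\hat{\Phi}(S_1))$ and $\hat{t}_{S_2} = \sigma \hat{t}_{S_1}$, followed by applying $\sigma^{-1}$. You additionally spell out why $\sigma$ lifts coherently to the extended base set, the tree values and the construction of $\hat\Phi$, which the paper leaves implicit; this is a useful elaboration but not a different argument.
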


\begin{proof}
Let $\sigma$ denote the isomorphism from $S_1$ to $S_2$, i.e. $S_2 = \sigma S_1$. Then $\Delta_{r_S^{\textit{self\/}}}(\hat{\Phi}(S_2)) = \sigma \Delta_{r_S^{\textit{self\/}}}(\hat{\Phi}(S_1))$ and likewise $\hat{t}_{S_2} = \sigma \hat{t}_{S_1}$. 

This implies $\sigma \Delta_{r_S^{\textit{self\/}}}(\hat{\Phi}(S_1)) = \sigma \{ (\textit{self\/}, \hat{t}_{S_1}) \}$ and further $\Delta_{r_S^{\textit{self\/}}}(\hat{\Phi}(S_1)) = \{ (\textit{self\/}, \hat{t}_{S_1}) \}$ by applying $\sigma^{-1}$ to both sides.
\end{proof}

\subsection{$W$-Similarity}

For Lemma \ref{lem-relative-equivalence} we exploited relative $W$-similarity, which exploits terms in classes $[S]$ defined by the evaluation of the bounded exploration witness $W$. As the class $[S]$ depends on the evaluation of the terms in $W_{pt}$, also relative equivalence implicitly depends on $W_{pt}$. Now we require a notion of $W$-similarity that is grounded only on $W$.

\begin{definition}\label{def-wsimilarity}\rm

Two states $S_1, S_2$ are called {\em $W$-similar} iff $\sim_{S_1} = \sim_{S_2}$ holds, where the equivalence relation $\sim_{S_i}$ on $W$ is defined by $t \sim_{S_i} t^\prime$ iff $\text{val}_{S_i}(t) = \text{val}_{S_i}(t^\prime)$.

\end{definition}

The proof of the following lemma will be quite analogous to the proof of Lemma \ref{lem-relative-equivalence}.

\begin{lemma}\label{lem-tree-similarity}

If states $S$ and $S^\prime$ are $W$-similar, then $\Delta_{r_S^{self}}(\hat{\Phi}(S^\prime)) = \{ (\textit{self\/}, \hat{t}_{\tau(S^\prime)} ) \}$.

\end{lemma}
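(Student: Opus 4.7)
The plan is to reduce $W$-similarity of $S$ and $S^\prime$ to coincidence on $W$, after which Lemma~\ref{lem-tree-coincidence} applies directly. The argument mirrors the proof of Lemma~\ref{lem-relative-equivalence}, now with the coarser notion of $W$-similarity in place of relative $W$-similarity.

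First, by Lemma~\ref{lem-tree-isomorphism} I may assume without loss of generality that the base sets of $S$ and $S^\prime$ are disjoint: otherwise I replace $S^\prime$ by an isomorphic copy obtained by renaming every shared element to a fresh one, which is again a state of $\mathcal{A}$ by the abstract state postulate and allows the conclusion to be transferred back through Lemma~\ref{lem-tree-isomorphism}.

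Second, I construct a structure $\hat{S}$ by modifying $S^\prime$ as follows: for every term $t \in W$, the value $\text{val}_{S^\prime}(t)$ is replaced by the corresponding value $\text{val}_S(t)$, while every remaining element of the base set of $S^\prime$ is left untouched. Well-definedness of this uniform substitution is exactly what $W$-similarity supplies: if $\text{val}_{S^\prime}(t_1) = \text{val}_{S^\prime}(t_2)$, then Definition~\ref{def-wsimilarity} forces $\text{val}_S(t_1) = \text{val}_S(t_2)$, so the two terms ask for the same replacement; disjointness of the base sets then rules out clashes between the freshly inserted values and those left untouched. By construction, $\hat{S}$ is isomorphic to $S^\prime$, so by the abstract state postulate it is itself a state of $\mathcal{A}$, and $S$ and $\hat{S}$ coincide on every term of $W$.

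Third, Lemma~\ref{lem-tree-coincidence} applied to the pair $(S,\hat{S})$ yields $\Delta_{r_S^{self}}(\hat{\Phi}(\hat{S})) = \{(\textit{self\/}, \hat{t}_{\tau(\hat{S})})\}$, and then Lemma~\ref{lem-tree-isomorphism} applied to the isomorphism between $\hat{S}$ and $S^\prime$ transfers this equality to $\Delta_{r_S^{self}}(\hat{\Phi}(S^\prime)) = \{(\textit{self\/}, \hat{t}_{\tau(S^\prime)})\}$, as desired. The main obstacle I anticipate is in the well-definedness step, because the values affected by the renaming include the complex tree assigned to \textit{self\/} together with the function names, arities and subterms occurring inside it; since \textit{self\/} itself will typically lie in $W$, its value is a tree whose constituent values may themselves be values of other terms in $W$. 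One must verify that the renaming respects the tree algebra and the background functions \textit{raise\/} and \textit{drop\/}, so that $\hat{S}$ is genuinely isomorphic to $S^\prime$ as a structure over $\Sigma_{S^\prime}$ rather than merely bijectively related on the base elements. This ultimately reduces to the observation that every constructor and operator prescribed by Definition~\ref{def-bg-class} is interpreted structurally over the universe, but making it precise will require a careful unfolding of how critical values propagate through the tree representation.
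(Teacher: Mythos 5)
Your proof is correct and follows essentially the same route as the paper's: reduce to disjoint base sets via an isomorphic copy, build $\hat{S}$ by substituting the $S$-values of the terms in $W$ into $S^\prime$ (well-defined precisely because of $W$-similarity), apply Lemma~\ref{lem-tree-coincidence} to the pair $(S,\hat{S})$, and transfer back with Lemma~\ref{lem-tree-isomorphism}. Your closing remark about verifying that the substitution respects the tree algebra and the \textit{raise\/}/\textit{drop\/} functions is a legitimate subtlety that the paper's own proof leaves implicit, but it does not change the argument.
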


\begin{proof}
If we replace every element in the base set of $S^\prime$ that also belongs to the base set of $S$ by a fresh element, we obtain a structure $S^{\prime\prime}$ isomorphic to $S^\prime$ and disjoint from $S$. By the abstract state postulate $S^{\prime\prime}$ is also a state of $\mathcal{A}$. Furthermore, by construction $S^\prime$ and $S^{\prime\prime}$ are also $W$-similar to $S$. Lemma \ref{lem-isomorphism} already covers equality on the ``standard part'' of the signature, so it suffices to show $\Delta_{r_S^{self}}(\hat{\Phi}(S^{\prime\prime})) = \{ (\textit{self\/}, \hat{t}_{\tau(S^{\prime\prime})} ) \}$, so without loss of generality we may assume that the base sets of $S$ and $S^\prime$ are disjoint. 

Define a new structure $\hat{S}$ by replacing in $S^\prime$ all values $\text{val}_{S^\prime}(t)$ with a critical term $t \in W$ by the corresponding value $\text{val}_{S}(t)$. Due to $W$-similarity we have $\text{val}_{S}(t) = \text{val}_{S}(t^\prime)$ iff $\text{val}_{S^\prime}(t) = \text{val}_{S^\prime}(t^\prime)$ holds, so the structure $\hat{S}$ is well-defined. Furthermore, $\hat{S}$ is isomorphic to $S^\prime$ and thus a state by the abstract state postulate. Furthermore, $S$ and $\hat{S}$ coincide on $W$, so Lemma \ref{lem-tree-coincidence} implies that $\Delta_{r_S^{self}}(\hat{\Phi}(\hat{S})) = \{ (\textit{self\/}, \hat{t}_{\tau(\hat{S})} ) \}$ holds. Using again Lemma \ref{lem-tree-isomorphism} completes the proof.
\end{proof}

Using Lemma \ref{lem-tree-similarity} we can exploit that there are only finitely many $W$-similarity classes to define a single rule $r^{\textit{self\/}}$ for all states. Let $\varphi_S$ be the following Boolean term:
\[ \bigwedge_{\substack{t_i, t_j \in W \\ \text{val}_{S}(t_i) = \text{val}_{S}(t_j)}} t_i = t_j \quad \wedge \bigwedge_{\substack{t_i, t_j \in W \\ \text{val}_{S}(t_i) \neq \text{val}_{S}(t_j)}} \neg (t_i = t_j). \] 

Clearly, a state $S^\prime$ satisfies $\varphi_S$ iff $S$ and $S^\prime$ are $W$-similar. As $W$ is finite, we obtain a partition of $\mathcal{S}$ into classes $[S_1] ,\dots, [S_n]$ with representatives $S_i$ ($i=1,\dots,n$) such that a state $S$ belongs to the class $[S_i]$ iff $S$ and $S_i$ are $W$-similar. We define the rule $r^{\textit{self\/}}$ by
\[ \texttt{PAR}\; (\texttt{IF}\; \varphi_{S_1} \;\texttt{THEN}\; r_{S_1}^{\textit{self\/}} \;\texttt{ENDIF}) \; \dots \;
(\texttt{IF}\; \varphi_{S_n} \;\texttt{THEN}\; r_{S_n}^{\textit{self\/}} \;\texttt{ENDIF}) \; \texttt{ENDPAR} \]

Using this rule we obtain the following lemma, which is a straightforward consequence of the previous lemmata. 

\begin{lemma}\label{lem-treeBehavEquiv}  

For all states $S$ we have $\Delta_{r^{self}}(\hat{\Phi}(S)) = \{ (\textit{self\/}, \hat{t}_{\tau(S)} ) \}$.

\end{lemma}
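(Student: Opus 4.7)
The plan is to follow the template established in Lemma~\ref{lem-relatBehavEquiv}, but lifted from the restricted update sets on $\Sigma_S - \Sigma_{alg}$ to the tree update on \textit{self\/}. The rule $r^{\textit{self\/}}$ is a \texttt{PAR}-composition of guarded rules indexed by the classes $[S_1],\dots,[S_n]$ of $W$-similarity, which by construction partition $\mathcal{S}$. So for any state $S$ of $\mathcal{A}$ there is exactly one index $i$ with $S$ $W$-similar to $S_i$, and hence a unique guard $\varphi_{S_i}$ that can be satisfied.

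First I would verify that, evaluated in $\hat{\Phi}(S)$, the guard $\varphi_{S_j}$ is true iff $j=i$. The terms in $W_{st}$ do not involve $\Sigma_{alg}$ and $\hat{\Phi}(S)$ coincides with $S$ on $\Sigma_S - \Sigma_{alg}$, so their values agree. Terms in $W_{pt}$ that mention $\Sigma_{alg}$ symbols are, in $\hat{\Phi}(S)$, accessed through the subtree of $\text{val}_{\hat{\Phi}(S)}(\textit{self\/})=\hat{t}_S$ labelled \texttt{rule} together with \textit{raise\/}; by construction of $\hat{t}_S$ this reproduces exactly the corresponding values over $\Sigma_{alg}$ in $S$. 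Consequently the equality/inequality pattern over $W$ in $\hat{\Phi}(S)$ matches that of $S$, so $\hat{\Phi}(S)\models\varphi_{S_j}$ iff $S$ is $W$-similar to $S_j$, i.e.\ iff $j=i$.

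Once this is established, the semantics of \texttt{PAR} with conditionals reduces $r^{\textit{self\/}}$ on $\hat{\Phi}(S)$ to a single active branch, giving
\[ \Delta_{r^{\textit{self\/}}}(\hat{\Phi}(S)) \;=\; \Delta_{r_{S_i}^{\textit{self\/}}}(\hat{\Phi}(S)). \]
Since $S$ and $S_i$ are $W$-similar, Lemma~\ref{lem-tree-similarity} applies and yields $\Delta_{r_{S_i}^{\textit{self\/}}}(\hat{\Phi}(S)) = \{(\textit{self\/},\hat{t}_{\tau(S)})\}$. Combining the two equalities gives the claim, and no clashes arise between branches because only one branch produces a non-empty update set.

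The main obstacle is justifying the interpretation of $\varphi_{S_j}$ in $\hat{\Phi}(S)$ rather than in $S$: formally the guard is written using terms over $\Sigma_S$ (possibly in $\Sigma_{alg}$), while $\hat{\Phi}(S)$ only has the signature $(\Sigma_S-\Sigma_{alg})\cup\{\textit{self\/}\}$. The legitimate way to carry this out is to replace each occurrence of a term $t\in W_{pt}$ in $\varphi_{S_j}$ by an equivalent term of the form $\textit{raise\/}(\pi_k(\beta(\textit{self\/})))$ (or a composition of tree-algebra selectors applied to \textit{self\/}), using the extraction function $\beta$ guaranteed by the background postulate. The delicate point is to check that this rewriting preserves values on states of the shape $\hat{\Phi}(S)$, which is exactly what the construction of $\hat{t}_S$ was designed to ensure; beyond this verification, the combinatorial argument is routine.
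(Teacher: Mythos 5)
Your proposal is correct and follows essentially the same route as the paper's proof: identify the unique $W$-similarity class $[S_i]$ containing $S$, observe that only the guard $\varphi_{S_i}$ fires, and then invoke Lemma~\ref{lem-tree-similarity}. In fact you are more careful than the paper on one point — the paper simply writes $\text{val}_S(\varphi_j)$ whereas the rule is evaluated in $\hat{\Phi}(S)$, and your discussion of rewriting the $W_{pt}$-terms via $\beta$ and \textit{raise\/} applied to \textit{self\/} addresses exactly the gap the paper leaves implicit.
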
 

\begin{proof}
There is exactly one class $[S_i]$ with representing state $S_i$ such that $S \in [S_i]$ holds. Then $\text{val}_{S}(\varphi_j)$ is \textbf{true} iff $j = i$. Then we get $\Delta_{r^{self}}(\hat{\Phi}(S)) = \Delta_{r_{S_i}^{self}}(\hat{\Phi}(S)) = \{ (\textit{self\/}, \hat{t}_{\tau(S)} ) \}$ using the definition of $r^{\textit{self\/}}$ and Lemma \ref{lem-tree-similarity}.
\end{proof}

We now extend the tree $\hat{t}_S$ to a definition of a tree $t_S$ satisfying all the requirements from (2), (3) and (4) above including now the updates to \textit{self\/}. This tree is defined as 
\begin{gather*}
\textit{label\_hedge\/}(\texttt{self}, \textit{label\_hedge\/}(\texttt{signature}, s_0 s_1 \dots s_k), \\ \textit{label\_hedge\/}(\texttt{rule}, \textit{label\_hedge\/}(\texttt{par}, t_{r_{[S]}} \; t_{r^{\textit{self\/}}})) , 
\end{gather*}

where $t_{r^{\textit{self\/}}}$ is the tree representation of the rule $r^{\textit{self\/}}$. Furthermore, we let $\Phi(S)$ denote the structure defined by $res(\Phi(S), \Sigma_S - \Sigma_{alg}) = res(S, \Sigma_S - \Sigma_{alg})$ and $\text{val}_{\Phi(S)}(\textit{self\/}) = t_S$. We use this to define an rsASM $\mathcal{M}$.

\begin{lemma}\label{lem-rsasm}

For each RSA $\mathcal{A}$ we obtain an rsASM $\mathcal{M}$  with the set $\mathcal{S}_{\mathcal{M}} = \{ \Phi(S) \mid S \in \mathcal{S} \}$ of states, the set $\mathcal{I}_{\mathcal{M}} = \{ \Phi(S) \mid S \in \mathcal{I} \}$ of initial states, and the state transition function $\tau_{\mathcal{M}}$ with $\tau_{\mathcal{M}}(\Phi(S)) = \Phi(\tau(S))$.

\end{lemma}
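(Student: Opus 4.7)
The plan is to verify that the triple $(\mathcal{S}_{\mathcal{M}}, \mathcal{I}_{\mathcal{M}}, \tau_{\mathcal{M}})$ defined via $\Phi$ satisfies all the clauses of Definition \ref{def-rsasm}. The initial signature of $\mathcal{M}$ is $(\Sigma - \Sigma_{alg}) \cup \{ \textit{self\/} \}$, where $\Sigma$ is the initial signature of $\mathcal{A}$; the background is inherited from $\mathcal{A}$, which, by the Background Postulate \ref{p-background}, already supplies the tree operators, labels, reserve, and functions \textit{drop\/} and \textit{raise\/} demanded by Definition \ref{def-rsasm-background}. The rule recorded in \textit{self\/} is, up to the fixed wrapping into $\texttt{label\_hedge}(\texttt{rule},\texttt{label\_hedge}(\texttt{par},\cdot\;\cdot))$, the parallel composition of (the tree representations of) $r_{[S]}$ and $r^{\textit{self\/}}$, as prescribed by the definition of $t_S$ immediately preceding the lemma.

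First I would check the static clauses. That $\textit{self\/} \in \Sigma_{\Phi(S)}$ and that the signature subtree faithfully lists $\Sigma_{\Phi(S)}$ are immediate from the construction of $t_S$. Closure of $\mathcal{I}_{\mathcal{M}}$ and $\mathcal{S}_{\mathcal{M}}$ under isomorphisms transfers from the corresponding closure of $\mathcal{I}$ and $\mathcal{S}$ of $\mathcal{A}$ by the Abstract State Postulate \ref{p-state}, since $\Phi$ commutes with any isomorphism acting on the (standard) base set. The requirement that all initial states coincide on \textit{self\/} follows from clause (5) of Postulate \ref{p-state}: all $S_0, S_0^\prime \in \mathcal{I}$ yield the same sequential algorithm $\mathcal{A}(S_0) = \mathcal{A}(S_0^\prime)$ and hence the same $W S$-equivalence class and the same representative rule $r_{[S_0]}$; combined with the globally fixed $r^{\textit{self\/}}$ this forces $\text{val}_{\Phi(S_0)}(\textit{self\/}) = t_{S_0} = t_{S_0^\prime} = \text{val}_{\Phi(S_0^\prime)}(\textit{self\/})$.

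The dynamic clause is the heart of the lemma: one must show $\tau_{\mathcal{M}}(\Phi(S)) = \Phi(\tau(S))$, i.e.\ that the rule $r_{\Phi(S)} = \textit{raise\/}(\textit{rule\/}(t_S))$ applied to $\Phi(S)$ produces exactly the update set that takes $\Phi(S)$ to $\Phi(\tau(S))$. Split this update set into a part living on $\Sigma_S - \Sigma_{alg}$ and a part updating \textit{self\/}. For the former, Lemma \ref{lem-relatBehavEquiv} gives $\Delta^{st}_{r_{[S]}}(\Phi(S)) = \Delta^{st}_{\mathcal{A}}(S)$, noting that $r_{[S]}$ only mentions function symbols in $\Sigma_S - \Sigma_{alg}$ and that $\text{res}(\Phi(S),\Sigma_S - \Sigma_{alg}) = \text{res}(S,\Sigma_S - \Sigma_{alg})$, so the evaluation of $r_{[S]}$ in $\Phi(S)$ agrees with its evaluation in $S$. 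For the latter, Lemma \ref{lem-treeBehavEquiv} delivers $\Delta_{r^{\textit{self\/}}}(\hat{\Phi}(S)) = \{(\textit{self\/}, \hat{t}_{\tau(S)})\}$, and the passage from $\hat{\Phi}(S)$ to $\Phi(S)$ (and from $\hat{t}_{\tau(S)}$ to $t_{\tau(S)}$) only adds the globally fixed subtree $t_{r^{\textit{self\/}}}$ underneath a \texttt{par} node, so the same rule (with its node references rerouted through this constant additional \texttt{par} node) produces the single update $\{(\textit{self\/}, t_{\tau(S)})\}$. Parallel composition of these two contributions gives exactly the update set whose application yields $\Phi(\tau(S))$.

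The main obstacle I expect is precisely this last bookkeeping step: the rule $r^{\textit{self\/}}$ was constructed by Lemma \ref{lem-tree-update} to manipulate $\hat{t}_S$, which omits its own representation, whereas $t_S$ contains the extra \texttt{par} node and the fixed subtree $t_{r^{\textit{self\/}}}$. One has to verify that the partial updates and shared updates defining $r^{\textit{self\/}}$ can be relocated to the correct descendants of the \texttt{rule}-node in $t_S$ without altering the invariant subtree $t_{r^{\textit{self\/}}}$, so that the update multiset collapses consistently to $\{(\textit{self\/}, t_{\tau(S)})\}$. Because $t_{r^{\textit{self\/}}}$ is the same in every state and the tree operations used in $r^{\textit{self\/}}$ act only on designated subtrees identified by the extraction functions \textit{signature\/} and \textit{rule\/} together with the \texttt{par}-children, this adjustment is uniform across all states and involves no additional case analysis, so the construction closes and $\mathcal{M}$ is a well-defined rsASM as claimed.
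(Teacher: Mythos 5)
Your proposal follows essentially the same route as the paper: verify the static clauses (signature with \textit{self\/}, closure under isomorphisms, coincidence of initial states on \textit{self\/} via $\mathcal{A}(S_0)=\mathcal{A}(S_0^\prime)$) and then establish the dynamic clause by splitting $\Delta_{r_{\Phi(S)}}(\Phi(S))$ into the standard part handled by Lemma~\ref{lem-relatBehavEquiv} and the update of \textit{self\/} handled by Lemma~\ref{lem-treeBehavEquiv}. You are in fact slightly more careful than the paper on one point: the paper silently applies Lemma~\ref{lem-treeBehavEquiv} (stated for $\hat{\Phi}(S)$ and $\hat{t}_{\tau(S)}$) to obtain the update $(\textit{self\/},t_{\tau(S)})$ in the state $\Phi(S)$, whereas you explicitly justify this passage by noting that $t_S$ differs from $\hat{t}_S$ only by the state-independent \texttt{par} node and the invariant subtree $t_{r^{\textit{self\/}}}$, so the node references of $r^{\textit{self\/}}$ relocate uniformly.
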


\begin{proof}
According to the abstract state postulate all initial states $S_0, S_0^\prime \in \mathcal{I}$ are defined over the same signature $\Sigma_0 \cup \Sigma_{alg}$. As $\mathcal{I}$ is closed under isomorphisms, this also holds for $\mathcal{I}_{\mathcal{M}}$. As $\mathcal{A}(S_0) = \mathcal{A}(S_0^\prime)$ holds, the states $\Phi(S_0)$ and $\Phi(S_0^\prime)$ coincide on \textit{self\/}.

We have $\textit{raise\/}(\textit{rule\/}(\text{val}_{\Phi(S)}(\textit{self\/}))) = \texttt{PAR}\; r_{[S]} \; r^{\textit{self\/}} \; \texttt{ENDPAR}$. Denoting this rule as $r_{\Phi(S)}$ we obtain
\[ \Delta_{r_{\Phi(S)}}(\Phi(S)) = \Delta^{st}_{r_{[S]}}(S) \cup \{ (\textit{self\/}, t_{\tau(S)} ) \} , \]

which implies $\Phi(S) + \Delta_{r_{\Phi(S)}}(\Phi(S)) = \Phi(\tau(S))$ using Lemmata \ref{lem-relatBehavEquiv} and \ref{lem-treeBehavEquiv}. Thus, all requirements from Definition \ref{def-rsasm} are satisfied.
\end{proof}

With this construction of an rsASM $\mathcal{M}$ from an RSA $\mathcal{A}$ we can state and prove the main result of this section.

\begin{theorem}

For every RSA $\mathcal{A}$ there is a behaviourally equivalent rsASM $\mathcal{M}$.

\end{theorem}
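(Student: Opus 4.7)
The plan is to take the rsASM $\mathcal{M}$ already furnished by Lemma \ref{lem-rsasm} and argue that the state-wise map $S \mapsto \Phi(S)$ lifts to the bijection on runs required by Definition \ref{def-equivalence}. First, given any run $S_0, S_1, \dots$ of $\mathcal{A}$, apply $\Phi$ coordinate-wise. The resulting sequence is a run of $\mathcal{M}$ because $\Phi(S_0) \in \mathcal{I}_{\mathcal{M}}$ by definition and $\tau_{\mathcal{M}}(\Phi(S_i)) = \Phi(\tau(S_i)) = \Phi(S_{i+1})$ by Lemma \ref{lem-rsasm}. Conversely, every run of $\mathcal{M}$ starts in some $\Phi(S_0)$ with $S_0 \in \mathcal{I}$, and since both $\tau$ and $\tau_{\mathcal{M}}$ are deterministic the successors are forced to coincide with $\Phi(S_1), \Phi(S_2), \dots$. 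Hence the lifted $\Phi$ is a bijection between the runs of $\mathcal{A}$ and those of $\mathcal{M}$.

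Next I would verify the two clauses of Definition \ref{def-equivalence} at each run position $i$. Clause (1), i.e.\ $\text{res}(\Phi(S_i), \Sigma_{S_i} - \Sigma_{alg}) = \text{res}(S_i, \Sigma_{S_i} - \Sigma_{alg})$, is explicitly imposed in the construction of $\Phi(S)$ in Lemma \ref{lem-rsasm}. For clause (2), observe that by construction of $t_{S_i}$ the rule extracted from \textit{self\/} in $\Phi(S_i)$ is $\texttt{PAR}\; r_{[S_i]}\; r^{\textit{self\/}}\;\texttt{ENDPAR}$; its restriction to $\Sigma_{S_i} - \Sigma_{alg}$ is $r_{[S_i]}$, since $r^{\textit{self\/}}$ only writes to \textit{self\/}. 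Lemma \ref{lem-relatBehavEquiv} then yields $\Delta^{st}_{r_{[S_i]}}(S') = \Delta^{st}_{\mathcal{A}}(S')$ for every $S' \in [S_i]$, so the restricted sequential algorithms extracted from $\mathcal{A}(S_i)$ and $\mathcal{M}(\Phi(S_i))$ agree on every state on which behavioural equivalence is tested.

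The main obstacle I anticipate lies in clause (2): one must make precise that the two restricted sequential algorithms are behaviourally equivalent as sequential algorithms in the sense of \cite{gurevich:tocl2000}, not merely that they coincide on the single state $S_i$. The way out is that, by the sequential ASM thesis, a sequential algorithm is already determined (up to behavioural equivalence) by its update-set behaviour on one representative of each $W$-similarity class of a bounded exploration witness. The classes contained in the $W S$-equivalence class $[S_i]$ are precisely what Lemma \ref{lem-relatBehavEquiv} controls, and the isomorphism-closure imposed by the abstract state postulate extends the agreement to the full state space of the two restricted algorithms. Combining this observation with the two preceding paragraphs discharges clause (2) and thereby proves the theorem.
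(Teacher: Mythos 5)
Your proposal follows essentially the same route as the paper's own proof: construct $\mathcal{M}$ via Lemma \ref{lem-rsasm}, lift $\Phi$ to the required bijection on runs, read off clause (1) of Definition \ref{def-equivalence} from the construction of $\Phi(S)$, and discharge clause (2) by observing that the restriction of $\texttt{PAR}\; r_{[S]}\; r^{\textit{self\/}}\;\texttt{ENDPAR}$ to $\Sigma_S - \Sigma_{alg}$ is $r_{[S]}$, which Lemma \ref{lem-relatBehavEquiv} ties to $\mathcal{A}(S)$. In fact you supply slightly more detail than the paper does on the bijectivity of the lifted $\Phi$ and on why update-set agreement over $[S]$ yields behavioural equivalence of the restricted sequential algorithms, both of which the paper leaves implicit.
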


\begin{proof}
Given the RSA $\mathcal{A}$ we define the rsASM $\mathcal{M}$ as in Lemma \ref{lem-rsasm}. It remains to show that $\mathcal{A}$ and $\mathcal{M}$ are behaviourally equivalent in the sense of Definition \ref{def-equivalence}.

For this let $S_0, S_1, \dots$ be an arbitrary run of $\mathcal{A}$. Then according to Lemma \ref{lem-rsasm} $\Phi(S_0), \Phi(S_1), \dots$ is a run of $\mathcal{M}$, which defines the required bijection between runs of $\mathcal{A}$ and $\mathcal{M}$. 

Due to our construction of the states $\Phi(S)$ we further have $\text{res}(\mathcal{A}, \Sigma_{S} - \Sigma_{alg}) = \text{res}(\mathcal{M}, \Sigma_{S} - \Sigma_{alg})$, which implies property (1) of Definition \ref{def-equivalence}.

The restriction of the sequential algorithm $\mathcal{A}(S)$ to $\Sigma_{S} - \Sigma_{alg}$ is expressed by the rule $r_{[S]}$. The sequential algorithm represented in $\Phi(S)$ is expressed by the rule $\texttt{PAR}\; r_{[S]} \; r^{\textit{self\/}} \;\texttt{ENDPAR}$, and its restriction to $\Sigma_{S} - \Sigma_{alg}$ is also expressed by the rule $r_{[S]}$, which implies property (2) of Definition \ref{def-equivalence}.
\end{proof}

\section{Related Work}\label{sec:others}

The ur-instance of a behavioural theory is Gurevich's celebrated {\em sequential ASM thesis} \cite{gurevich:tocl2000}, which states and proves that sequential algorithms are captured by sequential ASMs. A key contribution of this thesis is the language-independent definition of a sequential algorithm by a small set of intuitively understandable postulates on an arbitrary level of abstraction. As the sequential ASM thesis shows, the notion of sequential algorithm includes a form of bounded parallelism, which is a priori defined by the algorithm and does not depend on the actual state. However, parallel algorithms, e.g. for graph inversion or leader election, require unbounded parallelism. A behavioural theory of synchronous parallel algorithms has been first approached by Blass and Gurevich \cite{blass:tocl2003,blass:tocl2008}, but different from the sequential thesis the theory was not accepted, not even by the ASM community despite its inherent proof that ASMs \cite{boerger:2003} capture parallel algorithms. One reason is that the axiomatic definition exploits non-logical concepts such as mailbox, display and ken, whereas the sequential thesis only used logical concepts such as structures and sets of terms. Even the background, that is left implicit in the sequential thesis, only refers to truth values and operations on them.

In \cite{ferrarotti:tcs2016} an alternative behavioural theory of synchronous parallel algorithms (aka ``simplified parallel ASM thesis'') was developed. It was inspired by previous research on a behavioural theory for non-deterministic database transformations \cite{schewe:ac2010}. Largely following the careful motivation in \cite{blass:tocl2003} it was first conjectured in \cite{schewe:abz2012} that it should be sufficient to generalise bounded exploration witnesses to sets of multiset comprehension terms. The rationale behind this conjecture is that in a particular state the multiset comprehension terms give rise to multisets, and selecting one value out each of these multisets defines the proclets used by Blass and Gurevich. The formal proof of the simplified ASM thesis in \cite{ferrarotti:tcs2016} requires among others an investigation in finite model theory. At the same time another behavioural theory of parallel algorithms was developed in \cite{dershowitz:igpl2016}, which is independent from the simplified parallel ASM thesis, but refers also to previous work by Blass and Gurevich. A thorough comparison with the simplified parallel ASM thesis has not yet been conducted.

There have been many attempts to capture asynchronous parallelism as marked in theories of concurrency as well as distribution (see \cite{lynch:1996} for a collection of many distributed or concurrent algorithms). Gurevich's axiomatic definition of partially ordered runs \cite{gurevich:lipari1995} tries to reduce the problem to families of sequential algorithms, but the theory is too strict. As shown in \cite{boerger:ai2016} it is easy to find concurrent algorithms that satisfy sequential consistency \cite{lamport:tc1979}, where runs are not partially ordered. One problem is that the requirements for partially ordered runs always lead to linearisability. The lack of a convincing definition of asynchronous parallel algorithms was overcome by the work on concurrent algorithms in \cite{boerger:ai2016}, in which a concurrent algorithm is defined by a family of agents, each equipped with a sequential algorithm with shared locations. While each individual sequential algorithm in the family is defined by the postulates for sequential algorithms\footnote{A remark in \cite{boerger:ai2016} states that the restriction to sequential algorithms is not really needed. An extension to concurrent algorithms covering families of parallel algorithms is sketched in \cite{schewe:acsw2017}.}, the family as a whole is subject to a concurrency postulate requiring that a successor state of the global state of the concurrent algorithm results from simultaneously applying update sets of finitely many agents that have been built on some previous (not necessarily the latest) states. The theory shows that concurrent algorithms are captured by concurrent ASMs. As in concurrent algorithms, in particular in case of distribution, message passing between agents is more common than shared locations, it has further been shown in \cite{boerger:jucs2017} that message passing can be captured by regarding mailboxes as shared locations, which leads to communicating concurrent ASMs capturing concurrent algorithms with message passing.

\section{Conclusion}\label{sec:schluss}

In this article we investigated a behavioural theory for reflective sequential algorithms (RSAs) following our sketch in \cite{ferrarotti:psi2017}. Grounded in related work concerning behavioural theories for sequential algorithms \cite{gurevich:tocl2000}, (synchronous) parallel algorithms \cite{ferrarotti:tcs2016}, and concurrent algorithms \cite{boerger:ai2016} we developed a set of abstract postulates defining RSAs, extended ASMs to reflective sequential abstract state machines (rsASMs), and proved that any RSA as stipulated by the postulates can be step-by-step simulated by an rsASM. The key contributions are the axiomatic definition of RSAa and the proof that RSAs are captured by rsASMs.

With this behavioural theory we lay the foundations for rigorous development of reflective algorithms and thus adaptive systems. However, the theory in this article covers only reflective {\em sequential} algorithms, so in view of the behavioural theories for unbounded parallel and concurrent algorithms the next steps of the research are to extend these theories to capture also reflection. We envision a part II addressing reflective parallel algorithms and a part III on reflective concurrent algorithms. The latter one would then lay the foundations for the specification of distributed adaptive systems in general.

Furthermore, for rigorous development extensions to the refinement method for ASMs \cite{boerger:fac2003} and to the logic used for verification \cite{ferrarotti:igpl2017,ferrarotti:amai2018} will be necessary. These will also be addressed in follow-on research.

\bibliographystyle{abbrv}
\bibliography{rsa}

\end{document}